\newcommand{\noun}[1]{\textsc{#1}}
\theoremstyle{plain}
\newtheorem{thm}{\protect\theoremname}
\theoremstyle{plain}
\newtheorem{lem}[thm]{\protect\lemmaname}
\theoremstyle{plain}
\newtheorem{prop}[thm]{\protect\propositionname}
\theoremstyle{definition}
\newtheorem{defn}[thm]{\protect\definitionname}
\theoremstyle{remark}
\newtheorem{rem}[thm]{\protect\remarkname}
\theoremstyle{definition}
\newtheorem{example}[thm]{\protect\examplename}
\providecommand{\definitionname}{Definition}
\providecommand{\examplename}{Example}
\providecommand{\lemmaname}{Lemma}
\providecommand{\propositionname}{Proposition}
\providecommand{\remarkname}{Remark}
\providecommand{\theoremname}{Theorem}
\begin{document}
\title{Entanglement of a bipartite channel}
\author{Gilad Gour}
\email{gour@ucalgary.ca}

\affiliation{Department of Mathematics and Statistics, University of Calgary, AB,
Canada T2N 1N4}
\affiliation{Institute for Quantum Science and Technology, University of Calgary,
AB, Canada T2N 1N4}
\author{Carlo Maria Scandolo}
\email{carlomaria.scandolo@ucalgary.ca}

\affiliation{Department of Mathematics and Statistics, University of Calgary, AB,
Canada T2N 1N4}
\affiliation{Institute for Quantum Science and Technology, University of Calgary,
AB, Canada T2N 1N4}
\begin{abstract}
The most general quantum object that can be shared between two distant
parties is a bipartite channel, as it is the basic element to construct
all quantum circuits. In general, bipartite channels can produce entangled
states, and can be used to simulate quantum operations that are not
local. While much effort over the last two decades has been devoted
to the study of entanglement of bipartite states, very little is known
about the entanglement of bipartite channels. In this work, we rigorously
study the entanglement of bipartite channels as a resource theory
of quantum processes. We present an infinite and complete family of
measures of dynamical entanglement, which gives necessary and sufficient
conditions for convertibility under local operations and classical
communication. Then we focus on the dynamical resource theory where
free operations are positive partial transpose (PPT) superchannels,
but we do not assume that they are realized by PPT pre- and post-processing.
This leads to a greater mathematical simplicity that allows us to
express all resource protocols and the relevant resource measures
in terms of semi-definite programs. Along the way, we generalize the
negativity from states to channels, and introduce the max-logarithmic
negativity, which has an operational interpretation as the exact asymptotic
entanglement cost of a bipartite channel. Finally, we use the non-positive
partial transpose (NPT) resource theory to derive a no-go result:
it is impossible to distill entanglement out of bipartite PPT channels
under any sets of free superchannels that can be used in entanglement
theory. This allows us to generalize one of the long-standing open
problems in quantum information---the NPT bound entanglement problem---from
bipartite states to bipartite channels. It further leads us to the
discovery of bound entangled POVMs.
\end{abstract}
\maketitle

\section{Introduction}

Quantum entanglement \citep{Plenio-review,Review-entanglement} is
universally regarded as the most important aspect of quantum theory,
making it radically different from classical theory. Schrödinger himself
summarized this phenomenon as the fact that \citep{Schrodinger}
\begin{quotation}
``{[}\ldots{]} the best possible knowledge of a \emph{whole} does
not necessarily include the best possible knowledge of all its \emph{parts}.''
\end{quotation}
Indeed, entanglement is a necessary ingredient for the non-local phenomena
observed in quantum theory \citep{Bell,CHSH,Non-locality-review1,Non-locality-review2,LOSR-nonlocality}.
The development of quantum information theory has brought a new perspective
on quantum entanglement, seen as a resource in many protocols that
cannot be implemented in classical theory. Think, for instance, of
the paradigmatic examples of quantum teleportation \citep{Teleportation},
dense coding \citep{Dense-coding}, and quantum key distribution \citep{Ekert}.
The idea of entanglement concretely helping in information-theoretic
tasks can be made precise and rigorous using the framework of \emph{resource
theories} \citep{Quantum-resource-1,Quantum-resource-2,Resource-knowledge,Resource-currencies,Gour-single-shot,Regula2017,Multiresource,Gour-review,Adesso-resource,Single-shot-new,Kuroiwa2020generalquantum}.

Resource theories have been studied in great detail when the resources
involved are \emph{states} (also known as static resources) \citep{Gour-review}.
In this case, one wants to study the conversion between states. This
is the usual setting in which a rigorous theory of entanglement can
be put forward. The physical situation is when there are two separated
parties, and, because of their spatial separation, they are restricted
to performing local operations (LO), and exchanging Classical Communication
(CC) \citep{LOCC1,LOCC2,Lo-Popescu,Plenio-review,Review-entanglement}.
These free operations are called LOCC. In this setting, free states
are those that can be prepared from scratch using an LOCC protocol;
they are \emph{separable states}. Then one studies the conversion
between bipartite states when the two parties initially share a state,
which they are tasked to manipulate and transform into a target state
using LOCC channels. For pure entangled states, this conversion is
fairly easy to study \citep{Nielsen}, and for them the distillation
of maximal entangled states and the cost coincide. This is not the
case for entangled mixed states, for which the distillation can be
zero, yet the cost is strictly non-zero \citep{PHorodecki,Bound-entanglement}.
In other words, for some states, we need to spend maximally entangled
states to create them, but, once created, we cannot get back any maximal
entanglement. This phenomenon is called \emph{bound entanglement}.

Despite being the natural choice dictated by the physical setting
for entanglement, working with LOCC protocols is, in general, not
easy \citep{Chitambar2014}. For this reason, other choices of free
operations have been considered, which are structurally and mathematically
simpler to deal with. The first class is that of \emph{separable operations}
(SEP) \citep{SEP,Rains-SEP,PPT1}, which are the operations that send
separable states to separable states, even when tensored with the
identity channel. In resource-theoretic terminology they are \emph{completely
resource non-generating operations}, i.e.\ the largest set of free
operations (in the sense of inclusion) transforming free states into
free states, in a complete sense \citep{Gour-review}. LOCC channels
(and even their topological closure \citep{Chitambar2014}) have been
shown to be a \emph{strict} subset of separable operations \citep{SEP,DiVincenzo2003,Chitambar2009}.

We can also consider positive partial transpose (PPT) operations \citep{PPT1,PPT2}.
The definition of these operations is inspired by the Peres-Horodecki
criterion \citep{PPT-Peres,PPT-Horodecki} for the separability of
bipartite states, based on partial transpose: a state is separable
only if its partial transpose is still positive semi-definite. In
this resource theory, free states are states with positive semi-definite
partial transpose (\emph{PPT states}). They coincide with separable
states for bipartite systems of dimension $2\otimes2$ and $2\otimes3$,
but in general there are also non-separable PPT states \citep{PHorodecki}.
This is indeed the case for all known bound entangled states \citep{Bound-entanglement}.
In this non-positive partial transpose (NPT) resource theory, the
free operations are the channels that send PPT states to PPT states
even when tensored with the identity channel. They are called PPT
operations. Clearly both LOCC and separable operations are subsets
of PPT operations.

Despite not being so physically motivated, separable operations and
PPT operations are helpful for their greater mathematical simplicity,
and because they allow us to prove no-go results: if a state conversion
is \emph{not} possible under separable or PPT operations, then it
is \emph{not} possible under LOCC operations as well. Similarly, PPT
and separable operations can provide upper and lower bounds for conversions
with LOCC channels.

If one looks closely at the first examples where entanglement proved
to be a resource (e.g.\ quantum teleportation and dense coding),
one notices they involve the conversion of a state into a particular
channel, i.e.\ a static resource into a dynamical one \citep{Devetak-Winter,Resource-calculus}.
Therefore the need to go beyond conversion between static resources
is built into the very notion of entanglement as a resource. This
is supported by the fact that in physics everything, including a state,
can be viewed as a \emph{dynamical resource} \citep{Chiribella2008,Chiribella-purification,hardy2011}.
Therefore it is really \emph{necessary} to phrase entanglement theory
as a resource theory of \emph{quantum processes}. In these theories
the agent converts different dynamical resources by means of a restricted
set of\emph{ superchannels}. 

In light of this, in this article we expand the results originally
announced in Ref.~\citep{Dynamical-entanglement}, formulating a
rigorous treatment of the resource theory of entanglement as a resource
theory of processes (an independent work in this respect is Ref.~\citep{Wilde-entanglement}).

The generic resource will be a\emph{ bipartite channel} \citep{Shannon-bipartite,Bipartite}
rather than a bipartite state. A bipartite channel, represented in
Fig.~\ref{bipartitechannel}, is a channel with two inputs and two
outputs.
\begin{figure}
\begin{centering}
\includegraphics[width=1\columnwidth]{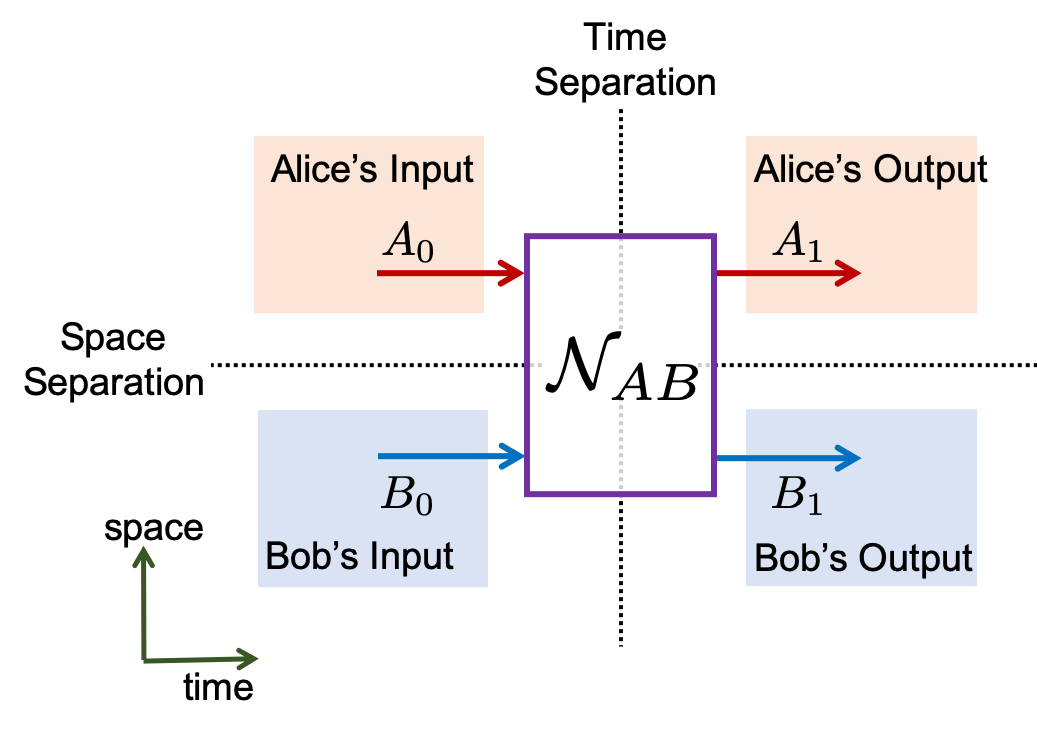}
\par\end{centering}
\caption{\label{bipartitechannel}The four regions of a bipartite channel.
Note the space separation between the two parties, Alice and Bob.
Unlike for bipartite states, we can also distinguish a temporal separation
between the input and the output of each party.}

\end{figure}
 We assume there is a spatial separation between the two inputs (and
also between the two outputs). This spatial separation is associated
with the presence of two space-like separated parties, Alice and Bob,
as for bipartite states. The novelty coming from the fact that we
are considering channels rather than states is that we also have a
time separation between the input side of the channel and its outputs.
This makes bipartite channels the most general resource for the study
of entanglement and, at the same time, the most versatile. Indeed,
if we trivialize (i.e.\ make 1-dimensional) the two inputs of a bipartite
channel, we recover the theory of entanglement for bipartite states.
On the other hand, if we consider classical outputs, we obtain the
``dual'' resource theory of entanglement for POVMs. We can also
consider other scenarios. For instance, if we trivialize Alice's output
and Bob's input, we get a 1-way channel from Alice to Bob, a situation
studied in Ref.~\citep{WW18}.

In this article, we apply the resource-theoretic constructions introduced
in Ref.~\citep{Gour-Scandolo-resource} to the resource theory of
entanglement for bipartite channels. In particular, we focus on PPT
and separable superchannels, for their greater mathematical simplicity,
in the same spirit as one considers PPT and separable channels to
study the entanglement of states. Our approach differs from Ref.~\citep{WW18}
in a twofold way. First, we study the most general resource: bipartite
channels, instead of just states and 1-way channels. This allows us
to generalize the notion of $\kappa$-entanglement \citep{WW18},
which we call \emph{max-logarithmic negativity}, in two distinct ways.
Second, we do not require PPT superchannels to have PPT pre- and post-processing
\citep{Leung}. This leads to a great simplification in the mathematical
treatment and the derivation of results, as all conditions on resource
conversion can be expressed in terms of semi-definite programs (SDPs).

We conclude the article by analyzing bound entanglement for bipartite
channels, showing that no entanglement can be distilled from PPT channels.
We also provide the example of a bound entangled POVM.

The article is organized as follows. After some background information
on superchannels and their Choi matrices presented in section~\ref{sec:Preliminaries},
the resource theory of entanglement for bipartite channels is introduced
in section~\ref{sec:entanglement}, where we define the basic resource-theoretic
protocols. In section~\ref{sec:PPT} we analyze the simplest resource
theory for entanglement from a mathematical point of view: the NPT
resource theory. We show that all resource conversion tasks can be
expressed in terms of SDPs, and, in particular, we provide an operational
interpretation for the max-logarithmic negativity. Separable superchannels
are introduced in section~\ref{sec:SEP}. We conclude the article
with a study of bound entanglement for bipartite channels in section~\ref{sec:Bound}.
Conclusions are drawn in section~\ref{sec:Conclusions-and-outlook}.

\section{Preliminaries\label{sec:Preliminaries}}

This section contains some basic notions that are extensively used
throughout this article. First we specify the notation we use, and
then we move to give a brief presentation of the main properties of
supermaps and superchannels. We conclude the section with an overview
of quantum combs.

\subsection{Notation}

Physical systems and their corresponding Hilbert spaces will be denoted
by $A$, $B$, $C$, etc, where the notation $AB$ means $A\otimes B$.
Dimensions will be denoted with vertical bars; so the dimension of
system $A$ will be denoted by $\left|A\right|$. The tilde symbol
will be reserved to indicate a replica of a system. For example, $\widetilde{A}$
denotes a replica of $A$, i.e.\ $\left|A\right|=\left|\widetilde{A}\right|$.
Density matrices acting on Hilbert spaces will be denoted by lowercase
Greek letters $\rho$, $\sigma$, etc, except for the maximally mixed
state (i.e.\ the uniform state), which will be denoted by $u_{A}:=\frac{1}{\left|A\right|}I_{A}$.

The set of all bounded operators acting on system $A$ is denoted
by $\mathfrak{B}\left(A\right)$, the set of all Hermitian matrices
acting on $A$ by $\mathrm{Herm}\left(A\right)$, and the set of all
density matrices acting on system $A$ by $\mathfrak{D}\left(A\right)$.
We use calligraphic letters $\mathcal{D}$, $\mathcal{E}$, $\mathcal{F}$,
etc.\ to denote quantum maps, reserving $\mathcal{T}$ to represent
the transpose map. The identity map on a system $A$ will be denoted
by $\mathsf{id}_{A}$. The set of all linear maps from $\mathfrak{B}\left(A\right)$
to $\mathfrak{B}\left(B\right)$ is denoted by $\mathfrak{L}\left(A\to B\right)$,
the set of all completely positive (CP) maps by $\mathrm{CP}\left(A\to B\right)$,
and the set of quantum channels, which are completely positive and
trace-preserving, by $\mathrm{CPTP}\left(A\to B\right)$. $\mathrm{Herm}\left(A\to B\right)$
will denote the real vector space of all Hermitian-preserving maps
in $\mathfrak{L}\left(A\to B\right)$. We will write $\mathcal{N}\geq0$
to mean that the map $\mathcal{N}\in\mathrm{Herm}\left(A\to B\right)$
is completely positive.

Unless otherwise specified, we will associate two subsystems $A_{0}$
and $A_{1}$ with every physical system $A$, referring, respectively,
to the input and output of the resource. Hence, any physical system
will be comprised of two subsystems $A=\left(A_{0},A_{1}\right)$,
even those representing a static resource, in which case we simply
have $\left|A_{0}\right|=1$. For simplicity, we will denote a channel
with a subscript $A$, e.g.\ $\mathcal{N}_{A}$, to mean that it
is an element of $\mathrm{CPTP}\left(A_{0}\to A_{1}\right)$. Similarly,
a bipartite channel in $\mathrm{CPTP}\left(A_{0}B_{0}\to A_{1}B_{1}\right)$
will be denoted by $\mathcal{N}_{AB}$. This notation makes the analogy
with bipartite states more transparent.

In this setting, when we consider $A=\left(A_{0},A_{1}\right)$, $B=\left(B_{0},B_{1}\right)$,
etc.\ comprised of input and output subsystems, the symbol $\mathfrak{L}\left(A\to B\right)$
refers to all linear maps from the vector space $\mathfrak{L}\left(A_{0}\to A_{1}\right)$
to the vector space $\mathfrak{L}\left(B_{0}\to B_{1}\right)$. Similarly,
$\mathrm{Herm}\left(A\to B\right)\subset\mathfrak{L}\left(A\to B\right)$
is a real vector space consisting of all the linear maps that take
elements in $\mathrm{Herm}\left(A_{0}\to A_{1}\right)$ to elements
in $\mathrm{Herm}\left(B_{0}\to B_{1}\right)$. In other terms, maps
in $\mathrm{Herm}\left(A\to B\right)$ take Hermitian-preserving maps
to Hermitian-preserving maps. Linear maps in $\mathfrak{L}\left(A\to B\right)$
and $\mathrm{Herm}\left(A\to B\right)$ will be called \emph{supermaps},
and will be denoted by capital Greek letters $\Theta$, $\Upsilon$,
$\Omega$, etc. In the following, to avoid confusion with the notation
for linear and Hermitian-preserving maps, whenever we mean linear
or Hermitian-preserving maps, the systems involved will have a subscript,
to make it clear that we are not considering pairs of systems. In
this setting, the identity supermap in $\mathfrak{L}\left(A\to A\right)$
will be denoted by $\mathbbm{1}_{A}$. 

We will use square brackets to denote the action of a supermap $\Theta_{A\to B}\in\mathfrak{L}\left(A\to B\right)$
on a linear map $\mathcal{N}_{A}\in\mathfrak{L}\left(A_{0}\to A_{1}\right)$.
For example, $\Theta_{A\to B}\left[\mathcal{N}_{A}\right]$ is a linear
map in $\mathfrak{L}\left(B_{0}\to B_{1}\right)$ obtained from the
action of the supermap $\Theta$ on the map $\mathcal{N}$. Moreover,
for a simpler notation, the identity supermap will not often appear
explicitly in equations; e.g.\ $\Theta_{A\to B}\left[\mathcal{N}_{RA}\right]$
will mean $\left(\mathbbm{1}_{R}\otimes\Theta_{A\to B}\right)\left[\mathcal{N}_{RA}\right]$.
Instead, the action of linear map (e.g.\ quantum channel) $\mathcal{N}_{A}\in\mathfrak{L}\left(A_{0}\to A_{1}\right)$
on a matrix $\rho\in\mathfrak{B}\left(A_{0}\right)$ will be written
with parentheses, i.e.\ $\mathcal{N}_{A}\left(\rho_{A_{0}}\right)\in\mathfrak{B}\left(A_{1}\right)$.

Finally, we adopt the following convention concerning partial traces:
when a system is missing, we take the partial trace over it. This
applies to matrices as well as to maps. For example, if $M_{AB}$
is a matrix on $A_{0}A_{1}B_{0}B_{1}$, $M_{AB_{0}}$ denotes the
partial trace on the missing system $B_{1}$: $M_{AB_{0}}:=\mathrm{Tr}_{B_{1}}\left[M_{AB}\right]$.

\subsection{Supermaps\label{subsec:Supermaps}}

In Refs.~\citep{Circuit-architecture,Hierarchy-combs,Gour2018} it
was shown that it is possible to construct the Choi matrix $\mathbf{J}_{AB}^{\Theta}$
of a quantum supermap $\Theta_{A\rightarrow B}$. In particular, we
can associate two linear maps with $\Theta_{A\rightarrow B}$ \citep{Gour2018}.
The first is the map $\mathcal{P}_{AB}^{\Theta}$, defined as
\[
\mathcal{P}_{AB}^{\Theta}:=\Theta_{\widetilde{A}\to B}\left[\Phi_{A\widetilde{A}}^{+}\right],
\]
where the map $\Phi_{A\widetilde{A}}^{+}$ acts on $\rho\in\mathfrak{B}\left(A_{0}\widetilde{A}_{0}\right)$
as 
\begin{equation}
\Phi_{A\widetilde{A}}^{+}\left(\rho_{A_{0}\widetilde{A}_{0}}\right)=\mathrm{Tr}\left[\rho_{A_{0}\widetilde{A}_{0}}\phi_{A_{0}\widetilde{A}_{0}}^{+}\right]\phi_{A_{1}\widetilde{A}_{1}}^{+},\label{eq:maxent}
\end{equation}
with $\phi_{A_{0}\widetilde{A}_{0}}^{+}:=\left|\phi^{+}\right\rangle \left\langle \phi^{+}\right|_{A_{0}\widetilde{A}_{0}}$
and $\left|\phi^{+}\right\rangle _{A_{0}\widetilde{A}_{0}}=\sum_{j}\left|jj\right\rangle _{A_{0}\widetilde{A}_{0}}$
is the unnormalized maximally entangled state (expressed in the Choi
basis). In other terms, the CP map $\Phi_{A\widetilde{A}}^{+}$ can
be viewed as a generalization of $\phi_{A_{0}\widetilde{A}_{0}}^{+}$.
With this construction, $\mathbf{J}_{AB}^{\Theta}$ can be defined
as the Choi matrix of the map $\mathcal{P}_{AB}^{\Theta}$.

The second representation of a supermap is in terms of a linear map
$\mathcal{Q}^{\Theta}:\mathfrak{B}\left(A_{1}B_{0}\right)\to\mathfrak{B}\left(A_{0}B_{1}\right)$,
which is defined as the map satisfying 
\[
\mathbf{J}_{AB}^{\Theta}:=\mathcal{Q}_{\widetilde{A}_{1}\widetilde{B}_{0}\to A_{0}B_{1}}^{\Theta}\left(\phi_{A_{1}\widetilde{A}_{1}}^{+}\otimes\phi_{B_{0}\widetilde{B}_{0}}^{+}\right),
\]
or as $\mathcal{Q}^{\Theta}:=\mathbbm{1}_{A}\otimes\Theta_{A\rightarrow B}\left[\mathcal{S}_{A}\right]$,
where $\mathcal{S}_{A}$ is the swap from $A_{1}$ to $A_{0}$. With
this second construction, it is apparent that the Choi matrix $\mathbf{J}_{AB}^{\Theta}$
of the supermap can be defined also as the Choi matrix of the the
map $\mathcal{Q}^{\Theta}$. On top of being useful for the definition
of the Choi matrix of a supermap, these two representations of a supermap,
$\mathcal{P}^{\Theta}$ and $\mathcal{Q}^{\Theta}$, will play a useful
role in the study of the entanglement of bipartite channels.

A \emph{superchannel} is a supermap $\Theta_{A\to B}\in\mathfrak{L}\left(A\to B\right)$
that takes quantum channels to quantum channels even when tensored
with the identity supermap \citep{Chiribella2008,Switch,Hierarchy-combs,Perinotti1,Perinotti2,Gour2018,Supermeasurements}.
More precisely, $\Theta_{A\to B}\in\mathfrak{L}\left(A\to B\right)$
is called a superchannel if it satisfies the following two conditions: 
\begin{enumerate}
\item For any trace-preserving map $\mathcal{N}_{A}\in\mathfrak{L}\left(A_{0}\to A_{1}\right)$,
the map $\Theta_{A\to B}\left[\mathcal{N}_{A}\right]$ is a trace-preserving
map in $\mathfrak{L}\left(B_{0}\to B_{1}\right)$.
\item For any system $R=\left(R_{0},R_{1}\right)$ and any bipartite CP
map $\mathcal{N}_{RA}\in\mathrm{CP}\left(R_{0}A_{0}\to R_{1}A_{1}\right)$,
the map $\Theta_{A\to B}\left[\mathcal{N}_{RA}\right]$ is also CP.
\end{enumerate}
We will also say that a supermap $\Theta_{A\to B}\in\mathfrak{L}\left(A\to B\right)$,
is \emph{completely positive} if it satisfies the second condition
above \citep{Chiribella2008,Gour2018}. Therefore, a superchannel
is a CP supermap that takes trace-preserving maps to trace-preserving
maps \citep{Gour2018,Supermeasurements}. We will denote the set of
superchannels from $A$ to $B$ by $\mathfrak{S}\left(A\rightarrow B\right)$.
Note that $\mathfrak{S}\left(A\rightarrow B\right)\subset\mathfrak{L}\left(A\rightarrow B\right)$.
In particular, for the Choi matrix of a superchannel, we have $\mathbf{J}_{A_{1}B_{0}}^{\Theta}=I_{A_{1}B_{0}}$
and $\mathbf{J}_{AB_{0}}^{\Theta}=\mathbf{J}_{A_{0}B_{0}}^{\Theta}\otimes u_{A_{1}}$.

The definitions seen so far are abstract; nevertheless, superchannels
are \emph{physical} objects that can be realized in terms of pre-
and post-processing that are both quantum channels \citep{Chiribella2008,Gour2018}.
Indeed, if $\Theta\in\mathfrak{S}\left(A\to B\right)$, then there
exist a Hilbert space $E$, with $\left|E\right|\leq\left|A_{0}B_{0}\right|$,
and two CPTP maps $\mathcal{F}\in\mathrm{CPTP}\left(B_{0}\to EA_{0}\right)$
and $\mathcal{E}\in\mathrm{CPTP}\left(EA_{1}\to B_{1}\right)$ such
that, for all $\mathcal{N}_{A}\in\mathfrak{L}\left(A_{0}\to A_{1}\right)$,
\[
\Theta\left[\mathcal{N}_{A}\right]=\mathcal{E}_{EA_{1}\to B_{1}}\circ\mathcal{N}_{A_{0}\to A_{1}}\circ\mathcal{F}_{B_{0}\to EA_{0}}.
\]

\subsection{Quantum combs}

Quantum combs are multipartite channels with a well-defined causal
structure \citep{Gutoski,Circuit-architecture,Hierarchy-combs,Gutoski2,Chiribella2016,Gutoski3}.
They generalize the notion of superchannels to objects that take several
channels as input, and output a channel (see Refs.~\citep{Circuit-architecture,Hierarchy-combs}
for more details, and a further generalization where the input and
the output of combs are combs themselves). We will denote a comb with
$n$ channel slots as input by $\mathscr{C}_{n}$, and its action
on $n$ channels by $\mathscr{C}_{n}\left[\mathcal{N}_{1},\dots,\mathcal{N}_{n}\right]$.
The causal relation between the different slots ensures that each
such comb can be realized with $n+1$ channels $\mathcal{E}_{1},\dots,\mathcal{E}_{n+1}$.
We therefore associate a quantum channel 
\[
\mathcal{Q}^{\mathscr{C}_{n}}:=\mathcal{E}_{n+1}\circ\mathcal{E}_{n}\circ\dots\circ\mathcal{E}_{1}
\]
with every comb. Note that the quantum channel $\mathcal{Q}^{\mathscr{C}_{n}}$
has a causal structure in the sense that the input to $\mathcal{E}_{k}$
cannot affect the output of $\mathcal{E}_{k-1}$ for any $k=2,\dots,n+1$.

The Choi matrix of the comb is defined as the Choi matrix of $\mathcal{Q}^{\mathscr{C}_{n}}$.
Owing to the causal structure of $\mathcal{Q}^{\mathscr{C}_{n}}$,
the marginals of the Choi matrix of $\mathscr{C}_{n}$ satisfy similar
relations to the marginals of the Choi matrix of a superchannel (see
Refs.~\citep{Circuit-architecture,Hierarchy-combs} for more details).

\section{Dynamical entanglement theory\label{sec:entanglement}}

Recall that with one ebit, thanks to quantum teleportation \citep{Teleportation},
we can simulate a qubit channel from Alice to Bob using LOCC \citep{LOCC1,LOCC2,Lo-Popescu},
and vice versa \citep{Devetak-Winter,Resource-calculus}. Therefore
one ebit (a \emph{static} resource) is equivalent to a dynamical one:
a qubit channel. Considering \emph{bipartite channels} \citep{Bipartite}
in $\mathrm{CPTP}\left(A_{0}B_{0}\rightarrow A_{1}B_{1}\right)$ (see
Fig.~\ref{bipartitechannel}), we can understand the qubit identity
channel from $A_{0}$ to $B_{1}$ as the maximal resource under LOCC
as long as $\left|A_{1}\right|=\left|B_{0}\right|=1$. It is maximal
because, by using it, every other channel can be implemented between
$A_{0}$ and $B_{1}$.

Now let us generalize this situation by analyzing what the maximal
resource is when all systems are non-trivial, and specifically $\left|A_{0}\right|=\left|A_{1}\right|=\left|B_{0}\right|=\left|B_{1}\right|=d$.
In Fig.~\ref{swap} we show that the \noun{swap} operation is a maximal
resource.
\begin{figure}
\begin{centering}
\includegraphics[width=1\columnwidth]{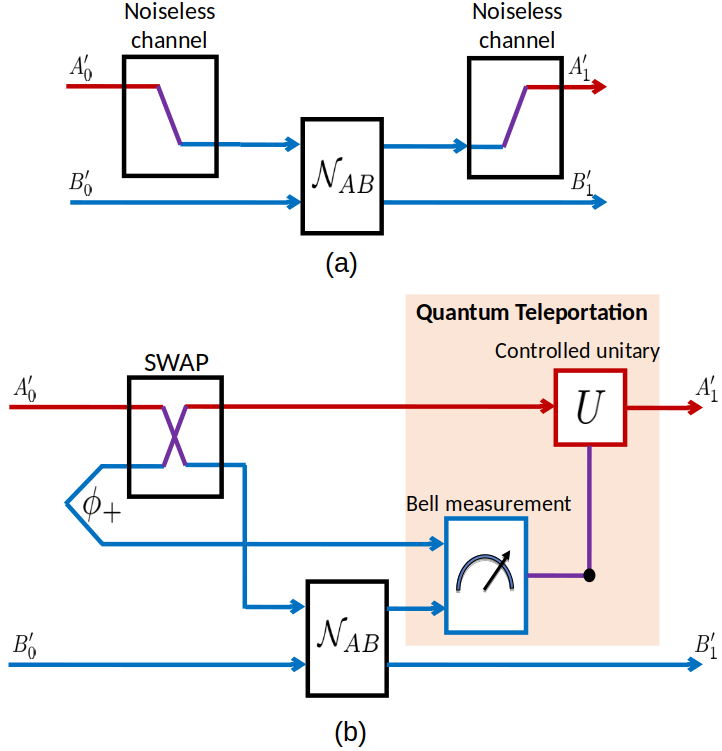}
\par\end{centering}
\caption{\label{swap}(a) Simulation of an arbitrary channel $\mathcal{N}_{A'B'}$
with two noiseless channels. (b) Simulation of an arbitrary channel
$\mathcal{N}_{A'B'}$ with the \noun{swap} resource and 1-way LOCC.}

\end{figure}
 Note that the \noun{swap} operator can produce 2 e-dits, and can
also be simulated by 2 e-dits. Therefore, the entanglement of the
\noun{swap} operator is 2 e-dits. Note also that the \noun{swap} operator
is the maximal resource even if the set of free operations allows
only 1-way classical communication. On the other hand, in the quantum
resource theory in which free operations consists of only local operations
and shared entanglement (LOSE) \citep{LOSE}, but no classical communication,
then two noiseless channels, one from $A_{0}\to B_{1}$ and one from
$B_{0}\to A_{1}$, are more resourceful than the \noun{swap} operator.
This is because the \noun{swap} operator is restricted to act \emph{simultaneously}
on both input systems. This example demonstrates that in general,
two channels $\mathcal{N}_{A_{0}\to B_{1}}$ and $\mathcal{M}_{B_{0}\to A_{1}}$
can be more resourceful than their tensor product $\mathcal{N}_{A_{0}\to B_{1}}\otimes\mathcal{M}_{B_{0}\to A_{1}}$
since they can be used \emph{at different times}.

The fact that a tuple of $n$ channels can be a greater resource than
their tensor product was also discussed in Ref.~\citep{Resource-channels-2}
(cf.\ also Ref.~\citep{Gour-Scandolo-resource}). In the following,
however, we will focus mainly on a \emph{single} resource at a time,
in this case a single bipartite channel.

\subsection{Simulation of channels: cost and distillation}

Following Refs.~\citep{Resource-theories,Resource-channels-1,Resource-channels-2,Gour-Scandolo-resource},
in Fig.~\ref{LOCC} we illustrate the most general LOCC superchannel
that can act on a bipartite channel.
\begin{figure}
\begin{centering}
\includegraphics[width=1\columnwidth]{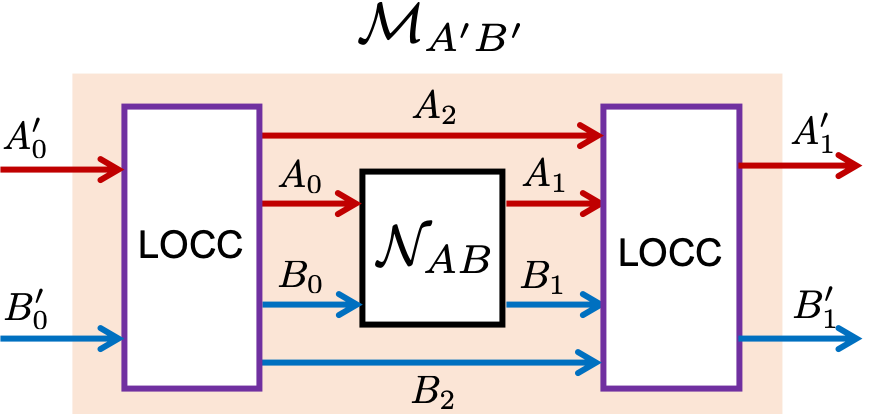}
\par\end{centering}
\caption{\label{LOCC}The action of an LOCC superchannel on one copy of a bipartite
channel $\mathcal{N}_{AB}$. The resulting channel is $\mathcal{M}_{A'B'}$.
Note that this superchannel uses the dynamical resource $\mathcal{N}_{AB}$
to simulate another channel $\mathcal{M}_{A'B'}$.}

\end{figure}
 The superchannel consists of a pre-processing and a post-processing
channel that are both LOCC. Moreover, the side channel, corresponding
to the memory in the realization of a superchannel, consists of two
parts: $A_{2}$ on Alice's side and $B_{2}$ on Bob's side. We denote
the set of such superchannels by $\mathrm{LOCC}\left(AB\to A'B'\right)$. 

The discussion at the beginning of section~\ref{sec:entanglement}
shows that ebits remain the units to quantify the entanglement of
a bipartite channel. Indeed, two ebits can be used to simulate any
bipartite channel in which the two input and two output systems are
all qubits. Therefore, even in the resource theory of entanglement
of bipartite channels one can define operational tasks in a very similar
fashion to the state domain. For example, in Figs.~\ref{nLOCC}(a)
and \ref{nLOCC}(b) we illustrate parallel \citep{Berta-cost} and
adaptive strategies \citep{Pirandola-LOCC,Kaur2017,Wilde-cost,Wilde-entanglement,Pirandola-network}
to distill static entanglement out of a dynamical resource.
\begin{figure}
\begin{centering}
\includegraphics[width=1\columnwidth]{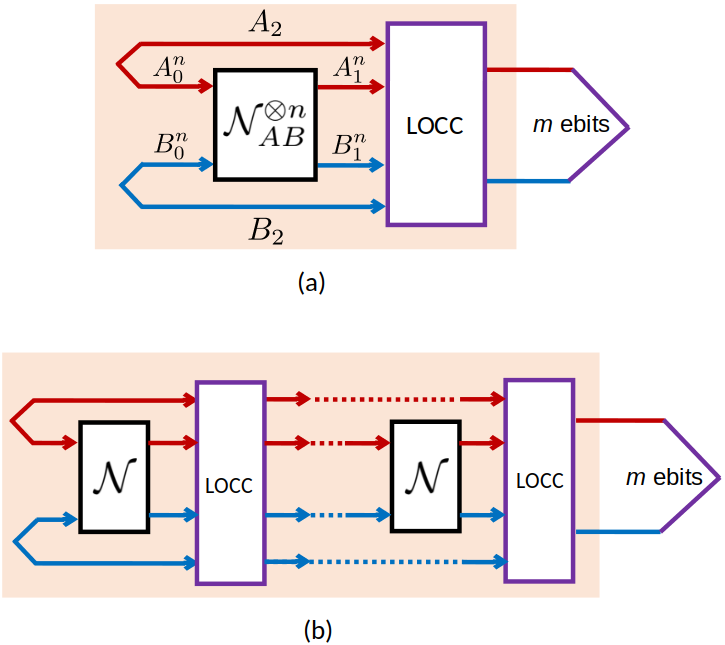}
\par\end{centering}
\caption{\label{nLOCC}The action of an LOCC superchannel on $n$ copies of
the bipartite channel $\mathcal{N}_{AB}$. (a) Parallel strategy for
the distillation of $m$ ebits out of $n$ copies of $\mathcal{N}_{AB}$.
(b) Adaptive strategy for the distillation of $m$ ebits out of $n$
subsequent uses of $\mathcal{N}_{AB}$.}

\end{figure}
 Since the parallel scheme is a special instance of the adaptive strategy,
the distillable entanglement cannot be smaller when using the adaptive
scheme. However, in section~\ref{sec:Bound} we will see that there
are bipartite entangled channels from which \emph{no} distillation
is possible, no matter what strategy is applied. This generalizes
the notion of bound entanglement \citep{Bound-entanglement} to bipartite
channels.

Similar to distillation, also the entanglement cost of a bipartite
channel can be divided into two types: parallel and adaptive. In the
parallel scheme, the goal is to simulate $\mathcal{N}_{AB}^{\otimes n}$,
i.e.\ $n$ copies of $\mathcal{N}_{AB}$ all acting simultaneously
(see Fig.~\ref{cost}a). On the other hand, the goal of the adaptive
scheme is to simulate $n$ copies of $\mathcal{N}_{AB}$ in a time
sequential order (see Fig.~\ref{cost}b).
\begin{figure}
\begin{centering}
\includegraphics[width=1\columnwidth]{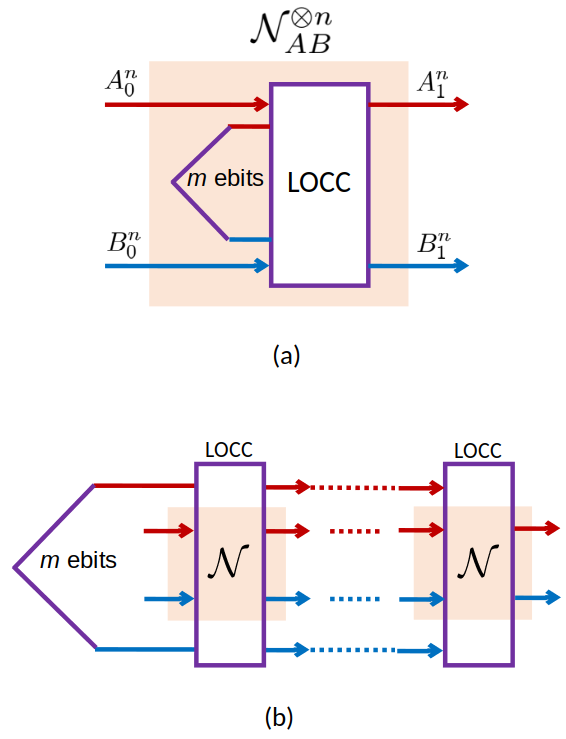}
\par\end{centering}
\caption{\label{cost}The cost of simulating a bipartite channel. (a) Parallel
strategy: consumption of $m$ ebits to simulate $\mathcal{N}_{AB}^{\otimes n}$.
(b) Adaptive strategy: consumption $m$ ebits to simulate $n$ subsequent
uses of $\mathcal{N}_{AB}$.}

\end{figure}
 Both schemes use ebits to simulate the channels. For the same reason
as for the distillation case, note that the cost of simulating $n$
sequentially ordered channels cannot be greater than the cost in the
parallel scheme. Owing to the complexity of the adaptive scheme, in
this paper we will focus mostly on the parallel one.

Now we are ready to give the formal definitions of entanglement costs
and distillable entanglement of bipartite channels. First of all,
note that in entanglement theory, the conversion distance for any
two channels $\mathcal{N}_{AB}$ and $\mathcal{M}_{A'B'}$, introduced
in Ref.~\citep{Gour-Scandolo-resource} and inspired by Ref.~\citep{Tomamichel},
is given by
\begin{align*}
 & d_{\mathrm{LOCC}}\left(\mathcal{N}_{AB}\to\mathcal{M}_{A'B'}\right)\\
 & =\frac{1}{2}\inf_{\Theta\in\mathrm{LOCC}\left(AB\to A'B'\right)}\left\Vert \Theta_{AB\to A'B'}\left[\mathcal{N}_{AB}\right]-\mathcal{M}_{A'B'}\right\Vert _{\diamond},
\end{align*}
where the optimization is over the set of LOCC superchannels. Typically,
the computation of this quantity is NP-hard. To see why, consider
the special case in which $\mathcal{N}_{AB}$ is a bipartite separable
\emph{state} (i.e.\ $\left|A_{0}\right|=\left|B_{0}\right|=1$),
and $\mathcal{M}_{AB}$ is some (possibly entangled) bipartite state
as well. In this case, the computation of the conversion distance
would determine if the bipartite state $\mathcal{M}_{AB}$ is entangled
or not, but this is known to be NP-hard \citep{NP-hard1,NP-hard2}.

Furthermore, we know that if $\Theta\in\mathrm{LOCC}\left(AB\to A'B'\right)$
then the bipartite channel $\mathcal{Q}_{AB\to A'B'}^{\Theta}$ is
also LOCC, while the condition that $\mathcal{Q}_{AB\to A'B'}^{\Theta}$
is LOCC is most likely insufficient to ensure that $\Theta\in\mathrm{LOCC}\left(AB\to A'B'\right)$.
This adds another layer of complexity to the problem of computing
$d_{\mathrm{LOCC}}$. In section~\ref{sec:PPT} we will see that
this additional complexity persists even when considering simpler
sets of operations, like PPT channels \citep{PPT1,PPT2}.

Since in entanglement theory there exists a unique (up to local unitaries)
maximal static resource, the single-shot entanglement cost and entanglement
distillation with error $\varepsilon\geq0$ are given respectively
by
\begin{align*}
 & \mathrm{COST}_{\mathrm{LOCC},\varepsilon}^{\left(1\right)}\left(\mathcal{N}_{AB}\right)\\
 & :=\log_{2}\min_{m\in\mathbb{Z}_{+}}\left\{ m:d_{\mathrm{LOCC}}\left(\phi_{m}^{+}\to\mathcal{N}_{AB}\right)\leq\varepsilon\right\} 
\end{align*}
and
\begin{align*}
 & \mathrm{DISTILL}_{\mathrm{LOCC},\varepsilon}^{\left(1\right)}\left(\mathcal{N}_{AB}\right)\\
 & :=\log_{2}\max_{m\in\mathbb{Z}_{+}}\left\{ m:d_{\mathrm{LOCC}}\left(\mathcal{N}_{AB}\to\phi_{m}^{+}\right)\leq\varepsilon\right\} ,
\end{align*}
where $\phi_{m}^{+}$ is a (normalized) maximally entangled state
with Schmidt rank $m$. Here the optimizations are over this Schmidt
rank $m$. Then the entanglement cost and the distillable entanglement
of a dynamical resource in the asymptotic regime are defined respectively
as
\[
\mathrm{COST}_{\mathrm{LOCC}}\left(\mathcal{N}_{AB}\right):=\lim_{\varepsilon\to0^{+}}\liminf_{n}\frac{1}{n}\mathrm{COST}_{\mathrm{LOCC},\varepsilon}^{\left(1\right)}\left(\mathcal{N}_{AB}^{\otimes n}\right)
\]
and
\begin{align*}
 & \mathrm{DISTILL}_{\mathrm{LOCC}}\left(\mathcal{N}_{AB}\right)\\
 & :=\lim_{\varepsilon\to0^{+}}\limsup_{n}\frac{1}{n}\mathrm{DISTILL}_{\mathrm{LOCC},\varepsilon}^{\left(1\right)}\left(\mathcal{N}_{AB}^{\otimes n}\right).
\end{align*}
These definitions assume the parallel scheme. In the adaptive scheme,
the entanglement cost and the distillable entanglement are defined
accordingly, as per Ref.~\citep{Gour-Scandolo-resource}.

\subsection{Measures of dynamical entanglement}

In this section we discuss a few measures that quantify the entanglement
of a bipartite channel. We also examine the form that the complete
family of resource measures introduced in Ref.~\citep{Gour-Scandolo-resource}
takes in entanglement theory.

A function $E:\mathrm{CPTP}\left(A_{0}B_{0}\to A_{1}B_{1}\right)\to\mathbb{R}$
is called a \emph{measure of dynamical entanglement} if it does not
increase under LOCC superchannels. It is called \emph{dynamical entanglement
monotone} if it is convex, and does not increase on average under
LOCC superinstruments \citep{Supermeasurements}. Some measures of
dynamical resources are discussed in Refs.~\citep{Pirandola-LOCC,Das-PhD,Resource-channels-1,Resource-channels-2,Gour-Winter,Fang-Fawzi,Wilde-entanglement,Dynamical-entanglement,Gour2018a,Gour-Scandolo-resource,Gour-divergences,Jencova,Gour-Sarah}.
Specifically, for bipartite entanglement the \emph{relative entropy
of dynamical entanglement} can be defined as 
\[
E_{\text{rel}}\left(\mathcal{N}_{AB}\right)=\inf_{\mathcal{M}\in\mathrm{LOCC}}D\left(\mathcal{N}_{AB}\middle\|\mathcal{M}_{AB}\right).
\]
Note that we are using the infimum rather than the minimum because
the set of LOCC channels is not topologically closed \citep{Chitambar2014}.

Moreover, any measure of static entanglement $E$ that is monotonic
under separable channels (in particular, under LOCC) can be extended
to bipartite channels in two different ways \citep{Resource-channels-1,Resource-channels-2,Gour-Winter,Gour-Scandolo-resource}.
In the first, we consider the amortized extension (cf.\ also Refs.~\citep{Das-PhD,Das-secret})
\begin{widetext}
\[
E^{\left(1\right)}\left(\mathcal{N}_{AB}\right):=\sup_{\sigma\in\mathfrak{D}\left(A_{0}'B_{0}'A_{0}B_{0}\right)}\left\{ E\left(\mathcal{N}_{A_{0}B_{0}\to A_{1}B_{1}}\left(\sigma_{A_{0}'B_{0}'A_{0}B_{0}}\right)\right)-E\left(\sigma_{A_{0}'B_{0}'A_{0}B_{0}}\right)\right\} ,
\]
\end{widetext}

\noindent where $A_{0}'$ and $B_{0}'$ are additional reference systems
in Alice's and Bob's sides, respectively, and the optimization is
over all density matrices on the system $A_{0}'B_{0}'A_{0}B_{0}$.
The other extension is given by
\begin{align*}
 & E^{\left(2\right)}\left(\mathcal{N}_{AB}\right)\\
 & :=\sup_{\sigma\in\mathrm{SEP}\left(A_{0}'A_{0}:B_{0}'B_{0}\right)}E\left(\mathcal{N}_{A_{0}B_{0}\to A_{1}B_{1}}\left(\sigma_{A_{0}'B_{0}'A_{0}B_{0}}\right)\right),
\end{align*}
where $\mathrm{SEP}\left(A_{0}'A_{0}:B_{0}'B_{0}\right)$ denotes
the set of separable states between Alice and Bob. Both of the above
extensions of $E$ can be proved to be non-increasing under separable
superchannels \citep{Gour-Winter}.

Now we introduce the complete family of dynamical entanglement measures,
following our construction in Ref.~\citep{Gour-Scandolo-resource}.
For any (fixed) bipartite channel $\mathcal{P}\in\mathrm{CPTP}\left(A_{0}'B_{0}'\to A_{1}'B_{1}'\right)$,
define (see Ref.~\citep{Gour-Scandolo-resource})
\[
E_{\mathcal{P}}\left(\mathcal{N}_{AB}\right):=\sup_{\Theta\in\mathrm{LOCC}\left(AB\to A'B'\right)}\mathrm{Tr}\left[J_{A'B'}^{\mathcal{P}}J_{A'B'}^{\Theta\left[\mathcal{N}\right]}\right],
\]
where $\mathcal{N}_{A}\in\mathrm{CPTP}\left(A_{0}B_{0}\to A_{1}B_{1}\right)$,
and $J$ is the Choi matrix of the channel in its superscript. Note
again that we are using the supremum instead of the maximum because
the set of LOCC channels is not closed. This function may not vanish
on LOCC channels; if we want so, we need to subtract $\sup_{\mathcal{M}\in\mathrm{LOCC}\left(A'B'\right)}\mathrm{Tr}\left[J_{A'B'}^{\mathcal{P}}J_{A'B'}^{\mathcal{M}}\right]$.
As explained in Ref.~\citep{Gour-Scandolo-resource}, this defines
a new non-negative measure of dynamical entanglement, which vanishes
on LOCC channels. Furthermore, the set of functions $\left\lbrace E_{\mathcal{P}}\right\rbrace $
is complete, in the sense that a bipartite channel $\mathcal{N}_{AB}$
can be converted within the topological closure of LOCC superchannels
into another bipartite channel $\mathcal{E}_{A'B'}$ if and only if
\begin{equation}
E_{\mathcal{P}}\left(\mathcal{N}_{AB}\right)\geq E_{\mathcal{P}}\left(\mathcal{E}_{A'B'}\right)\label{eq:formmonotone}
\end{equation}
for every $\mathcal{P}\in\mathrm{CPTP}\left(A_{0}'B_{0}'\to A_{1}'B_{1}'\right)$.

A natural question to ask is whether it is possible to find another
family of measures of dynamical entanglement that is finite, but at
the same time complete. However, in Ref.~\citep{Gour-infinite} it
was proved that any such complete family of entanglement measures
\emph{must} be infinite. Nevertheless, our family $\left\{ E_{\mathcal{P}}\right\} $
can be made \emph{countable} since we can remove from it all the channels
$\mathcal{P}$ whose Choi matrix includes coefficients that are irrational.
This can be done because, by construction, each function $E_{\mathcal{P}}$
is continuous in $\mathcal{P}$. Since the set of all channels $\mathcal{P}$
whose Choi matrices involve only rational coefficients is dense in
the set of all Choi matrices, by continuity it follows that, if Eq.~\eqref{eq:formmonotone}
holds for all such rational $\mathcal{P}$s, it holds also for all
$\mathcal{P}\in\mathrm{CPTP}\left(A_{0}'B_{0}'\to A_{1}'B_{1}'\right)$.
We conclude that our family $\left\{ E_{\mathcal{P}}\right\} $ is
optimal, in the sense that there is no other complete family of measures
of dynamical entanglement that characterizes the LOCC entanglement
of a bipartite channel more efficiently.

Despite the various interesting properties of the measures of dynamical
entanglement discussed in this section, they are all extremely hard
to compute due to the complexity of LOCC channels and superchannels.
We leave the discussion of more computationally manageable measures
to section~\ref{subsec:NPT-entanglement-measures}.

\subsection{Entanglement of bipartite POVMs\label{subsec:Entanglement-POVM}}

We end this section on the general properties of the resource theory
of dynamical entanglement with a short discussion on entanglement
of bipartite POVMs. A bipartite channel $\mathcal{N}\in\mathrm{CPTP}\left(A_{0}B_{0}\to A_{1}B_{1}\right)$
for which the output system $A_{1}B_{1}$ is classical can be viewed
as a POVM. In this case, the channel can be expressed as
\[
\mathcal{N}_{AB}\left(\rho_{A_{0}B_{0}}\right)=\sum_{x,y}\mathrm{Tr}\left[\rho_{A_{0}B_{0}}E_{A_{0}B_{0}}^{xy}\right]\left|xy\right\rangle \left\langle xy\right|_{A_{1}B_{1}},
\]
where the set of matrices $\left\{ E_{A_{0}B_{0}}^{xy}\right\} _{x,y}$
forms a POVM, and $\left\{ \left|xy\right\rangle \right\} _{x,y}$
is an orthonormal basis of $A_{1}B_{1}$. Such channels are fully
characterized by the condition $\mathcal{D}_{A_{1}B_{1}}\circ\mathcal{N}_{AB}=\mathcal{N}_{AB}$,
where $\mathcal{D}_{A_{1}B_{1}}$ is the completely dephasing channel
on system $A_{1}B_{1}$ (with respect to the fixed classical basis).
Note that $\mathcal{D}_{A_{1}B_{1}}\in\mathrm{LOCC}\left(A_{1}B_{1}\to A_{1}B_{1}\right)$.
\begin{lem}
Let $\mathcal{N}\in\mathrm{CPTP}\left(A_{0}B_{0}\to A_{1}B_{1}\right)$
be a bipartite POVM. Then 
\begin{equation}
E_{\mathrm{rel}}\left(\mathcal{N}_{AB}\right)=\inf_{\substack{\mathcal{M}\in\mathrm{LOCC}\\
\mathcal{D}_{A_{1}B_{1}}\circ\mathcal{M}_{AB}=\mathcal{M}_{AB}
}
}D\left(\mathcal{N}_{AB}\middle\|\mathcal{M}_{AB}\right).\label{aa}
\end{equation}
\end{lem}

\begin{proof}
Clearly, by definition $E_{\mathrm{rel}}\left(\mathcal{N}_{AB}\right)$
is less than or equal to the right-hand side of Eq.~\eqref{aa}.
Let us prove the converse inequality. We have
\begin{align*}
E_{\mathrm{rel}}\left(\mathcal{N}_{AB}\right) & =\inf_{\mathcal{M}\in\mathrm{LOCC}}D\left(\mathcal{N}_{AB}\middle\|\mathcal{M}_{AB}\right)\\
 & \geq\inf_{\mathcal{M}\in\mathrm{LOCC}}D\left(\mathcal{D}_{A_{1}B_{1}}\circ\mathcal{N}_{AB}\middle\|\mathcal{D}_{A_{1}B_{1}}\circ\mathcal{M}_{AB}\right),
\end{align*}
where the inequality follows from the generalized data-processing
inequality \citep{Gour-Winter}. Now recall that, being a POVM, $\mathcal{D}_{A_{1}B_{1}}\circ\mathcal{N}_{AB}=\mathcal{N}_{AB}$.
Therefore $E\left(\mathcal{N}_{AB}\right)\geq\inf_{\mathcal{M}\in\mathrm{LOCC}}D\left(\mathcal{N}_{AB}\middle\|\mathcal{D}_{A_{1}B_{1}}\circ\mathcal{M}_{AB}\right)$.
Hence we conclude that 
\[
E_{\mathrm{rel}}\left(\mathcal{N}_{AB}\right)=\inf_{\substack{\mathcal{M}\in\mathrm{LOCC}\\
\mathcal{D}_{A_{1}B_{1}}\circ\mathcal{M}_{AB}=\mathcal{M}_{AB}
}
}D\left(\mathcal{N}_{AB}\middle\|\mathcal{M}_{AB}\right).
\]
\end{proof}
The above lemma demonstrates that the relative entropy of entanglement
of a bipartite POVM can be viewed as its relative entropy distance
to the set of LOCC POVMs (rather than arbitrary bipartite LOCC channels).

Now, note that if systems $A_{1}$ and $B_{1}$ are classical, we
can view them as a single classical system (since classical communication
is free), and instead of using two indices $x,y$ to characterize
the POVM, it makes more sense to use just a single index, say $x$.
In this setting, the above lemma can be used to calculate the relative
entropy of process-entanglement for a POVM $\left\{ N_{A_{0}B_{0}}^{x}\right\} $.
Consider the associated quantum-to-classical channel $\mathcal{N}_{A_{0}B_{0}\rightarrow X}\left(\rho_{A_{0}B_{0}}\right)=\sum_{x=1}^{\left|X\right|}\mathrm{Tr}\left[\rho_{A_{0}B_{0}}N_{A_{0}B_{0}}^{x}\right]\left|x\right\rangle \left\langle x\right|_{X}$,
and an LOCC POVM $\left\{ F_{A_{0}B_{0}}^{y}\right\} $, with its
associated quantum-to-classical channel $\mathcal{F}_{A_{0}B_{0}\rightarrow Y}\left(\rho_{A_{0}B_{0}}\right)=\sum_{y=1}^{\left|Y\right|}\mathrm{Tr}\left[\rho_{A_{0}B_{0}}F_{A_{0}B_{0}}^{y}\right]\left|y\right\rangle \left\langle y\right|_{Y}$.
Now, possibly by completing one of the two POVMs with some zero elements,
we can always take $X=Y$. To calculate the channel divergence we
have to evaluate $\mathcal{N}_{A_{0}B_{0}\rightarrow X}$ and $\mathcal{F}_{A_{0}B_{0}\rightarrow X}$
on any pure state $\psi_{RA_{0}B_{0}}$, where $R$ is isomorphic
to $A_{0}B_{0}$ \citep{Cooney2016,Datta}. Recall that $\psi_{RA_{0}B_{0}}=\left(I_{R}\otimes\sqrt{\gamma_{A_{0}B_{0}}}U_{A_{0}B_{0}}\right)\phi_{RA_{0}B_{0}}^{+}\left(I_{R}\otimes U_{A_{0}B_{0}}^{\dagger}\sqrt{\gamma_{A_{0}B_{0}}}\right)$,
where $\gamma_{A}\in\mathfrak{D}\left(A_{0}B_{0}\right)$ and $U_{A_{0}B_{0}}$
is some unitary. After some calculations, we obtain
\begin{widetext}
\[
E\left(\left\{ N^{x}\right\} \right)=\inf_{\left\{ F_{x}\right\} \in\mathrm{LOCC}}\max_{\gamma,U}D\left(\sum_{x}U\sqrt{\gamma}\left(N^{x}\right)^{T}\sqrt{\gamma}U^{\dagger}\otimes\left|x\right\rangle \left\langle x\right|\middle\|\sum_{x}U\sqrt{\gamma}(F^{x})^{T}\sqrt{\gamma}U^{\dagger}\otimes\left|x\right\rangle \left\langle x\right|\right).
\]
\end{widetext}

\noindent By the properties of $D$, we have finally
\begin{align*}
 & E\left(\left\{ N^{x}\right\} \right)\\
 & =\inf_{\left\{ F_{x}\right\} \in\mathrm{LOCC}}\max_{\gamma}\sum_{x}D\left(\sqrt{\gamma}\left(N^{x}\right)^{T}\sqrt{\gamma}\middle\|\sqrt{\gamma}\left(F^{x}\right)^{T}\sqrt{\gamma}\right).
\end{align*}

Using the protocol of entanglement swapping~\citep{Entanglement-swapping},
we can use the entanglement of POVMs to produce static entanglement.
This is illustrated in Fig.~\ref{loccpovmsuperchannel}.
\begin{figure}
\begin{centering}
\includegraphics[width=1\columnwidth]{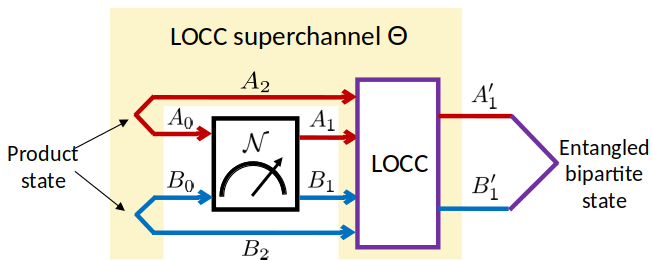}
\par\end{centering}
\caption{\label{loccpovmsuperchannel}Here $A_{1}$ and $B_{1}$ are classical
systems. The action of an LOCC superchannel $\Theta$ on a bipartite
channel with classical output can produce an entangled state.}

\end{figure}

\section{NPT entanglement of a bipartite channel\label{sec:PPT}}

Entanglement theory is hard to study due to the complexity of LOCC
channels \citep{Owari-2008a,Nathanson-2013a,Chitambar2014,Chitambar-2017a,Wakakuwa-2016a}
and the fact that even determining whether a given state is entangled
or not is known to be NP-hard \citep{NP-hard1,NP-hard2}. For this
reason, much of the work in recent years on entanglement theory involved
the replacement of LOCC with a larger set of free operations that
are more computationally friendly (see e.g.\ Ref.~\citep{Chitambar-2018b}
and references therein). One such set is the set of \emph{separable
operations} (or in short SEP; cf.\ section~\ref{sec:SEP}) \citep{SEP,Rains-SEP,PPT1},
another one is the set of \emph{PPT operations} \citep{PPT1,PPT2}.
Both sets are larger than LOCC, but the set of PPT operations is much
larger than both LOCC and SEP operations, as it contains, for instance,
PPT bound entangled states \citep{PHorodecki,Bound-entanglement}
(viewed as PPT channels with trivial input). Yet, among them, the
set of PPT operations has the simplest characterization, and can be
used to provide insights into LOCC entanglement, including various
bounds on LOCC tasks.

Bipartite states with positive (semi-definite) partial transpose (called
PPT states) were first discussed in Refs.~\citep{PPT-Peres,PPT-Horodecki}
in the context of entanglement theory. A few years later Rains \citep{PPT1,PPT2}
defined PPT bipartite channels for the first time (of which LOCC or
SEP channels are a special type), and used them to find an upper bound
on the distillable entanglement. In this section we consider PPT superchannels
\citep{Leung}, and use them for the study of entanglement of bipartite
channels. We will see that several of the optimization problems introduced
in the previous sections can be solved with SDPs in this theory of
entanglement, called the theory of \emph{NPT entanglement}. We start
with a few notations that will be very useful in the following.

Denote the transpose supermap by $\Upsilon_{B}\in\mathfrak{L}\left(B\to B\right)$:
\[
\Upsilon_{B}\left[\mathcal{N}_{B}\right]:=\mathcal{T}_{B_{1}}\circ\mathcal{N}_{B_{0}\to B_{1}}\circ\mathcal{T}_{B_{0}},
\]
for all $\mathcal{N}_{B}\in\mathfrak{L}\left(B_{0}\to B_{1}\right)$,
where $\mathcal{T}_{B_{0}}$ and $\mathcal{T}_{B_{1}}$ are the transpose
maps on the input and output systems, respectively. In Refs.~\citep{PPT1,PPT2}
the symbol $\Gamma$ was used to indicate the partial transpose of
a bipartite channel; that is, 
\[
\mathcal{N}_{AB}^{\Gamma}:=\left(\mathbbm{1}_{A}\otimes\Upsilon_{B}\right)\left[\mathcal{N}_{AB}\right].
\]
In the following we adopt the convention to always choose Bob's systems
(i.e.\ those denoted by $B$) to apply the (partial) transpose to.
With these notations, the set of PPT maps in $\mathrm{CP}\left(A_{0}B_{0}\to A_{1}B_{1}\right)$
is defined as
\begin{align*}
 & \mathrm{PPT}\left(A_{0}B_{0}\to A_{1}B_{1}\right)\\
 & =\left\{ \mathcal{N}\in\mathrm{CP}\left(A_{0}B_{0}\to A_{1}B_{1}\right):\mathcal{N}_{AB}^{\Gamma}\geq0\right\} .
\end{align*}
Note that PPT maps are defined as general CP maps, not necessarily
as channels. PPT maps have several useful properties. First, $\mathcal{N}_{AB}\in\mathrm{PPT}\left(A_{0}B_{0}\to A_{1}B_{1}\right)$
if and only if its Choi matrix $J_{AB}^{\mathcal{N}}$ satisfies $J_{AB}^{\mathcal{N}}\geq0$
and $\left(J_{AB}^{\mathcal{N}}\right)^{T_{B}}\geq0$. The former
condition implies that $\mathcal{N}_{AB}$ is a CP map, and the latter
ensures that it is PPT. The latter follows from the identity 
\begin{equation}
J_{AB}^{\mathcal{N}^{\Gamma}}=\left(J_{AB}^{\mathcal{N}}\right)^{T_{B}}.\label{idenppt}
\end{equation}
Furthermore, PPT maps have the property that they are completely PPT
preserving \citep{Chitambar-2018b}, meaning that if $\mathcal{N}_{AB}\in\mathrm{PPT}\left(A_{0}B_{0}\to A_{1}B_{1}\right)$,
then for every bipartite PPT quantum state $\rho\in\mathfrak{D}\left(A_{0}'A_{0}B_{0}'B_{0}\right)$,
the matrix $\mathcal{N}_{A_{0}B_{0}\to A_{1}B_{1}}\left(\rho_{A_{0}'A_{0}B_{0}'B_{0}}\right)$
has positive partial transpose. In other words, $\mathcal{N}_{AB}$
takes PPT positive semi-definite matrices to PPT positive semi-definite
matrices even when it is tensored with the identity. 

Here we discuss two types of generalizations of PPT maps to supermaps.
We call the first one \emph{restricted} PPT superchannels, to distinguish
it from the PPT supermaps we will study extensively in what follows.
We will see that restricted PPT superchannels lead to a cumbersome
entanglement theory on bipartite channels, similar to the one used
in Refs.~\citep{WW18,Kaur2017,Wilde-PPT-Das1}. Further, here we
consider bipartite channels, whereas in Ref.~\citep{WW18} the authors
considered only one-way channels from Alice to Bob (i.e.\ the special
case in which $\left|B_{0}\right|=\left|A_{1}\right|=1$).

A restricted PPT superchannel is depicted in Fig.~\ref{fig:PPT}.
\begin{figure}
\begin{centering}
\includegraphics[width=1\columnwidth]{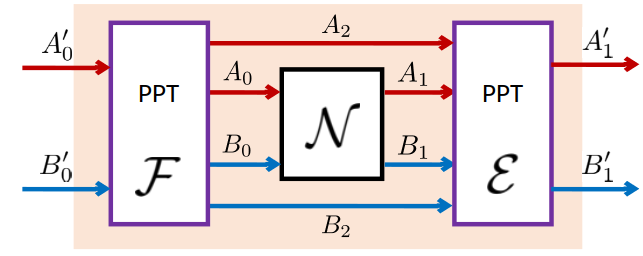}
\par\end{centering}
\caption{\label{fig:PPT}The action of a restricted PPT superchannel on the
bipartite channel $\mathcal{N}_{AB}$. }

\end{figure}
 In the language of Ref.~\citep{Gour-Scandolo-resource}, it is a
freely realizable superchannel: it consists of pre- and post- processing
channels $\mathcal{E}$ and $\mathcal{F}$ that are \emph{both} PPT.
Note that, at a first glance, this looks a very natural definition,
and as discussed in Ref.~\citep{Gour-Scandolo-resource}, it is the
most physical and natural one. Moreover, denoting this restricted
PPT superchannel by $\Theta$, it is clear that if $\mathcal{N}$
is a PPT channel then also the resulting map $\Theta\left[\mathcal{N}\right]$
is PPT. Nonetheless, PPT channels are not physical. They do not arise
from some \emph{physical} constraint on a physical system. Therefore,
the requirement that the superchannel $\Theta$ be realized with PPT
pre- and post-processing channels does not make $\Theta$ more physical.
Moreover, as we will see, this definition does not lead to a simple
resource theory, and as such, it loses its advantage of being a useful
approximation to LOCC. For these reasons, we will adopt a more general
definition of PPT superchannels that avoids the requirement that they
be realized by PPT channels. However, before doing that, we first
discuss some properties of restricted PPT superchannels. 
\begin{prop}
Let $\Theta\in\mathfrak{S}\left(AB\to A'B'\right)$ be a superchannel
as in Fig.~\ref{fig:PPT}, where $\mathcal{F}\in\mathrm{PPT}\left(A'_{0}B'_{0}\to A_{2}A_{0}B_{0}B_{2}\right)$
and $\mathcal{E}\in\mathrm{PPT}\left(A_{2}A_{1}B_{1}B_{2}\to A_{1}'B_{1}'\right)$.
Then 
\[
\left(\mathbf{J}_{ABA'B'}^{\Theta}\right)^{T_{BB'}}\geq0.
\]
\end{prop}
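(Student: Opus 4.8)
The plan is to descend to the level of Choi matrices and exploit the fact that the realization of $\Theta$ in Fig.~\ref{fig:PPT} makes $\mathbf{J}^{\Theta}$ a link product of the Choi matrices of the pre- and post-processing. Recall that $\mathbf{J}^{\Theta}$ is the Choi matrix of $\mathcal{Q}^{\Theta}$, and that opening the channel slot of a freely realizable superchannel turns the diagram into the composition of $\mathcal{F}$ and $\mathcal{E}$ glued along the memory wires $A_{2}B_{2}$, with the slot systems $A_{0}B_{0}$ (the output of $\mathcal{F}$) and $A_{1}B_{1}$ (the input of $\mathcal{E}$) left open. Writing $J^{\mathcal{F}}$ and $J^{\mathcal{E}}$ for the respective Choi matrices, this gives
\[
\mathbf{J}_{ABA'B'}^{\Theta}=\mathrm{Tr}_{A_{2}B_{2}}\left[\left(J^{\mathcal{F}}\right)^{T_{A_{2}B_{2}}}J^{\mathcal{E}}\right],
\]
each factor being tacitly tensored with the identity on the systems it does not act on, and the contracted memory carrying the transpose on the $\mathcal{F}$-factor by the usual link-product convention. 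Pinning down this identity and the system bookkeeping it entails is the first step.

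Next I would bring in the two hypotheses. By Eq.~\eqref{idenppt}, the assumptions $\mathcal{F}\in\mathrm{PPT}\left(A_{0}'B_{0}'\to A_{2}A_{0}B_{0}B_{2}\right)$ and $\mathcal{E}\in\mathrm{PPT}\left(A_{2}A_{1}B_{1}B_{2}\to A_{1}'B_{1}'\right)$ are equivalent to
\[
\left(J^{\mathcal{F}}\right)^{T_{B_{0}'B_{0}B_{2}}}\geq0\qquad\text{and}\qquad\left(J^{\mathcal{E}}\right)^{T_{B_{1}B_{2}B_{1}'}}\geq0,
\]
the transpose being taken on all of Bob's input and output systems, including the shared memory $B_{2}$. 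The elementary tool I would use to massage the link product into these forms is the identity $\mathrm{Tr}_{S}\left[PQ\right]=\mathrm{Tr}_{S}\left[P^{T_{S'}}Q^{T_{S'}}\right]$, valid for any subsystem $S'\subseteq S$ of the traced system; it follows from $\mathrm{Tr}\left[XY\right]=\mathrm{Tr}\left[X^{T}Y^{T}\right]$ applied on $S'$ and then tracing out the complement $S\setminus S'$. Taking the partial transpose $T_{BB'}=T_{B_{0}B_{1}B_{0}'B_{1}'}$, which commutes with $\mathrm{Tr}_{A_{2}B_{2}}$ because these four systems are external to the contraction, and letting it act on whichever factor is nontrivial on each system ($B_{0}',B_{0}$ on $\mathcal{F}$ and $B_{1},B_{1}'$ on $\mathcal{E}$), I would then apply the identity with $S'=B_{2}$ to insert a compensating transpose on the memory $B_{2}$ in both factors at no cost.

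The net effect of this bookkeeping is to rewrite
\[
\left(\mathbf{J}_{ABA'B'}^{\Theta}\right)^{T_{BB'}}=\left(J^{\mathcal{F}}\right)^{T_{B_{0}'B_{0}B_{2}}}*\left(J^{\mathcal{E}}\right)^{T_{B_{1}B_{2}B_{1}'}},
\]
where $*$ denotes the link product over $A_{2}B_{2}$. Both factors are positive semi-definite by the PPT hypotheses, and the link product of two positive semi-definite operators is again positive semi-definite---it is the Choi matrix of the composition of two completely positive maps---so the right-hand side is positive semi-definite, which is the claim. The part I expect to require the most care is purely combinatorial: correctly assigning every one of Bob's input and output systems and the split memory $A_{2}B_{2}$ to the right factor, and checking that the transpose on the contracted $A_{2}$ that survives from the link-product convention is harmless (it is absorbed by the positivity-of-the-link-product lemma) while the transposes on $B_{2}$ recombine precisely into the two PPT conditions. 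Conceptually the statement is nothing more than ``PPT composed with PPT is PPT,'' transported to the Choi matrix of a superchannel.
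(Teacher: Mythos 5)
Your proof is correct and is essentially the paper's own argument transcribed from the map picture into the Choi-matrix/link-product picture: the paper writes $\mathcal{Q}^{\Theta}=\mathcal{E}\circ\mathcal{F}$, inserts a cancelling pair of transposes $\mathcal{T}_{B_{2}}\circ\mathcal{T}_{B_{2}}=\mathsf{id}$ on the memory to get $\left(\mathcal{Q}^{\Theta}\right)^{\Gamma}=\mathcal{E}^{\Gamma}\circ\mathcal{F}^{\Gamma}$, and concludes positivity because a composition of CP maps is CP---which is exactly your compensating $T_{B_{2}}$ insertion followed by the positivity-of-the-link-product lemma, the latter being the Choi-level restatement of ``CP composed with CP is CP.'' The system bookkeeping you flag as the delicate part is handled somewhat more cleanly at the level of maps, where the transposes on $B_{0},B_{1},B_{0}',B_{1}'$ simply commute past the channel that does not act on them, but the two arguments are step-for-step the same.
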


\begin{proof}
Since $\mathbf{J}_{ABA'B'}^{\Theta}$ is the Choi matrix of the CPTP
map
\begin{align}
 & \mathcal{Q}_{A_{1}A_{0}'B_{1}B_{0}'\to A_{0}A_{1}'B_{0}B_{1}'}^{\Theta}\nonumber \\
 & =\mathcal{E}_{A_{2}A_{1}B_{1}B_{2}\to A_{1}'B_{1}'}\circ\mathcal{F}_{A'_{0}B'_{0}\to A_{2}A_{0}B_{0}B_{2}},\label{q}
\end{align}
it is enough to show that the channel $\mathcal{Q}^{\Theta}$ is PPT.
Now, $\mathcal{Q}^{\Theta}$ is PPT because it is defined as a composition
of two PPT maps. Explicitly, we have
\begin{align*}
 & \left(\mathcal{Q}^{\Theta}\right)^{\Gamma}\\
 & =\mathcal{T}_{B_{0}B_{1}'}\circ\mathcal{E}_{A_{2}A_{1}B_{1}B_{2}\to A_{1}'B_{1}'}\circ\mathcal{F}_{A'_{0}B'_{0}\to A_{2}A_{0}B_{0}B_{2}}\circ\mathcal{T}_{B_{1}B_{0}'}\\
 & =\mathcal{T}_{B_{1}'}\circ\mathcal{E}_{A_{2}A_{1}B_{1}B_{2}\to A_{1}'B_{1}'}\circ\mathcal{T}_{B_{1}}\\
 & \circ\mathcal{T}_{B_{0}}\circ\mathcal{F}_{A'_{0}B'_{0}\to A_{2}A_{0}B_{0}B_{2}}\circ\mathcal{T}_{B_{0}'}\\
 & =\mathcal{T}_{B_{1}'}\circ\mathcal{E}_{A_{2}A_{1}B_{1}B_{2}\to A_{1}'B_{1}'}\circ\mathcal{T}_{B_{1}B_{2}}\\
 & \circ\mathcal{T}_{B_{0}B_{2}}\circ\mathcal{F}_{A'_{0}B'_{0}\to A_{2}A_{0}B_{0}B_{2}}\circ\mathcal{T}_{B_{0}'}\\
 & =\mathcal{E}_{A_{2}A_{1}B_{1}B_{2}\to A_{1}'B_{1}'}^{\Gamma}\circ\mathcal{F}_{A'_{0}B'_{0}\to A_{2}A_{0}B_{0}B_{2}}^{\Gamma}.
\end{align*}
Since both $\mathcal{E}$ and $\mathcal{F}$ are PPT channels, the
last line is a valid quantum channel. This completes the proof.
\end{proof}
We believe that the converse of the proposition above does \emph{not}
hold. In other words, if the Choi matrix of $\Theta$ has positive
partial transpose, it does not necessarily mean that $\Theta$ can
be realized with pre- and post-processing channels that are both PPT.
However, to prove such a statement, one will need to provide an example,
and then show that the proposed superchannel does not have any other
realizations that involve only PPT pre- and post-processing channels.
Alternatively, the question can be rephrased as follows. Suppose we
only know that the channel $\mathcal{Q}^{\Theta}$ in the first line
of Eq.~\eqref{q} is a PPT channel; does it necessarily mean that
there exist PPT channels $\mathcal{E}$ and $\mathcal{F}$ such that
we can decompose $\mathcal{Q}^{\Theta}$ as in the second line of
Eq.\ \eqref{q}?

While there are no obvious reasons to believe that the answer is positive,
we have not been able to prove it. If, instead, the answer were positive,
it would mean that the set of restricted PPT superchannels is the
same as the set of PPT superchannels we define below. 

\subsection{PPT supermaps\label{subsec:PPT-supermaps}}

In this section we define the set of PPT superchannels we are going
to use in the following \citep{Leung}. These superchannels have already
featured in a number of works on quantum communication \citep{XinWang1,XinWang2,XinWang3}.
We believe that this set is strictly larger than the set of restricted
PPT superchannels introduced above. However, as we discussed above,
we have not been able to show this strict inclusion.
\begin{defn}
\label{def:maindef}Let $\Theta\in\mathfrak{L}\left(AB\to A'B'\right)$
be a CP supermap with systems $A,B,A',B'$ all being composite systems
with input and output dimensions.
\begin{enumerate}
\item $\Theta$ is \emph{PPT-preserving} if for any PPT map $\mathcal{E}\in\mathrm{PPT}\left(A_{0}B_{0}\to A_{1}B_{1}\right)$,
the map $\Theta\left[\mathcal{E}\right]\in\mathrm{PPT}\left(A_{0}'B_{0}'\to A_{1}'B_{1}'\right)$.
\item $\Theta$ is \emph{completely PPT-preserving} if $\mathbbm{1}_{A''B''}\otimes\Theta$
is PPT preserving for any composite systems $A''=\left(A''_{0},A''_{1}\right)$
and $B''=\left(B_{0}'',B_{1}''\right)$.
\item $\Theta$ is a\emph{ PPT supermap} if, in addition to $\Theta$, also
$\Theta^{\Gamma}:=\Upsilon_{B'}\circ\Theta\circ\Upsilon_{B}$ is a
CP supermap.
\end{enumerate}
\end{defn}

\begin{rem}
Note that if $\Theta\in\mathfrak{L}\left(AB\to A'B'\right)$ is a
PPT CP supermap, and the dimensions $\left|A_{0}\right|=\left|B_{0}\right|=\left|A_{1}\right|=\left|B_{1}\right|=1$,
then $\Theta$ can be viewed as a PPT map in $\mathrm{CP}\left(A_{0}'B_{0}'\to A_{1}'B_{1}'\right)$.

Moreover, note that in the definition of a PPT supermap we require
that both $\Theta$ and $\Theta^{\Gamma}$ are CP supermaps, in complete
analogy with the definition of PPT CP maps.
\end{rem}

We denote the set of all PPT CP supermaps by $\mathrm{PPT}\left(AB\to A'B'\right)$.

The landscape of PPT supermaps portrayed in definition~\ref{def:maindef}
is actually simpler. Indeed, completely PPT-preserving and PPT supermaps
are the same notion (cf.\ also Ref.~\citep{Leung}).
\begin{thm}
Let $\Theta\in\mathfrak{L}\left(AB\to A'B'\right)$ be a CP supermap,
and denote by $\mathbf{J}_{ABA'B'}^{\Theta}$ its Choi matrix. Then,
the following are equivalent:
\begin{enumerate}
\item $\Theta$ is a PPT supermap.\label{enu:PPT1}
\item The Choi matrix of $\Theta$ satisfies\label{enu:PPT2}
\[
\left(\mathbf{J}_{ABA'B'}^{\Theta}\right)^{T_{BB'}}\geq0.
\]
\item $\Theta$ is completely PPT-preserving.\label{enu:PPT3}
\end{enumerate}
\end{thm}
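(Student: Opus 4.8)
The plan is to prove the cycle $(\ref{enu:PPT1})\Leftrightarrow(\ref{enu:PPT2})$, then $(\ref{enu:PPT1})\Rightarrow(\ref{enu:PPT3})$, and finally $(\ref{enu:PPT3})\Rightarrow(\ref{enu:PPT2})$, which closes the equivalence. Everything rests on two elementary identities for the transpose supermap $\Upsilon$. The first is the Choi-level analog of Eq.~\eqref{idenppt}: since post-composing $\Theta$ with $\Upsilon_{B'}$ transposes the output Bob systems $B_0'B_1'$ of the Choi matrix, while pre-composing with $\Upsilon_B$ transposes the input Bob systems $B_0B_1$, one obtains
\[
\mathbf{J}_{ABA'B'}^{\Theta^{\Gamma}}=\left(\mathbf{J}_{ABA'B'}^{\Theta}\right)^{T_{BB'}}.
\]
The second, using that $\Upsilon_{B}\circ\Upsilon_{B}=\mathbbm{1}_{B}$, is the commutation identity $\left(\Theta\left[\mathcal{E}\right]\right)^{\Gamma}=\Theta^{\Gamma}\left[\mathcal{E}^{\Gamma}\right]$, together with its ancilla-extended form $\left(\mathbbm{1}_{A''B''}\otimes\Theta\right)^{\Gamma}=\mathbbm{1}_{A''B''}\otimes\Theta^{\Gamma}$ (the outer transposes on the ancilla Bob system cancel because $\mathbbm{1}$ acts trivially there).

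For $(\ref{enu:PPT1})\Leftrightarrow(\ref{enu:PPT2})$ I would recall that a supermap is CP if and only if its Choi matrix is positive semi-definite. Since $\Theta$ is assumed CP, $\mathbf{J}^{\Theta}\geq0$ holds automatically, so by the first identity $\Theta^{\Gamma}$ is CP if and only if $\mathbf{J}^{\Theta^{\Gamma}}=\left(\mathbf{J}^{\Theta}\right)^{T_{BB'}}\geq0$. As $\Theta$ being a PPT supermap means exactly that both $\Theta$ and $\Theta^{\Gamma}$ are CP, this is precisely condition~(\ref{enu:PPT2}). For $(\ref{enu:PPT1})\Rightarrow(\ref{enu:PPT3})$ I would take any PPT map $\mathcal{E}$ on an enlarged input with ancilla $A''B''$: then $\left(\mathbbm{1}_{A''B''}\otimes\Theta\right)\left[\mathcal{E}\right]$ is CP because $\Theta$ and $\mathcal{E}$ are CP, while by the commutation identity its partial transpose equals $\left(\mathbbm{1}_{A''B''}\otimes\Theta^{\Gamma}\right)\left[\mathcal{E}^{\Gamma}\right]$, which is CP because $\Theta^{\Gamma}$ is CP by~(\ref{enu:PPT1}) and $\mathcal{E}^{\Gamma}$ is CP since $\mathcal{E}$ is PPT. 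Hence $\left(\mathbbm{1}_{A''B''}\otimes\Theta\right)\left[\mathcal{E}\right]$ is PPT for every ancilla, which is~(\ref{enu:PPT3}).

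The implication $(\ref{enu:PPT3})\Rightarrow(\ref{enu:PPT2})$ is the direction I expect to be the crux. Reversing the previous step, condition~(\ref{enu:PPT3}) yields that $\mathbbm{1}_{A''B''}\otimes\Theta^{\Gamma}$ sends every PPT map to a CP map, for every ancilla $A''B''$ split between Alice and Bob; the goal is to upgrade this to full complete positivity of $\Theta^{\Gamma}$, that is, to $\mathbf{J}^{\Theta^{\Gamma}}=\left(\mathbf{J}^{\Theta}\right)^{T_{BB'}}\geq0$. The difficulty is that $\mathbf{J}^{\Theta^{\Gamma}}$ is read off by feeding $\Theta^{\Gamma}$ the maximally entangled channel $\Phi^{+}$, which is \emph{not} PPT, so the hypothesis cannot be applied to it directly; naively restricting to PPT inputs is genuinely weaker than complete positivity. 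What rescues the argument is that the transpose inside $\Theta^{\Gamma}$ acts only on the systems $B,B'$ and never on the ancilla, whereas the PPT constraint on the enlarged input also involves the ancilla's Bob part. Exploiting this mismatch, I would construct a genuine PPT map on the enlarged space whose image under $\mathbbm{1}\otimes\Theta^{\Gamma}$ certifies positivity of $\mathbf{J}^{\Theta^{\Gamma}}$.

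Equivalently, and perhaps more cleanly, one recognizes $\mathbf{J}^{\Theta}$ as the Choi matrix of the associated channel $\mathcal{P}^{\Theta}$ (or $\mathcal{Q}^{\Theta}$), under which condition~(\ref{enu:PPT2}) says exactly that this channel is PPT across the $A\!:\!B$ cut; one then invokes the known equivalence for CP maps---completely PPT-preserving if and only if PPT \citep{Chitambar-2018b}---and verifies that supermap-level and channel-level complete PPT-preservation coincide through the Choi correspondence. I expect the main obstacle to lie precisely here: correctly matching the reference and ancilla systems so that PPT-preservation of the supermap translates faithfully into PPT-preservation of its Choi channel, thereby bridging the gap between control on PPT inputs only and positivity of the full partially transposed Choi matrix.
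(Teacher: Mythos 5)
Your equivalence $(\ref{enu:PPT1})\Leftrightarrow(\ref{enu:PPT2})$ and your implication $(\ref{enu:PPT1})\Rightarrow(\ref{enu:PPT3})$ are sound and follow essentially the paper's route: the identity $\mathbf{J}_{ABA'B'}^{\Theta^{\Gamma}}=\left(\mathbf{J}_{ABA'B'}^{\Theta}\right)^{T_{BB'}}$ combined with ``a supermap is CP iff its Choi matrix is positive semi-definite'', and the commutation identity of proposition~\ref{proppt} applied to an ancilla-extended PPT input. Note only that your one-line justification of that identity silently uses the fact that a transpose on the dummy input legs of $\Phi^{+}$ can be traded for a transpose on the reference legs; this is exactly Eq.~\eqref{eq:PPT Phi+}, which the paper proves rather than assumes.

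The genuine gap is in $(\ref{enu:PPT3})\Rightarrow(\ref{enu:PPT2})$, and it stems from a false premise. You assert that the maximally entangled channel $\Phi^{+}$ is \emph{not} PPT, so that hypothesis~(\ref{enu:PPT3}) cannot be applied to it. This is wrong: the bipartition relevant for PPT is Alice versus Bob, and the input used to read off the Choi matrix is $\Phi_{A\widetilde{A}}^{+}\otimes\Phi_{B\widetilde{B}}^{+}$, a \emph{product} of a CP map held entirely by Alice (on $A\widetilde{A}$) and one held entirely by Bob (on $B\widetilde{B}$); its entanglement is between system and reference copies, both of which are local to each party, not across the Alice--Bob cut. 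Concretely, by Eq.~\eqref{eq:PPT Phi+} and $\Upsilon_{B}\circ\Upsilon_{B}=\mathbbm{1}_{B}$ one gets $\Upsilon_{B\widetilde{B}}\left[\Phi_{B\widetilde{B}}^{+}\right]=\Phi_{B\widetilde{B}}^{+}$, so $\Phi_{A\widetilde{A}}^{+}\otimes\Phi_{B\widetilde{B}}^{+}$ is a PPT map. Once this is recognized, the implication closes in three lines, exactly as in the paper: hypothesis~(\ref{enu:PPT3}) applied to this input gives that $\mathcal{P}_{ABA'B'}^{\Theta}=\left(\mathbbm{1}_{AB}\otimes\Theta_{\widetilde{A}\widetilde{B}\to A'B'}\right)\left[\Phi_{A\widetilde{A}}^{+}\otimes\Phi_{B\widetilde{B}}^{+}\right]$ is PPT, i.e.\ $\left(\mathcal{P}^{\Theta}\right)^{\Gamma}\geq0$; by Eq.~\eqref{eq:transpose} we have $\left(\mathcal{P}^{\Theta}\right)^{\Gamma}=\mathcal{P}^{\Theta^{\Gamma}}$, whose Choi matrix is $\left(\mathbf{J}_{ABA'B'}^{\Theta}\right)^{T_{BB'}}$ by Eq.~\eqref{idenppt}, and this matrix is therefore positive semi-definite. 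Your two proposed repairs---constructing some other PPT map ``exploiting the mismatch'', or invoking the channel-level equivalence through the Choi correspondence---are left entirely unconstructed, and both would in any case reduce to the observation above; as written, the implication $(\ref{enu:PPT3})\Rightarrow(\ref{enu:PPT2})$ is not proved.
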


\begin{proof}
First we prove that statements~\ref{enu:PPT1} and \ref{enu:PPT2}
are equivalent. Consider the map $\Phi_{A\widetilde{A}}^{+}\in\mathrm{CP}\left(A_{0}\widetilde{A}_{0}\to A_{1}\widetilde{A}_{1}\right)$
defined in Eq.~\eqref{eq:maxent}, which is completely positive,
and it is the CP-map analog of the maximally entangled state. Recall
also that one of the representations of a supermap $\Theta\in\mathfrak{L}\left(A\to A'\right)$,
is given by the map $\mathcal{P}_{AA'}^{\Theta}=\left(\mathbbm{1}_{A}\otimes\Theta\right)\left[\Phi_{A\widetilde{A}}^{+}\right]$
whose Choi matrix is the Choi matrix of $\Theta$. Since here we consider
a bipartite CP supermap $\Theta\in\mathfrak{L}\left(AB\to A'B'\right)$,
the map $\mathcal{P}^{\Theta}$ is defined as 
\[
\mathcal{P}_{ABA'B'}^{\Theta}=\left(\mathbbm{1}_{AB}\otimes\Theta_{\widetilde{A}\widetilde{B}\to A'B'}\right)\left[\Phi_{A\widetilde{A}}^{+}\otimes\Phi_{B\widetilde{B}}^{+}\right],
\]
where we have used the fact that $\Phi_{AB\widetilde{A}\widetilde{B}}^{+}=\Phi_{A\widetilde{A}}^{+}\otimes\Phi_{B\widetilde{B}}^{+}$.
Now, observe that
\begin{align}
 & \mathcal{P}_{ABA'B'}^{\Theta^{\Gamma}}\nonumber \\
 & =\left(\mathbbm{1}_{AB}\otimes\Upsilon_{B'}\circ\Theta_{\widetilde{A}\widetilde{B}\to A'B'}\circ\Upsilon_{\widetilde{B}}\right)\left[\Phi_{A\widetilde{A}}^{+}\otimes\Phi_{B\widetilde{B}}^{+}\right]\label{eq:lambda}
\end{align}
and 
\begin{equation}
\Upsilon_{\widetilde{B}}\left[\Phi_{B\widetilde{B}}^{+}\right]=\mathcal{T}_{\widetilde{B}_{1}}\circ\Phi_{B\widetilde{B}}^{+}\circ\mathcal{T}_{\widetilde{B}_{0}}=\Upsilon_{B}\left[\Phi_{B\widetilde{B}}^{+}\right],\label{eq:PPT Phi+}
\end{equation}
where in the last equality we used the representation~\eqref{eq:maxent}
of $\Phi_{B\widetilde{B}}^{+}$, and the fact that $\left(\phi_{B_{0}\widetilde{B}_{0}}^{+}\right)^{T_{\widetilde{B}_{0}}}=\left(\phi_{B_{0}\widetilde{B}_{0}}^{+}\right)^{T_{B_{0}}}$
and $\left(\phi_{B_{1}\widetilde{B}_{1}}^{+}\right)^{T_{\widetilde{B}_{1}}}=\left(\phi_{B_{1}\widetilde{B}_{1}}^{+}\right)^{T_{B_{1}}}$.
Combining this with Eq.~\eqref{eq:lambda}, we conclude that
\begin{align}
 & \mathcal{P}_{ABA'B'}^{\Theta^{\Gamma}}\nonumber \\
 & =\left(\Upsilon_{B'}\otimes\Upsilon_{B}\right)\circ\left(\mathbbm{1}_{AB}\otimes\Theta_{\widetilde{A}\widetilde{B}\to A'B'}\right)\left[\Phi_{A\widetilde{A}}^{+}\otimes\Phi_{B\widetilde{B}}^{+}\right]\nonumber \\
 & =\left(\mathcal{P}_{ABA'B'}^{\Theta}\right)^{\Gamma}.\label{eq:transpose}
\end{align}
Hence 
\[
\mathbf{J}_{ABA'B'}^{\Theta^{\Gamma}}=\left(\mathbf{J}_{ABA'B'}^{\Theta}\right)^{T_{BB'}},
\]
where we have used Eq.~\eqref{idenppt}. This completes the proof
of the equivalence between statements \ref{enu:PPT1} and \ref{enu:PPT2}.

For the equivalence between~\ref{enu:PPT1} and \ref{enu:PPT3},
let $\Theta\in\mathfrak{L}\left(AB\to A'B'\right)$ be a PPT supermap.
Then, for any systems $A''B''$ and any PPT bipartite CP map, $\mathcal{N}_{A''B''AB}$,
we have
\begin{align*}
0 & \leq\Theta_{AB\to A'B'}^{\Gamma}\left[\Upsilon_{B''B}\left[\mathcal{N}_{A''B''AB}\right]\right]\\
 & =\Upsilon_{B''B'}\left[\Theta_{AB\to A'B'}\left[\mathcal{N}_{A''B''AB}\right]\right],
\end{align*}
where the equality follows from the definition of $\Theta_{AB\to A'B'}^{\Gamma}$.
In other words, $\left(\mathbbm{1}_{A''B''}\otimes\Theta\right)\left[\mathcal{N}_{A''B''AB}\right]$
is a PPT map, so $\Theta$ is completely PPT preserving.

Conversely, let $\Theta\in\mathfrak{L}(\widetilde{A}\widetilde{B}\to A'B')$
be a CP supermap that is completely PPT preserving. Note that, by
Eq.~\eqref{eq:PPT Phi+}, $\Phi_{AB\widetilde{A}\widetilde{B}}^{+}=\Phi_{A\widetilde{A}}^{+}\otimes\Phi_{B\widetilde{B}}^{+}$
is a PPT map. Therefore, the CP map 
\[
\mathcal{P}_{ABA'B'}^{\Theta}=\left(\mathbbm{1}_{AB}\otimes\Theta_{\widetilde{A}\widetilde{B}\to A'B'}\right)\left[\Phi_{A\widetilde{A}}^{+}\otimes\Phi_{B\widetilde{B}}^{+}\right]
\]
is PPT. From a similar relation to Eq.~\eqref{eq:transpose}, it
follows that $\mathcal{P}_{ABA'B'}^{\Theta^{\Gamma}}\geq0$, so $\Theta$
is PPT. This completes the proof.
\end{proof}
We end this section with a convenient property of the partial transpose
operation. This will be very useful in the following.
\begin{prop}
\label{proppt}Let $\Theta\in\mathfrak{L}\left(AB\to A'B'\right)$
be a bipartite supermap and let $\mathcal{N}\in\mathfrak{L}\left(A_{0}B_{0}\to A_{1}B_{1}\right)$
be a bipartite map. Then, 
\[
\left(\Theta\left[\mathcal{N}\right]\right)^{\Gamma}=\Theta^{\Gamma}\left[\mathcal{N}^{\Gamma}\right].
\]
\end{prop}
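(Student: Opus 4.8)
The plan is to reduce the statement to the single fact that the transpose supermap $\Upsilon_B$ is an involution, and then simply unfold the relevant definitions. Throughout I keep the paper's conventions: for a bipartite map one has $\mathcal{N}_{AB}^{\Gamma}=\left(\mathbbm{1}_{A}\otimes\Upsilon_{B}\right)\left[\mathcal{N}_{AB}\right]$, the analogous identity holds on the primed side with $\Upsilon_{B'}$, and for the supermap $\Theta$ one has $\Theta^{\Gamma}=\Upsilon_{B'}\circ\Theta\circ\Upsilon_{B}$, with the identity supermaps $\mathbbm{1}_{A}$ and $\mathbbm{1}_{A'}$ suppressed in the notation.

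First I would record that $\Upsilon_{B}$ is its own inverse, i.e.\ $\Upsilon_{B}\circ\Upsilon_{B}=\mathbbm{1}_{B}$. This is immediate from the definition $\Upsilon_{B}\left[\mathcal{N}_{B}\right]=\mathcal{T}_{B_{1}}\circ\mathcal{N}_{B_{0}\to B_{1}}\circ\mathcal{T}_{B_{0}}$ together with $\mathcal{T}_{B_{0}}\circ\mathcal{T}_{B_{0}}=\mathsf{id}_{B_{0}}$ and $\mathcal{T}_{B_{1}}\circ\mathcal{T}_{B_{1}}=\mathsf{id}_{B_{1}}$: applying $\Upsilon_{B}$ twice conjugates $\mathcal{N}$ by the squared transpose on both the input and the output systems, each of which is the identity. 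In particular this gives $\Upsilon_{B}\left[\mathcal{N}^{\Gamma}\right]=\Upsilon_{B}\left[\Upsilon_{B}\left[\mathcal{N}\right]\right]=\mathcal{N}$.

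With this in hand the proposition is a one-line computation of the right-hand side:
\[
\Theta^{\Gamma}\left[\mathcal{N}^{\Gamma}\right]=\Upsilon_{B'}\Big[\Theta\big[\Upsilon_{B}\left[\mathcal{N}^{\Gamma}\right]\big]\Big]=\Upsilon_{B'}\big[\Theta\left[\mathcal{N}\right]\big]=\left(\Theta\left[\mathcal{N}\right]\right)^{\Gamma},
\]
where the first equality is the definition of $\Theta^{\Gamma}$, the second uses the involution property just established, and the third is the definition of the partial transpose of the bipartite map $\Theta\left[\mathcal{N}\right]$.

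There is no genuine obstacle here; the only point demanding care is bookkeeping of the two occurrences of the symbol $\Gamma$. One must check that $\mathcal{N}^{\Gamma}$ and the inner $\Upsilon_{B}$ act on the \emph{unprimed} systems $B$, whereas $\left(\Theta\left[\mathcal{N}\right]\right)^{\Gamma}$ and the outer $\Upsilon_{B'}$ act on the \emph{primed} systems $B'$, and that the suppressed identity supermaps $\mathbbm{1}_{A}$ and $\mathbbm{1}_{A'}$ pass through harmlessly, since $\Upsilon$ touches only the $B$-type factors and therefore commutes with anything acting purely on the $A$-type factors. Once the notation is read with the correct systems attached, the claimed identity is exactly the statement that $\Upsilon_{B}$ cancels against the $\Upsilon_{B}$ hidden inside $\mathcal{N}^{\Gamma}$.
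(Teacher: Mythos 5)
Your proof is correct and is essentially the paper's own argument: expand $\Theta^{\Gamma}\left[\mathcal{N}^{\Gamma}\right]=\Upsilon_{B'}\circ\Theta\circ\Upsilon_{B}\left[\Upsilon_{B}\left[\mathcal{N}\right]\right]$ and cancel the two copies of $\Upsilon_{B}$ using the involution property. The only difference is that you make the step $\Upsilon_{B}\circ\Upsilon_{B}=\mathbbm{1}_{B}$ explicit, which the paper leaves implicit.
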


\begin{proof}
Note that
\begin{align*}
\Theta^{\Gamma}\left[\mathcal{N}^{\Gamma}\right] & =\Upsilon_{B'}\circ\Theta\circ\Upsilon_{B}\left[\Upsilon_{B}\left[\mathcal{N}_{AB}\right]\right]\\
 & =\Upsilon_{B'}\circ\Theta\left[\mathcal{N}_{AB}\right]\\
 & =\left(\Theta\left[\mathcal{N}_{AB}\right]\right)^{\Gamma}.
\end{align*}
This completes the proof.
\end{proof}

\subsection{Single-shot interconversions\label{subsec:Single-shot-interconversions}}

In the resource theory of NPT static entanglement the conversion of
one resource into another can be characterized by SDPs \citep{Fang2019}.
Here we show that for NPT dynamical entanglement, the conversion distance
$d_{\mathrm{PPT}}\left(\mathcal{N}_{AB}\to\mathcal{M}_{A'B'}\right)$,
defined in Ref.~\citep{Gour-Scandolo-resource}, can be computed
by an SDP as long as we consider the PPT superchannels introduced
in definition~\ref{def:maindef}, and \emph{not} the restricted PPT
operations illustrated in Fig.~\ref{fig:PPT}.

Now, recall that in NPT entanglement theory, the conversion distance
is
\begin{align*}
 & d_{\mathrm{PPT}}\left(\mathcal{N}_{AB}\to\mathcal{M}_{A'B'}\right)\\
 & =\frac{1}{2}\min_{\Theta\in\mathrm{PPT}\left(AB\to A'B'\right)}\left\Vert \Theta_{AB\to A'B'}\left[\mathcal{N}_{AB}\right]-\mathcal{M}_{A'B'}\right\Vert _{\diamond}.
\end{align*}
Now, the diamond norm can be expressed as the SDP \citep{Watrous-diamond}
\begin{align*}
 & \frac{1}{2}\left\Vert \Theta_{AB\to A'B'}\left[\mathcal{N}_{AB}\right]-\mathcal{M}_{A'B'}\right\Vert _{\diamond}\\
 & =\min_{\omega_{A'B'}\geq0;\thinspace\omega_{A'B'}\geq J_{A'B'}^{\Theta\left[\mathcal{N}\right]-\mathcal{M}}}\left\Vert \omega_{A'_{0}B'_{0}}\right\Vert _{\infty}.
\end{align*}
Now, in Ref.~\citep{Gour-Winter}, it was shown that it can be written
also as
\begin{align*}
 & \frac{1}{2}\left\Vert \Theta_{AB\to A'B'}\left[\mathcal{N}_{AB}\right]-\mathcal{M}_{A'B'}\right\Vert _{\diamond}\\
 & =\min\left\{ \lambda:\lambda\mathcal{Q}_{A'B'}\geq\Theta_{AB\to A'B'}\left[\mathcal{N}_{AB}\right]-\mathcal{M}_{A'B'}\right\} ,
\end{align*}
where $\mathcal{Q}_{B}\in\mathrm{CPTP}\left(B_{0}\rightarrow B_{1}\right)$.
Therefore, calculating the conversion distance amounts to solving
the following minimization problem 
\begin{eqnarray}
\textrm{Find} & \quad & d_{\mathrm{PPT}}\left(\mathcal{N}_{AB}\to\mathcal{M}_{A'B'}\right)=\min\lambda\nonumber \\
\textrm{Subject to:} & \quad & \lambda\mathcal{Q}_{A'B'}\geq\Theta_{AB\to A'B'}\left[\mathcal{N}_{AB}\right]-\mathcal{M}_{A'B'}\nonumber \\
 & \quad & \mathcal{Q}\textrm{ channel}\nonumber \\
 & \quad & \Theta\textrm{ superchannel}.\label{eq:primalppt}
\end{eqnarray}
This can be rephrased as the following SDP form. Denote the Choi matrix
of $\lambda\mathcal{Q}_{A'B'}$ by $\alpha_{A'B'}$, and the Choi
matrix of $\Theta$ by $\mathbf{J}_{ABA'B'}$. Then, following Ref.~\citep{Gour-Scandolo-resource},
we can express Eq.~\eqref{eq:primalppt} in terms of Choi matrices,
finding the new optimization problem
\[
d_{\mathrm{PPT}}\left(\mathcal{N}_{AB}\to\mathcal{M}_{A'B'}\right)=\frac{1}{\left|A_{0}'B_{0}'\right|}\min\mathrm{Tr}\left[\alpha_{A'B'}\right]
\]
subject to
\begin{align*}
 & \alpha_{A'B'}\geq0\\
 & \alpha_{A_{0}'B_{0}'}=\mathrm{Tr}\left[\alpha_{A_{0}'B_{0}'}\right]u_{A_{0}'B_{0}'}\\
 & \alpha\geq\mathrm{Tr}_{AB}\left[\mathbf{J}_{ABA'B'}\left(\left(J_{AB}^{\mathcal{N}}\right)^{T}\otimes I_{A'B'}\right)\right]-J_{A'B'}^{\mathcal{M}}\\
 & \mathbf{J}_{ABA'B'}\geq0\\
 & \mathbf{J}_{ABA_{0}'B_{0}'}=\mathbf{J}_{A_{0}B_{0}A_{0}'B_{0}'}\otimes u_{A_{1}B_{1}}\\
 & \mathbf{J}_{A_{1}B_{1}A_{0}'B_{0}'}=I_{A_{1}B_{1}A_{0}'B_{0}'}\\
 & \mathbf{J}_{ABA'B'}^{T_{BB'}}\geq0.
\end{align*}
Clearly, the above optimization can be solved efficiently and algorithmically
with an SDP. We can also express it in its dual form following Ref.~\citep{Gour-Scandolo-resource}:
\begin{align}
 & d_{\mathrm{PPT}}\left(\mathcal{N}_{AB}\to\mathcal{M}_{A'B'}\right)\nonumber \\
 & =\max\left\{ t\left|A_{1}B_{1}A_{0}'B_{0}'\right|-\mathrm{Tr}\left[\zeta_{A'B'}J_{A'B'}^{\mathcal{M}}\right]\right\} ,\label{eq:dual PPT 1}
\end{align}
subject to
\begin{align}
 & \left(J_{AB}^{\mathcal{N}}\right)^{T}\otimes\zeta_{A'B'}-tI_{ABA'B'}\in\mathfrak{J}_{ABA'B'}^{*}\nonumber \\
 & 0\leq\zeta_{A'B'}\leq\eta_{A_{0}'B_{0}'}\otimes I_{A_{1}'B_{1}'}\nonumber \\
 & \mathrm{Tr}\left[\eta_{A_{0}'B_{0}'}\right]=1,\label{eq:dual PPT 1'}
\end{align}
where $\mathfrak{J}_{ABA'B'}^{*}$ is defined in Eq.~\eqref{eq:dual cone PPT}.
We want to show that this dual problem is an SDP as well. To this
end, from Eq.~\eqref{eq:form}, define
\begin{align*}
 & \beta_{ABA'_{0}B'_{0}}\\
 & :=\frac{1}{\left|A_{0}B_{0}\right|}\left(tI_{ABA'_{0}B'_{0}}+Y_{ABA_{0}'B_{0}'}+I_{A_{0}B_{0}}\otimes Z_{A_{1}B_{1}A_{0}'B_{0}'}\right),
\end{align*}
where, like in Eq.~\eqref{eq:form}, $Y$ is a Hermitian matrix such
that $Y_{AB}=0$, and $Z$ is a Hermitian matrix such that $\mathrm{Tr}\left[Z_{A_{1}B_{1}A_{0}'B_{0}'}\right]=0$.
In this way, by the definition of $\beta_{ABA'_{0}B'_{0}}$, and recalling
Eq.~\eqref{eq:form}, we can rewrite Eqs.~\eqref{eq:dual PPT 1}
and \eqref{eq:dual PPT 1'} as
\begin{align*}
 & d_{\mathrm{PPT}}\left(\mathcal{N}_{AB}\to\mathcal{M}_{A'B'}\right)\\
 & =\max\left\{ \mathrm{Tr}\left[\beta_{ABA_{0}'B_{0}'}\right]-\mathrm{Tr}\left[\zeta_{A'B'}J_{A'B'}^{\mathcal{M}}\right]\right\} ,
\end{align*}
subject to
\begin{align*}
 & \beta\in\mathrm{Herm}\left(ABA_{0}'B_{0}'\right)\\
 & \beta_{AB}=u_{A_{0}B_{0}}\otimes\beta_{A_{1}B_{1}}\\
 & 0\leq\zeta_{A'B'}\leq\eta_{A_{0}'B_{0}'}\otimes I_{A_{1}'B_{1}'}\\
 & \mathrm{Tr}\left[\eta_{A_{0}'B_{0}'}\right]=1\\
 & \left(\left(J_{AB}^{\mathcal{N}}\right)^{T}\otimes\zeta_{A'B'}-\left|A_{0}B_{0}\right|\beta\otimes I_{B_{1}'A_{1}'}-P_{ABA'B'}\right)^{T_{BB'}}\geq0\\
 & P_{ABA'B'}\geq0.
\end{align*}
Hence, the computation of $d_{\mathrm{PPT}}\left(\mathcal{N}_{AB}\to\mathcal{M}_{A'B'}\right)$
in the resource theory of NPT entanglement is an SDP optimization
problem. We point out that if we considered restricted PPT superchannels,
instead, the condition that $\Theta$ is free would be expressed as
the condition that the CPTP map $\mathcal{Q}^{\Theta}$ has a decomposition
into pre- and post-processing that are both PPT channels, like in
Eq.~\eqref{q}. This condition appears to be very cumbersome, and
it is not clear if the determination of whether or not $\Theta$ has
the form~\eqref{q} can be solved with an SDP.

\subsection{NPT entanglement measures\label{subsec:NPT-entanglement-measures}}

In the entanglement theory for static resources, functions that behave
monotonically under PPT operations, also behave monotonically under
LOCC operations, as LOCC is a subset of PPT. Hence, any NPT entanglement
measure is also an LOCC entanglement measure. The advantage of some
of the NPT entanglement measures is that they can be computed with
SDPs (see e.g.\ the family of measures discussed in Ref.~\citep{WW19}).
In this section we study a few of these measures.

\subsubsection{Negativity and logarithmic negativity of bipartite channels}

A well-known NPT entanglement measure is the negativity \citep{VW2002}.
It is defined on a bipartite quantum state $\rho\in\mathfrak{D}\left(A_{0}B_{0}\right)$
as 
\[
N\left(\rho_{A_{0}B_{0}}\right)=\frac{\left\Vert \mathcal{T}_{B_{0}}\left(\rho_{A_{0}B_{0}}\right)\right\Vert _{1}-1}{2}.
\]
The generalization of the negativity to bipartite channels can be
done by replacing the input bipartite state $\rho\in\mathfrak{D}\left(A_{0}B_{0}\right)$
with input bipartite channel $\mathcal{N}\in\mathrm{CPTP}\left(A_{0}B_{0}\to A_{1}B_{1}\right)$,
the trace norm with the diamond norm, and the transpose map $\mathcal{T}_{B_{0}}$
with the transpose supermap $\Upsilon_{B}$. The negativity of the
bipartite channel is therefore defined as
\[
N\left(\mathcal{N}_{AB}\right)=\frac{\left\Vert \Upsilon_{B}\left[\mathcal{N}_{AB}\right]\right\Vert _{\diamond}-1}{2}.
\]
 Furthermore, the logarithmic negativity is defined as
\[
LN\left(\mathcal{N}_{AB}\right)=\log_{2}\left\Vert \Upsilon_{B}\left[\mathcal{N}_{AB}\right]\right\Vert _{\diamond}.
\]

To show that the above quantities are indeed good generalizations
of the negativity and logarithmic negativity to bipartite channels,
we show that they vanish on PPT bipartite channels, and that they
behave monotonically under PPT superchannels. They vanish on PPT bipartite
channels because if $\mathcal{N}_{AB}$ is PPT then $\Upsilon_{B}\left[\mathcal{N}_{AB}\right]$
is a quantum channel so its diamond norm is 1. To show the monotonicity
property, let $\Theta\in\mathrm{PPT}\left(AB\to A'B'\right)$ and
observe that
\begin{align*}
\left\Vert \Upsilon_{B'}\circ\Theta\left[\mathcal{N}_{AB}\right]\right\Vert _{\diamond} & =\left\Vert \left(\Theta\left[\mathcal{N}_{AB}\right]\right)^{\Gamma}\right\Vert _{\diamond}\\
 & =\left\Vert \Theta^{\Gamma}\left[\mathcal{N}_{AB}^{\Gamma}\right]\right\Vert _{\diamond}\\
 & \leq\left\Vert \Upsilon_{B}\left[\mathcal{N}_{AB}\right]\right\Vert _{\diamond},
\end{align*}
where in the first equality we used the definition of the partial
transpose of a channel, in the second proposition~\ref{proppt},
and finally the inequality follows from the fact that $\Theta^{\Gamma}$
is a superchannel because $\Theta$ is a PPT superchannel, and the
fact that the diamond norm is contractive under superchannels \citep{Gour2018}.
Therefore, since both the negativity and the logarithmic negativity
are increasing functions of $\left\Vert \Upsilon_{B}\left[\mathcal{N}_{AB}\right]\right\Vert _{\diamond}$,
we conclude that they are non-increasing under PPT superchannels.

\subsubsection{A complete set of computationally manageable measures of bipartite
NPT dynamical entanglement}

We can use the same technique as above to generalize other measures
of NPT static entanglement to NPT dynamical entanglement (see e.g.\ Ref.~\citep{VW2002}).
Now we focus on the complete family of measures introduced in Ref.~\citep{Gour-Scandolo-resource}.
In the case of NPT entanglement, for any bipartite channel $\mathcal{P}_{A'B'}\in\mathrm{CPTP}\left(A_{0}'B_{0}'\to A_{1}'B_{1}'\right)$,
using the results in Ref.~\citep{Gour-Scandolo-resource}, we can
write 
\[
f_{\mathcal{P}}\left(\mathcal{N}_{AB}\right)=\max_{\mathbf{J}\in\mathfrak{J}}\mathrm{Tr}\left[\mathbf{J}_{ABA'B'}\left(\left(J_{AB}^{\mathcal{N}}\right)^{T}\otimes J_{A'B'}^{\mathcal{P}}\right)\right]
\]
for every quantum channel $\mathcal{N}_{AB}$, where $\mathfrak{J}$
is the set of Choi matrices of PPT superchannels (note that it is
compact and convex). In other words, $\mathbf{J}_{ABA'B'}$ is subject
to the following constraints: 
\begin{enumerate}
\item $\mathbf{J}_{ABA'B'}\geq0$; $\mathbf{J}_{ABA'_{0}B_{0}'}=\mathbf{J}_{A_{0}B_{0}A_{0}'B_{0}'}\otimes u_{A_{1}B_{1}}$;
$\mathbf{J}_{A_{1}B_{1}A_{0}'B_{0}'}=I_{A_{1}B_{1}A_{0}'B_{0}'}$;
\item $\mathbf{J}_{ABA'B'}^{T_{BB'}}\geq0$.
\end{enumerate}
The first group of conditions above ensures that $\mathbf{J}_{ABA'B'}$
is the Choi matrix of a superchannel in $\mathfrak{S}\left(AB\to A'B'\right)$;
the second condition guarantees that the superchannel is free, i.e.\ PPT.
A key observation about the above optimization problem is that it
is an SDP. As noted in Ref.~\citep{Gour-Scandolo-resource}, the
family of convex functions $\left\{ f_{\mathcal{P}}\right\} $, indexed
by all $\mathcal{P}\in\mathrm{CPTP}\left(A_{0}'B_{0}'\to A_{1}'B_{1}'\right)$,
is complete, in the sense that there exists a PPT superchannel converting
a bipartite channel $\mathcal{N}\in\mathrm{CPTP}\left(A_{0}B_{0}\to A_{1}B_{1}\right)$
into another bipartite channel $\mathcal{E}\in\mathrm{CPTP}\left(A_{0}'B_{0}'\to A_{1}'B_{1}'\right)$
if and only if 
\begin{equation}
f_{\mathcal{P}}\left(\mathcal{N}_{AB}\right)\geq f_{\mathcal{P}}\left(\mathcal{E}_{A'B'}\right)\label{formmonotone}
\end{equation}
for every $\mathcal{P}\in\mathrm{CPTP}\left(A_{0}'B_{0}'\to A_{1}'B_{1}'\right)$.

One may argue that the above condition cannot be checked efficiently,
as it involves an (uncountably) infinite number of measures of dynamical
entanglement, labeled by \emph{all} quantum channels $\mathcal{P}$.
However, we have another way to determine whether two bipartite quantum
channels $\mathcal{N}_{AB}$ and $\mathcal{E}_{A'B'}$ can be interconverted
by PPT superchannels, which is to compute the conversion distance
$d_{\mathrm{PPT}}\left(\mathcal{N}_{AB}\to\mathcal{E}_{A'B'}\right)$:
$d_{\mathrm{PPT}}\left(\mathcal{N}_{AB}\to\mathcal{E}_{A'B'}\right)=0$
if and only if $\mathcal{N}_{AB}$ can be converted (exactly) into
$\mathcal{E}_{A'B'}$ by PPT superchannels. In section~\ref{subsec:Single-shot-interconversions}
we showed that this can be done efficiently with an SDP.

Why do we consider this family of dynamical entanglement measures,
then? Their significance is that they completely characterize the
NPT entanglement of a \emph{single} bipartite channel, whereas the
computation of $d_{\mathrm{PPT}}\left(\mathcal{N}_{AB}\to\mathcal{E}_{A'B'}\right)$
requires to know \emph{both} of its inputs $\mathcal{N}$ and $\mathcal{E}$,
i.e.\ also the target channel. Hence, Eq.~\eqref{formmonotone}
demonstrates that the convertibility can be expressed in a monotonic
form, similarly to Vidal's monotones \citep{Nielsen,Vidal,Jonathan}
in the theory of pure-state bipartite entanglement.
\begin{rem}
If we want measures of NPT dynamical entanglement that vanish on PPT
channels, we can consider the measures $G_{\mathcal{P}}\left(\mathcal{N}_{AB}\right)=f_{\mathcal{P}}\left(\mathcal{N}_{AB}\right)-\max_{\mathcal{M}}\mathrm{Tr}\left[J_{A'B'}^{\mathcal{M}}J_{A'B'}^{\mathcal{P}}\right]$.
Here $\mathcal{M}$ ranges over all PPT channels (again, a compact
and convex set).
\end{rem}

\subsubsection{The max-logarithmic negativity}

In Ref.~\citep{WW18} the authors considered a measure of NPT entanglement,
which they called the $\kappa$-entanglement. For bipartite states,
it is defined as
\begin{align*}
 & E_{\kappa}\left(\rho_{AB}\right)\\
 & =\log_{2}\inf\left\lbrace \mathrm{Tr}\left[S_{AB}\right]:-S_{AB}^{T_{B}}\leq\rho_{AB}^{T_{B}}\leq S_{AB}^{T_{B}}\;;\;S_{AB}\geq0\right\rbrace ,
\end{align*}
and for one-way channels $\mathcal{E}_{A_{0}\rightarrow B_{1}}$ as
\begin{align*}
 & E_{\kappa}\left(\mathcal{E}_{A_{0}\rightarrow B_{1}}\right)\\
 & =\log_{2}\inf\left\lbrace \left\Vert J_{A_{0}}^{\mathcal{Q}}\right\Vert _{\infty}:-\mathcal{Q}^{\Gamma}\leq\mathcal{E}_{A_{0}\rightarrow B_{1}}^{\Gamma}\leq\mathcal{Q}^{\Gamma};\,\mathcal{Q}\geq0\right\rbrace .
\end{align*}
The significance of this measure is that it has an operational interpretation
as the exact asymptotic cost under PPT operations. Here we introduce
the max-logarithmic negativity (MLN) (see also Ref.~\citep{WW19}),
which has a similar operational interpretation, and is a generalization
of the $\kappa$-entanglement to bipartite channels. However, as we
will see, for bipartite channels, there are two possible generalizations
of the quantity given in Ref.~\citep{WW18}, and we define the MLN
to be the maximum of the two. Explicitly, the MLN is defined as 
\[
LN_{\max}\left(\mathcal{N}_{AB}\right)=\max\left\{ LN_{\max}^{\left(0\right)}\left(\mathcal{N}_{AB}\right),LN_{\max}^{\left(1\right)}\left(\mathcal{N}_{AB}\right)\right\} ,
\]
where
\begin{align*}
 & LN_{\max}^{\left(0\right)}\left(\mathcal{N}_{AB}\right)\\
 & =\log_{2}\inf\left\{ \left\Vert J_{A_{0}B_{0}}^{\mathcal{P}}\right\Vert _{\infty}:-\mathcal{P}_{AB}^{\Gamma}\leq\mathcal{N}_{AB}^{\Gamma}\leq\mathcal{P}_{AB}^{\Gamma};\,\mathcal{P}\geq0\right\} 
\end{align*}
and
\begin{align*}
 & LN_{\max}^{\left(1\right)}\left(\mathcal{N}_{AB}\right)\\
 & =\log_{2}\inf\left\{ \left\Vert J_{A_{0}B_{0}}^{\mathcal{P}^{\Gamma}}\right\Vert _{\infty}:-\mathcal{P}_{AB}^{\Gamma}\leq\mathcal{N}_{AB}^{\Gamma}\leq\mathcal{P}_{AB}^{\Gamma};\thinspace\mathcal{P}\geq0\right\} .
\end{align*}
The above quantities can be computed with SDP. In particular, they
have a dual, giving an alternative expression for them:
\begin{equation}
LN_{\max}^{\left(0\right)}\left(\mathcal{N}_{AB}\right)=\log_{2}\sup\left\{ \mathrm{Tr}\left[J_{AB}^{\mathcal{N}}\left(V_{AB}-W_{AB}\right)\right]\right\} \label{eq:MLN0}
\end{equation}
subject to
\begin{align*}
 & V_{AB}+W_{AB}\leq\rho_{A_{0}B_{0}}\otimes I_{A_{1}B_{1}}\\
 & \rho\in\mathfrak{D}\left(A_{0}B_{0}\right)\\
 & V_{AB}\geq0\\
 & W_{AB}\geq0
\end{align*}
and
\[
LN_{\max}^{\left(1\right)}\left(\mathcal{N}_{AB}\right)=\log_{2}\sup\left\{ \mathrm{Tr}\left[J_{AB}^{\mathcal{N}}\left(V_{AB}-W_{AB}\right)\right]\right\} 
\]
subject to
\begin{align*}
 & V_{AB}+W_{AB}\leq\rho_{A_{0}B_{0}}^{T_{B_{0}}}\otimes I_{A_{1}B_{1}}\\
 & \rho\in\mathfrak{D}\left(A_{0}B_{0}\right)\\
 & V_{AB}\geq0\\
 & W_{AB}\geq0.
\end{align*}

\noindent These expressions can be obtained with the usual SDP techniques.
By Sion's minimax theorem, we can swap the order of the infimum and
the maximum in the definition of the MLN, so
\[
LN_{\max}\left(\mathcal{N}_{AB}\right)=\log_{2}\inf\left\{ \max\left\{ \left\Vert J_{A_{0}B_{0}}^{\mathcal{P}}\right\Vert _{\infty},\left\Vert J_{A_{0}B_{0}}^{\mathcal{P}^{\Gamma}}\right\Vert _{\infty}\right\} \right\} 
\]
subject to $-\mathcal{P}_{AB}^{\Gamma}\leq\mathcal{N}_{AB}^{\Gamma}\leq\mathcal{P}_{AB}^{\Gamma}$
and $\mathcal{P}_{AB}\geq0$. The MLN is defined here in terms of
the bipartite map $\mathcal{P}\in\mathrm{CP}\left(A_{0}B_{0}\to A_{1}B_{1}\right)$.
Denoting its Choi matrix by $P_{AB}\in\mathrm{Herm}\left(AB\right)$,
we can express the MLN as
\begin{widetext}
\noindent 
\begin{equation}
LN_{\max}\left(\mathcal{N}_{AB}\right)=\log_{2}\inf\left\{ \max\left\{ \left\Vert P_{A_{0}B_{0}}\right\Vert _{\infty},\left\Vert P_{A_{0}B_{0}}^{T_{B_{0}}}\right\Vert _{\infty}\right\} :-P_{AB}^{T_{B}}\leq\left(J_{AB}^{\mathcal{N}}\right)^{T_{B}}\leq P_{AB}^{T_{B}},\,P_{AB}\geq0\right\} .\label{eq:MLN Choi}
\end{equation}
\end{widetext}

\noindent Now we show here that many properties of the $\kappa$-entanglement
discussed in Ref.~\citep{WW18} carry over to the max-logarithmic
negativity, including the operational meaning of single-shot exact
entanglement cost (cf.\ section~\ref{subsec:NPT cost}). Moreover,
we will see that the max-logarithmic negativity is monotonic under
PPT superchannels, which we believe is a strictly larger set than
the set discussed in Ref.~\citep{WW18}, that is the set of restricted
PPT superchannels, which can be implemented by PPT pre- and post-processing,
like in Fig.~\ref{fig:PPT}.

\subsubsection{Properties of the max-logarithmic negativity}

Here we list a few key properties of the MLN. The first two show that
it reduces to $E_{\kappa}$ introduced in Ref.~\citep{WW18}.

\paragraph{Reduction to $\kappa$-entanglement for states}

A bipartite state can be viewed as a bipartite channel $\mathcal{N}_{AB}$
with $\left|A_{0}\right|=\left|B_{0}\right|=1$. In this case, in
Eq.\ \eqref{eq:MLN Choi}, $P_{A_{0}B_{0}}=P_{A_{0}B_{0}}^{T_{B}}=\mathrm{Tr}\left[P_{A_{1}B_{1}}\right]$.
Recalling that $P_{A_{1}B_{1}}\geq0$, we have $LN_{\max}\left(\mathcal{N}_{AB}\right)=\log_{2}\inf\left\lbrace \mathrm{Tr}\left[P_{A_{1}B_{1}}\right]\right\rbrace $,
subject to $-P_{A_{1}B_{1}}^{T_{B_{1}}}\leq\rho_{A_{1}B_{1}}^{T_{B_{1}}}\leq P_{A_{1}B_{1}}^{T_{B_{1}}}$
and $P_{A_{1}B_{1}}\geq0$. This expression coincides with $E_{\kappa}\left(\rho_{A_{1}B_{1}}\right)$.

\paragraph{Reduction to $\kappa$-entanglement for\ one-way channels}

For $\left|B_{0}\right|=\left|A_{1}\right|=1$, the channel $\mathcal{N}_{AB}$
can be viewed as a map $\mathcal{E}\in\mathrm{CPTP}\left(A_{0}\to B_{1}\right)$
and 
\[
LN_{\max}\left(\mathcal{N}_{AB}\right)=E_{\kappa}\left(\mathcal{E}_{A_{0}\to B_{1}}\right).
\]

\paragraph{Monotonicity}

Let $\mathcal{N}\in\mathrm{CPTP}\left(A_{0}B_{0}\to A_{1}B_{1}\right)$
be a bipartite channel, and let $\Theta\in\mathrm{PPT}\left(AB\to A'B'\right)$
be a PPT superchannel. Then, 
\[
LN_{\max}\left(\Theta_{AB\to A'B'}\left[\mathcal{N}_{AB}\right]\right)\leq LN_{\max}\left(\mathcal{N}_{AB}\right).
\]

\begin{proof}
Recall that for any superchannel $\Theta$ and bipartite channel $\mathcal{N}_{AB}$
we have $\left(\Theta\left[\mathcal{N}_{AB}\right]\right)^{\Gamma}=\Theta^{\Gamma}\left[\mathcal{N}_{AB}^{\Gamma}\right]$
(see proposition~\ref{proppt}). Hence, from the expression
\begin{align*}
 & LN_{\max}\left(\Theta\left[\mathcal{N}_{AB}\right]\right)=\\
 & =\log_{2}\inf\left\{ \max\left\{ \left\Vert J_{A_{0}'B_{0}'}^{\mathcal{R}}\right\Vert _{\infty},\left\Vert \left(J_{A_{0}'B_{0}'}^{\mathcal{R}}\right)^{T_{B_{0}}}\right\Vert _{\infty}\right\} \right\} ,
\end{align*}
subject to $-\mathcal{R}_{A'B'}^{\Gamma}\leq\Theta^{\Gamma}\left[\mathcal{N}_{AB}^{\Gamma}\right]\leq\mathcal{R}_{A'B'}^{\Gamma}$
and $\mathcal{R}_{A'B'}\geq0$, we can definitely write
\begin{align}
 & LN_{\max}\left(\Theta\left[\mathcal{N}_{AB}\right]\right)\nonumber \\
 & \leq\log_{2}\inf\left\{ \max\left\{ \left\Vert J_{A_{0}'B_{0}'}^{\Theta\left[\mathcal{P}\right]}\right\Vert _{\infty},\left\Vert J_{A_{0}'B_{0}'}^{\Theta^{\Gamma}\left[\mathcal{P}^{\Gamma}\right]}\right\Vert _{\infty}\right\} \right\} ,\label{eq:ineq1}
\end{align}
where $-\left(\Theta\left[\mathcal{P}_{AB}\right]\right)^{\Gamma}\leq\Theta^{\Gamma}\left[\mathcal{N}_{AB}^{\Gamma}\right]\leq\left(\Theta\left[\mathcal{P}_{AB}\right]\right)^{\Gamma}$
and $\mathcal{P}_{AB}\geq0$. Indeed, this inequality follows because
we have restricted $\mathcal{R}_{A'B'}$ to CP maps of the form $\Theta_{AB\to A'B'}\left[\mathcal{P}_{AB}\right]$,
where $\mathcal{P}\in\mathrm{CP}\left(A_{0}B_{0}\to A_{1}B_{1}\right)$,
and $\Theta$ is a PPT superchannel. Next, observe that, by the properties
of the Choi matrices of superchannels, 
\begin{align*}
 & J_{A_{0}'B_{0}'}^{\Theta\left[\mathcal{P}\right]}\\
 & =\mathrm{Tr}_{ABA_{1}'B_{1}'}\left[\mathbf{J}_{ABA'B'}^{\Theta}\left(\left(J_{AB}^{\mathcal{P}}\right)^{T}\otimes I_{A'B'}\right)\right]\\
 & =\mathrm{Tr}_{AB}\left[\mathbf{J}_{ABA'_{0}B'_{0}}^{\Theta}\left(\left(J_{AB}^{\mathcal{P}}\right)^{T}\otimes I_{A'_{0}B'_{0}}\right)\right]\\
 & =\mathrm{Tr}_{AB}\left[\left(\mathbf{J}_{A_{0}B_{0}A'_{0}B'_{0}}^{\Theta}\otimes u_{A_{1}B_{1}}\right)\left(\left(J_{AB}^{\mathcal{P}}\right)^{T}\otimes I_{A'_{0}B'_{0}}\right)\right]\\
 & =\frac{1}{\left|A_{1}B_{1}\right|}\mathrm{Tr}_{A_{0}B_{0}}\left[\mathbf{J}_{A_{0}B_{0}A'_{0}B'_{0}}^{\Theta}\left(\left(J_{A_{0}B_{0}}^{\mathcal{P}}\right)^{T}\otimes I_{A'_{0}B'_{0}}\right)\right]\\
 & =:\mathcal{D}_{A_{0}B_{0}\to A_{0}'B_{0}'}\left(J_{A_{0}B_{0}}^{\mathcal{P}}\right),
\end{align*}
where $\mathcal{D}$ is a CP map whose Choi matrix is given by $J_{A_{0}B_{0}A'_{0}B'_{0}}^{\mathcal{D}}:=\frac{1}{\left|A_{1}B_{1}\right|}\mathbf{J}_{A_{0}B_{0}A'_{0}B'_{0}}^{\Theta}$.
The fact that $\Theta$ is a superchannel ensures that $J_{A'_{0}B'_{0}}^{\mathcal{D}}=I_{A_{0}'B_{0}'}$,
so $\mathcal{D}$ is \emph{unital}. Now, the operator norm is contractive
under CP unital maps, thus we conclude that $\left\Vert J_{A_{0}'B_{0}'}^{\Theta\left[\mathcal{P}\right]}\right\Vert _{\infty}\leq\left\Vert J_{A_{0}B_{0}}^{\mathcal{P}}\right\Vert _{\infty}$.
Similarly, since $\Theta^{\Gamma}$ is also a superchannel, we have
$\left\Vert J_{A_{0}'B_{0}'}^{\Theta^{\Gamma}\left[\mathcal{P}^{\Gamma}\right]}\right\Vert _{\infty}\leq\left\Vert J_{A_{0}B_{0}}^{\mathcal{P}^{\Gamma}}\right\Vert _{\infty}$.
Therefore, recalling Eq.~\eqref{eq:ineq1},
\begin{align*}
 & LN_{\max}\left(\Theta\left[\mathcal{N}_{AB}\right]\right)\\
 & \leq\log_{2}\inf\left\{ \max\left\{ \left\Vert J_{A_{0}B_{0}}^{\mathcal{P}}\right\Vert _{\infty},\left\Vert J_{A_{0}B_{0}}^{\mathcal{P}^{\Gamma}}\right\Vert _{\infty}\right\} \right\} ,
\end{align*}
subject to $-\mathcal{P}_{AB}^{\Gamma}\leq\mathcal{N}_{AB}^{\Gamma}\leq\mathcal{P}_{AB}^{\Gamma}$
and $\mathcal{P}_{AB}\geq0$, where we have used the fact that $\Theta^{\Gamma}$
is a superchannel, so $-\mathcal{P}_{AB}^{\Gamma}\leq\mathcal{N}_{AB}^{\Gamma}\leq\mathcal{P}_{AB}^{\Gamma}$
implies $-\Theta^{\Gamma}\left[\mathcal{P}_{AB}^{\Gamma}\right]\leq\Theta^{\Gamma}\left[\mathcal{N}_{AB}^{\Gamma}\right]\leq\Theta^{\Gamma}\left[\mathcal{P}_{AB}^{\Gamma}\right]$.
But the final expression we have obtained is precisely $LN_{\max}\left(\mathcal{N}_{AB}\right)$.
This completes the proof.
\end{proof}

\paragraph{Additivity}

\noindent For any two bipartite channels $\mathcal{N}\in\mathrm{CPTP}\left(A_{0}B_{0}\to A_{1}B_{1}\right)$
and $\mathcal{M}\in\mathrm{CPTP}\left(A_{0}'B_{0}'\to A_{1}'B_{1}'\right)$
we have
\[
LN_{\max}\left(\mathcal{N}_{AB}\otimes\mathcal{M}_{A'B'}\right)=LN_{\max}\left(\mathcal{N}_{AB}\right)+LN_{\max}\left(\mathcal{M}_{A'B'}\right).
\]
In particular, note that this property implies that, for all positive
integers $n$, 
\[
LN_{\max}\left(\mathcal{N}_{AB}^{\otimes n}\right)=nLN_{\max}\left(\mathcal{N}_{AB}\right).
\]

\begin{proof}
\noindent The proof follows from the facts
\[
LN_{\max}^{\left(0\right)}\left(\mathcal{N}_{AB}\otimes\mathcal{M}_{A'B'}\right)=LN_{\max}^{\left(0\right)}\left(\mathcal{N}_{AB}\right)+LN_{\max}^{\left(0\right)}\left(\mathcal{M}_{A'B'}\right)
\]
and
\[
LN_{\max}^{\left(1\right)}\left(\mathcal{N}_{AB}\otimes\mathcal{M}_{A'B'}\right)=LN_{\max}^{\left(1\right)}\left(\mathcal{N}_{AB}\right)+LN_{\max}^{\left(1\right)}\left(\mathcal{M}_{A'B'}\right),
\]
which can be proved with the same techniques as in Ref.~\citep{WW18},
with the primal problem being used to show the $\leq$ side, and the
dual problem used to show the $\geq$ side. For completeness, we include
the proof in appendix~\ref{app:MLN}.
\end{proof}

\subsection{Exact asymptotic NPT entanglement cost\label{subsec:NPT cost}}

In this section we generalize the operational interpretation given
in Ref.~\citep{WW18} of $E_{\kappa}$ to\emph{ generic} bipartite
channels. This generalization will be fairly straightforward, and
the ultimate reason for this is that we do not consider only restricted
PPT superchannels, but rather generic PPT superchannels (see section~\ref{subsec:PPT-supermaps}).
This makes the conditions involved closer to the case of bipartite
states.

Following the same argument in section~\ref{sec:entanglement}, in
NPT entanglement theory, the maximally entangled state $\phi_{A_{1}'B_{1}'}^{+}$,
if suitably normalized, where $\left|A_{1}'\right|=\left|B_{1}'\right|=m$,
can be regarded as the maximal resource: two maximally entangled states
$\phi^{+}$ are equivalent to the swap. This state can also be viewed
as the channel $\Phi_{A'B'}^{+}$ (cf.~section~\ref{subsec:Supermaps})
with trivial inputs $A'_{0}$ and $B'_{0}$. With this in mind, the
single-shot exact resource cost to simulate a channel takes the form
\begin{equation}
E_{\mathrm{PPT}}^{\left(1\right)}\left(\mathcal{N}_{AB}\right):=\inf\left\{ \log_{2}m:\mathcal{N}_{AB}=\Theta_{A'B'\to AB}\left[\Phi_{A'B'}^{+}\right]\right\} ,\label{ee1}
\end{equation}
where the infimum is over all PPT superchannels $\Theta$, $\left|A_{0}'\right|=\left|B_{0}'\right|=1$,
and $\left|A_{1}'\right|=\left|B_{1}'\right|=m$.

The following two lemmas will be used in the proof of the main theorem
of this section (theorem~\ref{thm:oi}) that provides an operational
meaning to the MLN. The first lemma provides an alternative expression
for $E_{\mathrm{PPT}}^{\left(1\right)}\left(\mathcal{N}_{AB}\right)$. 
\begin{lem}
Let $\mathcal{N}\in\mathrm{CPTP}\left(A_{0}B_{0}\to A_{1}B_{1}\right)$
be a bipartite channel. Then,
\begin{align}
 & E_{\mathrm{PPT}}^{\left(1\right)}\left(\mathcal{N}_{AB}\right)\nonumber \\
 & =\inf\left\{ \log_{2}m:-\left(m-1\right)\mathcal{R}_{AB}^{\Gamma}\leq\mathcal{N}_{AB}^{\Gamma}\leq\left(m+1\right)\mathcal{R}_{AB}^{\Gamma}\right\} ,\label{eq:bb}
\end{align}
where \textup{$\mathcal{R}\in\mathrm{CPTP}\left(A_{0}B_{0}\to A_{1}B_{1}\right)$
and $m\in\mathbb{Z}_{+}$.}
\end{lem}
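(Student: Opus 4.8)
The plan is to prove the two inequalities separately, generalizing the argument of Ref.~\citep{WW18} for the $\kappa$-entanglement. The central reformulation is to pass everything through the partial transpose: by proposition~\ref{proppt}, the simulation equation $\mathcal{N}_{AB}=\Theta[\Phi_{A'B'}^{+}]$ is equivalent to $\mathcal{N}_{AB}^{\Gamma}=\Theta^{\Gamma}[(\Phi_{A'B'}^{+})^{\Gamma}]$, and since $A_{0}'B_{0}'$ are trivial the map $(\Phi^{+})^{\Gamma}$ reduces to the operator $(\phi_{m}^{+})^{T_{B_{1}'}}=\frac{1}{m}(\Pi_{+}-\Pi_{-})$, where $\Pi_{\pm}$ are the projectors onto the symmetric and antisymmetric subspaces of $A_{1}'B_{1}'$. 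I will use repeatedly that, by the preceding equivalence theorem, $\Theta$ is PPT iff $\Theta^{\Gamma}$ is a CP superchannel, so that $\Theta^{\Gamma}$ preserves the completely-positive ordering of maps.

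For the bound $\inf\{\cdots\}\le E_{\mathrm{PPT}}^{(1)}$ (necessity), I would start from a PPT superchannel $\Theta$ with $\mathcal{N}=\Theta[\Phi_{A'B'}^{+}]$ and $|A_{1}'|=|B_{1}'|=m$, and manufacture the required $\mathcal{R}$. The natural choice is $\mathcal{R}:=\Theta[\Psi]$, where $\Psi$ prepares the state $\sigma:=\frac{1}{m^{2}-1}(I_{A_{1}'B_{1}'}-\phi_{m}^{+})$; since $\Theta$ is a superchannel and $\sigma$ is a genuine state, $\mathcal{R}$ is automatically CPTP. A direct computation gives $\sigma^{T_{B_{1}'}}=\frac{1}{m(m+1)}\Pi_{+}+\frac{1}{m(m-1)}\Pi_{-}$, from which the numerical operator inequalities $-(m-1)\sigma^{T_{B_{1}'}}\le(\phi_{m}^{+})^{T_{B_{1}'}}\le(m+1)\sigma^{T_{B_{1}'}}$ follow by comparing eigenvalues on the two subspaces, each inequality being saturated on one of them. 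Applying the CP superchannel $\Theta^{\Gamma}$ preserves these inequalities, and proposition~\ref{proppt} turns $\Theta^{\Gamma}[(\Phi^{+})^{\Gamma}]$ into $\mathcal{N}^{\Gamma}$ and $\Theta^{\Gamma}[\Psi^{\Gamma}]$ into $\mathcal{R}^{\Gamma}$, yielding exactly $-(m-1)\mathcal{R}_{AB}^{\Gamma}\le\mathcal{N}_{AB}^{\Gamma}\le(m+1)\mathcal{R}_{AB}^{\Gamma}$.

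The reverse bound (sufficiency) is the main obstacle, since it requires building an explicit PPT superchannel from the sandwiching data. A naive ``measure in the symmetric/antisymmetric basis and reprepare'' superchannel is easily seen to fail: forcing $\Theta[\Phi^{+}]=\mathcal{N}$ while keeping $\Theta^{\Gamma}$ completely positive would demand $\mathcal{N}^{\Gamma}\ge0$, i.e.\ that $\mathcal{N}$ already be PPT. Instead, I would posit the Choi matrix directly,
\[
\mathbf{J}_{A'B'AB}^{\Theta}:=(I_{A_{1}'B_{1}'}-\phi_{m}^{+})\otimes J_{AB}^{\mathcal{R}}+\phi_{m}^{+}\otimes J_{AB}^{\mathcal{N}},
\]
and verify the four requirements. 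Positivity $\mathbf{J}^{\Theta}\ge0$ is immediate since all tensor factors are positive; the superchannel marginal conditions follow from $\mathcal{N},\mathcal{R}$ being trace-preserving, so that the $A_{1}B_{1}$-marginal collapses to $I_{A_{1}'B_{1}'A_{0}B_{0}}$; and $\Theta[\Phi^{+}]=\mathcal{N}$ follows because $\mathrm{Tr}_{A_{1}'B_{1}'}[(I-\phi_{m}^{+})\phi_{m}^{+}]=0$ while $\mathrm{Tr}_{A_{1}'B_{1}'}[\phi_{m}^{+}\phi_{m}^{+}]=1$.

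The crux is then the PPT condition. Taking the partial transpose over Bob's systems and using $(\phi_{m}^{+})^{T_{B_{1}'}}=\frac{1}{m}(\Pi_{+}-\Pi_{-})$ together with $(J^{\mathcal{R}})^{T_{B}}=J^{\mathcal{R}^{\Gamma}}$, the Choi matrix reorganizes as
\[
(\mathbf{J}_{A'B'AB}^{\Theta})^{T_{B'B}}=\Pi_{+}\otimes\tfrac{1}{m}\big(J_{AB}^{\mathcal{N}^{\Gamma}}+(m-1)J_{AB}^{\mathcal{R}^{\Gamma}}\big)+\Pi_{-}\otimes\tfrac{1}{m}\big((m+1)J_{AB}^{\mathcal{R}^{\Gamma}}-J_{AB}^{\mathcal{N}^{\Gamma}}\big).
\]
Here the two coefficient maps are completely positive precisely because of the two halves of the hypothesis $-(m-1)\mathcal{R}^{\Gamma}\le\mathcal{N}^{\Gamma}\le(m+1)\mathcal{R}^{\Gamma}$, so the whole operator is positive and $\Theta$ is PPT. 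Combining the two directions and taking infima over $m$ gives the claimed identity; the degenerate case $m=1$ (where $\Pi_{-}=0$ and $\sigma$ is undefined) reduces to the statement that $\mathcal{N}$ is PPT and is checked separately.
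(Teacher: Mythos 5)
Your proof is correct, and one half of it takes a genuinely different route from the paper's. In the direction $E^{(1)}_{\mathrm{PPT}}(\mathcal{N}_{AB})\le\inf\{\cdots\}$ you construct exactly the same measure-and-prepare superchannel as the paper, $\Theta[\mathcal{M}]=\mathcal{N}_{AB}\,\mathrm{Tr}[\Phi^{+}_{A'B'}\mathcal{M}]+\mathcal{R}_{AB}\,\mathrm{Tr}[(I_{A'B'}-\Phi^{+}_{A'B'})\mathcal{M}]$, only presented through its Choi matrix; where the paper certifies PPT-ness dynamically, by applying $\Theta^{\Gamma}$ to an arbitrary channel $\mathcal{P}_{RA'B'}$ with a reference system and checking that the resulting map in Eq.~\eqref{eq:pptt} is CPTP, you certify it statically via the criterion $\left(\mathbf{J}^{\Theta}\right)^{T_{BB'}}\ge0$ from the equivalence theorem, and your $\Pi^{\pm}$ decomposition of the partially transposed Choi matrix is precisely the Choi-level counterpart of Eq.~\eqref{eq:pptt}. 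The genuine divergence is in the converse direction: the paper composes $\Theta$ with the twirl $\mathcal{G}$ of Eq.~\eqref{twirl}, invokes the structure of $U\otimes\overline{U}$-invariant channels to force $\Omega=\Theta\circ\mathcal{G}$ into measure-and-prepare form with $\mathcal{R}=\tfrac{1}{m^{2}-1}\Theta[I-\Phi^{+}]$, and only then extracts Eq.~\eqref{ww}; you take the same $\mathcal{R}=\Theta[\Psi]$ but bypass the twirl entirely, proving the purely numerical sandwich $-(m-1)\sigma^{T_{B_{1}'}}\le\left(\phi_{m}^{+}\right)^{T_{B_{1}'}}\le(m+1)\sigma^{T_{B_{1}'}}$ by eigenvalue comparison on the symmetric and antisymmetric subspaces, and then transporting it through $\Theta^{\Gamma}$ (which preserves the CP order of maps because $\Theta$ is PPT) and identifying the images via proposition~\ref{proppt}. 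Your route is more elementary and self-contained, needing no representation-theoretic input about the twirl; the paper's route buys the structural by-product that the optimal simulating superchannel may be assumed isotropic. Two small caveats, neither load-bearing: your motivational aside that a $\{\Pi^{+},\Pi^{-}\}$ measure-and-reprepare scheme ``would demand $\mathcal{N}^{\Gamma}\ge0$'' is not actually substantiated (after partial transposition the operators $\tfrac{1}{2}(I\pm m\phi_{m}^{+})$ are no longer orthogonal projections, so no such clean obstruction drops out), and the deferred $m=1$ case, while indeed trivial, deserves the one-line observation that for $m=1$ feasibility of either side of Eq.~\eqref{eq:bb} is equivalent to $\mathcal{N}_{AB}$ being a PPT channel.
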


\begin{proof}
The proof follows similar lines to the one in Ref.~\citep{WW18},
but with states replaced by channels. We first prove that $E_{\mathrm{PPT}}^{\left(1\right)}\left(\mathcal{N}_{AB}\right)$
is less than or equal to the right-hand side of Eq.~\eqref{eq:bb}.
Let $m=\left|A'_{1}\right|=\left|B_{1}'\right|$ be a positive integer,
and let $\mathcal{R}_{AB}$ be a CPTP map satisfying 
\begin{equation}
-\left(m-1\right)\mathcal{R}_{AB}^{\Gamma}\leq\mathcal{N}_{AB}^{\Gamma}\leq\left(m+1\right)\mathcal{R}_{AB}^{\Gamma}.\label{ww}
\end{equation}
We need to show that there exists a PPT superchannel $\Theta$ as
in Eq.~\eqref{ee1} with the same $m$. To this end, define the superchannel
$\Theta\in\mathfrak{S}\left(A'B'\to AB\right)$ (with $\left|A_{1}'\right|=\left|B_{1}'\right|=m$
and $\left|A_{0}'\right|=\left|B_{0}'\right|=1$) on any CP map $\mathcal{M}_{A'B'}$
as
\begin{align*}
 & \Theta_{A'B'\to AB}\left[\mathcal{M}_{A'B'}\right]\\
 & :=\mathcal{N}_{AB}\mathrm{Tr}\left[\Phi_{A'B'}^{+}\mathcal{M}_{A'B'}\right]\\
 & +\mathcal{R}_{AB}\mathrm{Tr}\left[\left(I_{A'B'}-\Phi_{A'B'}^{+}\right)\mathcal{M}_{A'B'}\right],
\end{align*}
where we have used the fact that $\mathcal{M}_{A'B'}$ and $\Phi_{A'B'}^{+}$
can be viewed as matrices because their input dimensions are trivial,
so the traces above are well defined. For a simpler notation, set
$A'\equiv A'_{1}$ and $B'\equiv B'_{1}$.

Note that $\Theta$ above is indeed a superchannel, as it is CP, and
sends channels to channels \citep{Supermeasurements}. In addition,
it satisfies $\Theta\left[\Phi_{A'B'}^{+}\right]=\mathcal{N}_{AB}$.
We need to show that $\Theta^{\Gamma}=\Upsilon_{B}\circ\Theta\circ\Upsilon_{B'}$
is a superchannel too. For this purpose, let $R=\left(R_{0},R_{1}\right)$
be a reference system, and consider $\mathcal{P}_{RA'B'}\in\mathrm{CPTP}\left(R_{0}\to R_{1}A'B'\right)$,
and observe that
\begin{align*}
 & \Theta^{\Gamma}\left[\mathcal{P}_{RA'B'}\right]\\
 & =\mathcal{N}_{AB}^{\Gamma}\otimes\mathrm{Tr}_{A'B'}\left[\left(\phi_{A'B'}^{+}\right)^{T_{B'}}\mathcal{P}_{RA'B'}\right]\\
 & +\mathcal{R}_{AB}^{\Gamma}\otimes\mathrm{Tr}_{A'B'}\left[\left(I_{A'B'}-\phi_{A'B'}^{+}\right)^{T_{B'}}\mathcal{P}_{RA'B'}\right],
\end{align*}
where the partial trace above is understood as follows: for any matrix
$X\in\mathfrak{B}\left(R_{0}\right)$, the expression $\mathrm{Tr}_{A'B'}\left[\left(\phi_{A'B'}^{+}\right)^{T_{B'}}\mathcal{P}_{RA'B'}\right]$
is the map
\begin{align*}
 & \mathrm{Tr}_{A'B'}\left[\left(\phi_{A'B'}^{+}\right)^{T_{B'}}\mathcal{P}_{RA'B'}\right]\left(X_{R_{0}}\right)\\
 & :=\mathrm{Tr}_{A'B'}\left[\left(\phi_{A'B'}^{+}\right)^{T_{B'}}\mathcal{P}_{RA'B'}\left(X_{R_{0}}\right)\right].
\end{align*}

Recall that $\left(\phi_{A'B'}^{+}\right)^{T_{B'}}=\frac{1}{m}F_{A'B'}$,
where $F_{A'B'}$ is the unitary \noun{swap} (or flip) operator, and
the factor $\frac{1}{m}$ comes from the fact that here we are taking
$\phi_{A'B'}^{+}$ to be normalized. Therefore 
\[
\mathrm{Tr}_{A'B'}\left[\left(\phi_{A'B'}^{+}\right)^{T_{B'}}\mathcal{P}_{RA'B'}\right]=\frac{1}{m}\mathrm{Tr}_{A'B'}\left[F_{A'B'}\mathcal{P}_{RA'B'}\right],
\]
and
\begin{align*}
 & \mathrm{Tr}_{A'B'}\left[\left(I_{A'B'}-\phi_{A'B'}^{+}\right)^{T_{B'}}\mathcal{P}_{RA'B'}\right]\\
 & =\mathrm{Tr}_{A'B'}\left[\left(I_{A'B'}-\frac{1}{m}F_{A'B'}\right)\mathcal{P}_{RA'B'}\right].
\end{align*}
Following Ref.~\citep{WW18}, we define $\Pi_{A'B'}^{\pm}:=\frac{1}{2}\left(I_{A'B'}\pm F_{A'B'}\right)$
to be the orthogonal projections onto the symmetric and antisymmetric
subspaces respectively. Hence, substituting $\Pi_{A'B'}^{+}-\Pi_{A'B'}^{-}$
for $F_{A'B'}$, and $\Pi_{A'B'}^{+}+\Pi_{A'B'}^{-}$ for $I_{A'B'}$,
yields (cf.\ Eqs.~(68--73) in Ref.~\citep{WW18})
\begin{align}
 & \Theta^{\Gamma}\left[\mathcal{P}_{RA'B'}\right]\nonumber \\
 & =\frac{1}{m}\mathcal{N}_{AB}^{\Gamma}\otimes\mathrm{Tr}_{A'B'}\left[F_{A'B'}\mathcal{P}_{RA'B'}\right]\nonumber \\
 & +\mathcal{R}_{AB}^{\Gamma}\otimes\mathrm{Tr}_{A'B'}\left[\left(I_{A'B'}-\frac{1}{m}F_{A'B'}\right)\mathcal{P}_{RA'B'}\right]\nonumber \\
 & =\frac{1}{m}\left(\mathcal{N}_{AB}^{\Gamma}+\left(m-1\right)\mathcal{R}_{AB}^{\Gamma}\right)\otimes\mathrm{Tr}_{A'B'}\left[\Pi_{A'B'}^{+}\mathcal{P}_{RA'B'}\right]\nonumber \\
 & +\frac{1}{m}\left(\left(m+1\right)\mathcal{R}_{AB}^{\Gamma}-\mathcal{N}_{AB}^{\Gamma}\right)\otimes\mathrm{Tr}_{A'B'}\left[\Pi_{A'B'}^{-}\mathcal{P}_{RA'B'}\right].\label{eq:pptt}
\end{align}
By Eq.~\eqref{ww}, the expression on the right-hand side of the
equation above is a CPTP map. Hence, $\mathbbm{1}_{R}\otimes\Theta^{\Gamma}$
takes channels to channels; i.e.\ $\Theta^{\Gamma}$ is a superchannel,
so $\Theta$ is indeed a PPT superchannel. To summarize, we showed
that, for any integer $m$ for which there exists a channel $\mathcal{R}_{AB}$
that satisfies Eq.~\eqref{ww}, there exists a PPT superchannel $\Theta$
achieving $\Theta\left[\Phi_{A'B'}^{+}\right]=\mathcal{N}_{AB}$ with
$\left|A'_{1}\right|=\left|B'_{1}\right|=m$ (and $\left|A'_{0}\right|=\left|B'_{0}\right|=1$).
Hence, $E_{\mathrm{PPT}}^{\left(1\right)}\left(\mathcal{N}_{AB}\right)$
cannot be greater than the right-hand side of Eq.~\eqref{eq:bb}.
To complete the proof, we now prove the converse inequality; i.e.\ we
show that $E_{\mathrm{PPT}}^{\left(1\right)}\left(\mathcal{N}_{AB}\right)$
is greater than or equal to the right-hand side of Eq.~\eqref{eq:bb}.
Denote by $\mathcal{G}\in\mathrm{CPTP}\left(A'B'\to A'B'\right)$
the twirling channel of the form 
\begin{equation}
\mathcal{G}\left(\omega_{A'B'}\right)=\int\left(U_{A'}\otimes\overline{U}_{B'}\right)\omega_{A'B'}\left(U_{A'}\otimes\overline{U}_{B'}\right)^{\dagger}dU_{A'B'}\label{twirl}
\end{equation}
with respect to the Haar probability measure, $dU$, over unitary
matrices. It can be shown \citep{Werner-states,Horodecki-reduction,Watrous}
that $\mathcal{G}$ is actually the channel
\begin{align*}
\mathcal{G}\left(\omega_{A'B'}\right) & =\phi_{A'B'}^{+}\mathrm{Tr}\left[\phi_{A'B'}^{+}\omega_{A'B'}\right]\\
 & +\frac{I_{A'B'}-\phi_{A'B'}^{+}}{m^{2}-1}\mathrm{Tr}\left[\left(I_{A'B'}-\phi_{A'B'}^{+}\right)\omega_{A'B'}\right].
\end{align*}
Note that, since $\left|A'_{0}\right|=\left|B_{0}'\right|=1$, we
can view the channel $\mathcal{G}$ as a superchannel $\mathcal{G}_{A'B'\to A'B'}$
taking channels (which are nothing but density matrices) in $\mathrm{CPTP}\left(A_{0}'B_{0}'\to A_{1}'B_{1}'\right)$
to channels in the same set. In particular, this superchannel is self-adjoint,
and satisfies $\mathcal{G}\left[\Phi_{A'B'}^{+}\right]=\Phi_{A'B'}^{+}$.
The latter implies that if $\Theta$ is a PPT superchannel such that
$\Theta\left[\Phi_{A'B'}^{+}\right]=\mathcal{N}_{AB}$, then $\Omega_{A'B'\to AB}:=\Theta_{A'B'\to AB}\circ\mathcal{G}_{A'B'}$
is also a PPT superchannel that takes $\Phi_{A'B'}^{+}$ to $\mathcal{N}_{AB}$
(i.e.\ it achieves the same performance as $\Theta$). Furthermore,
by Eq.~\eqref{twirl} the superchannel $\Omega$ satisfies (cf.\ Eqs.~(80)--(82)
of Ref.~\citep{WW18})
\begin{align*}
\Omega_{A'B'\to AB}\left[\mathcal{M}_{A'B'}\right] & :=\mathcal{N}_{AB}\mathrm{Tr}\left[\Phi_{A'B'}^{+}\mathcal{M}_{A'B'}\right]\\
 & +\mathcal{R}_{AB}\mathrm{Tr}\left[\left(I_{A'B'}-\Phi_{A'B'}^{+}\right)\mathcal{M}_{A'B'}\right],
\end{align*}
where
\[
\mathcal{R}_{AB}:=\frac{1}{1-m^{2}}\Theta\left[I_{A'B'}-\Phi_{A'B'}^{+}\right].
\]
Now, from the exact same lines leading to Eq.~\eqref{eq:pptt}, it
follows that, for $\Omega_{A'B'\to AB}$ to be a PPT superchannel,
it is necessary that for any $\mathcal{P}_{RA'B'}\in\mathrm{CPTP}\left(R_{0}\to R_{1}A_{1}'B_{1}'\right)$,
the map on the right-hand side of Eq.~\eqref{eq:pptt} is a quantum
channel. Since $\Pi^{+}$ and $\Pi^{-}$ are orthogonal projectors,
each term must be a CP map, which yields Eq.~\eqref{ww}. To summarize,
if $\Theta$ is a PPT superchannel that satisfies $\Theta\left[\Phi_{A'B'}^{+}\right]=\mathcal{N}_{AB}$,
then $\Omega$ is also a PPT superchannel that satisfies $\Omega\left[\Phi_{A'B'}^{+}\right]=\mathcal{N}_{AB}$;
the fact that $\Omega$ is PPT forces each term of Eq.~\eqref{eq:pptt}
to be a CP map which is equivalent to Eq.~\eqref{ww}. Hence, $E_{\mathrm{PPT}}^{\left(1\right)}\left(\mathcal{N}_{AB}\right)$
cannot be smaller than the right-hand side of Eq.\textbf{~}\eqref{eq:bb}.
This completes the proof.
\end{proof}
The second lemma uses the previous one to link the the single-shot
exact PPT cost to the MLN.
\begin{lem}
\label{lem:double-bound}Let $\mathcal{N}\in\mathrm{CPTP}\left(A_{0}B_{0}\to A_{1}B_{1}\right)$
be a bipartite channel. Then,
\begin{align*}
\log_{2}\left(2^{LN_{\max}\left(\mathcal{N}_{AB}\right)}-1\right) & \leq E_{\mathrm{PPT}}^{\left(1\right)}\left(\mathcal{N}_{AB}\right)\\
 & \leq\log_{2}\left(2^{LN_{\max}\left(\mathcal{N}_{AB}\right)}+2\right).
\end{align*}
\end{lem}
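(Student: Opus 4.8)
The plan is to prove both inequalities by converting a feasible point of one of the two semi-definite programs — the one characterizing $E_{\mathrm{PPT}}^{(1)}$ in Eq.~\eqref{eq:bb}, and the one defining $LN_{\max}$ in Eq.~\eqref{eq:MLN Choi} — into a feasible point of the other. Throughout I would use that the constraint $-\mathcal P_{AB}^{\Gamma}\le\mathcal N_{AB}^{\Gamma}\le\mathcal P_{AB}^{\Gamma}$ forces $\mathcal P_{AB}^{\Gamma}\ge0$, so every MLN witness is automatically PPT, and that $\mathcal N$ and any channel $\mathcal R$ have input marginal $I_{A_0B_0}$. Set $\mu:=2^{LN_{\max}(\mathcal N_{AB})}$ and let $m^{\ast}$ be the smallest positive integer in Eq.~\eqref{eq:bb}, so $E_{\mathrm{PPT}}^{(1)}(\mathcal N_{AB})=\log_2 m^{\ast}$.

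For the lower bound I would in fact establish the stronger $LN_{\max}\le E_{\mathrm{PPT}}^{(1)}$. Given any channel $\mathcal R_{AB}$ and integer $m$ feasible for Eq.~\eqref{eq:bb}, I propose the witness
\[
\mathcal P_{AB}:=\tfrac{1}{m}\big((m^{2}-1)\mathcal R_{AB}+\mathcal N_{AB}\big).
\]
This is CP (a non-negative combination of CP maps), and rearranging Eq.~\eqref{eq:bb} yields $\mathcal P_{AB}^{\Gamma}-\mathcal N_{AB}^{\Gamma}=\tfrac{m-1}{m}\big((m+1)\mathcal R_{AB}^{\Gamma}-\mathcal N_{AB}^{\Gamma}\big)\ge0$ and $\mathcal P_{AB}^{\Gamma}+\mathcal N_{AB}^{\Gamma}=\tfrac{m+1}{m}\big((m-1)\mathcal R_{AB}^{\Gamma}+\mathcal N_{AB}^{\Gamma}\big)\ge0$, so $\mathcal P$ is MLN-feasible. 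Since $\mathcal R$ and $\mathcal N$ are channels, $J_{A_0B_0}^{\mathcal P}=\tfrac{1}{m}\big((m^{2}-1)I+I\big)=m\,I_{A_0B_0}$, whose operator norm and that of its partial transpose both equal $m$. Hence $\mu\le m$ for every feasible $m$, giving $\mu\le m^{\ast}$; the claimed $\log_2(\mu-1)\le E_{\mathrm{PPT}}^{(1)}$ then follows a fortiori.

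For the upper bound I would begin from an optimal MLN witness $\mathcal P$, so $J_{A_0B_0}^{\mathcal P}\le\mu I$ and $(J_{A_0B_0}^{\mathcal P})^{T_{B_0}}\le\mu I$, and complete it to a channel. Fixing any PPT state $\tau_{A_1B_1}$ (for instance $u_{A_1}\otimes u_{B_1}$), set
\[
J_{AB}^{\mathcal R}:=\tfrac{1}{\mu}\big(J_{AB}^{\mathcal P}+(\mu I_{A_0B_0}-J_{A_0B_0}^{\mathcal P})\otimes\tau_{A_1B_1}\big).
\]
The correction restores the input marginal to $I_{A_0B_0}$, so $\mathcal R$ is a channel, and since $\mu I-(J_{A_0B_0}^{\mathcal P})^{T_{B_0}}\ge0$ while $\tau^{T_{B_1}}\ge0$, one gets $\mu\mathcal R_{AB}^{\Gamma}-\mathcal P_{AB}^{\Gamma}\ge0$, hence $\mu\mathcal R_{AB}^{\Gamma}\ge\pm\mathcal N_{AB}^{\Gamma}$. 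Taking non-negative combinations of $\mu\mathcal R_{AB}^{\Gamma}\mp\mathcal N_{AB}^{\Gamma}\ge0$ then produces the asymmetric bound $-(m-1)\mathcal R_{AB}^{\Gamma}\le\mathcal N_{AB}^{\Gamma}\le(m+1)\mathcal R_{AB}^{\Gamma}$ whenever $m\ge\mu+1$. Choosing the integer $m=\lceil\mu+1\rceil\le\mu+2$ makes $(\mathcal R,m)$ feasible for Eq.~\eqref{eq:bb}, so $m^{\ast}\le\mu+2$ and $E_{\mathrm{PPT}}^{(1)}\le\log_2(\mu+2)$.

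The delicate direction is the upper bound: the MLN witness is not a channel, and the completion must simultaneously normalize the input marginal and preserve positivity of the partial transpose, which is precisely what forces the PPT choice of $\tau$ and the entry of $\mu I\ge(J_{A_0B_0}^{\mathcal P})^{T_{B_0}}$. The remaining friction is the mismatch between the symmetric MLN constraint and the asymmetric $(m\mp1)$ constraints of Eq.~\eqref{eq:bb}, together with the integrality of $m$, which jointly account for the additive constant $2$; by contrast the lower bound is clean once the convex combination $\mathcal P_{AB}=\tfrac{1}{m}\big((m^{2}-1)\mathcal R_{AB}+\mathcal N_{AB}\big)$ is guessed.
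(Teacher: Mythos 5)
Your proof is correct, and it takes a genuinely more direct route than the paper's. The paper first establishes an intermediate reformulation $\widetilde{E}_{\mathrm{PPT}}^{\left(1\right)}$ of Eq.~\eqref{eq:bb}, in which the channel $\mathcal{R}$ is relaxed to a subnormalized CP map with $J_{A_{0}B_{0}}^{\mathcal{R}}\leq I_{A_{0}B_{0}}$ and $J_{A_{0}B_{0}}^{\mathcal{R}^{\Gamma}}\leq I_{A_{0}B_{0}}$, and then derives both bounds through a chain of inclusions between SDP feasible sets: for the lower bound it relaxes $-\left(m-1\right)\mathcal{R}_{AB}^{\Gamma}\leq\mathcal{N}_{AB}^{\Gamma}$ to $-\left(m+1\right)\mathcal{R}_{AB}^{\Gamma}\leq\mathcal{N}_{AB}^{\Gamma}$ and substitutes $\mathcal{P}=\left(m+1\right)\mathcal{R}$ (this relaxation is exactly what costs the $-1$), while for the upper bound it restricts to the symmetric $\left(m-1\right)$ constraint and handles integrality with a floor. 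You instead build explicit feasible points in both directions. Your lower-bound witness $\mathcal{P}=\frac{1}{m}\left(\left(m^{2}-1\right)\mathcal{R}+\mathcal{N}\right)$, which mixes $\mathcal{N}$ itself into the combination so that $J_{A_{0}B_{0}}^{\mathcal{P}}=mI_{A_{0}B_{0}}$ exactly, in fact proves the \emph{stronger} statement $LN_{\max}\left(\mathcal{N}_{AB}\right)\leq E_{\mathrm{PPT}}^{\left(1\right)}\left(\mathcal{N}_{AB}\right)$, with no loss of $1$; the paper's relaxation discards the asymmetry of the constraints rather than exploiting it, and so cannot see this. Your upper-bound construction $J_{AB}^{\mathcal{R}}=\frac{1}{\mu}\left(J_{AB}^{\mathcal{P}}+\left(\mu I_{A_{0}B_{0}}-J_{A_{0}B_{0}}^{\mathcal{P}}\right)\otimes u_{A_{1}B_{1}}\right)$ fuses into one step the paper's two separate moves (the marginal-completion trick used to prove $E_{\mathrm{PPT}}^{\left(1\right)}=\widetilde{E}_{\mathrm{PPT}}^{\left(1\right)}$, and the rescaling $\mathcal{P}=\left(m-1\right)\mathcal{R}$), and your accounting of the additive $2$ --- one unit from the asymmetric $\left(m\mp1\right)$ constraints, one from integrality --- matches the paper's. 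Two small points deserve tightening: your phrase ``optimal MLN witness'' presumes the infimum in Eq.~\eqref{eq:MLN Choi} is attained, which is true (the sublevel sets are compact, since $\left\Vert J_{A_{0}B_{0}}^{\mathcal{P}}\right\Vert _{\infty}\leq c$ together with $J_{AB}^{\mathcal{P}}\geq0$ bounds $\mathrm{Tr}\left[J_{AB}^{\mathcal{P}}\right]$), but if you prefer not to argue attainment, a witness achieving $\mu+\epsilon$ followed by $\epsilon\to0^{+}$ still yields the integer bound $m^{\ast}\leq\mu+2$; and the step ``taking non-negative combinations'' silently uses $\mathcal{R}_{AB}^{\Gamma}\geq0$, which holds but should be stated via $\mu\mathcal{R}_{AB}^{\Gamma}\geq\mathcal{P}_{AB}^{\Gamma}\geq0$.
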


\begin{proof}
First of all, we prove that the result of the previous lemma can be
rewritten in a slightly modified version:
\begin{align}
 & E_{\mathrm{PPT}}^{\left(1\right)}\left(\mathcal{N}_{AB}\right)\nonumber \\
 & =\inf\left\{ \log_{2}m:-\left(m-1\right)\mathcal{R}_{AB}^{\Gamma}\leq\mathcal{N}_{AB}^{\Gamma}\leq\left(m+1\right)\mathcal{R}_{AB}^{\Gamma}\right\} ,\label{eq:hh2}
\end{align}
where $\mathcal{R}\geq0$, $J_{A_{0}B_{0}}^{\mathcal{R}}\leq I_{A_{0}B_{0}}$,
$J_{A_{0}B_{0}}^{\mathcal{R}^{\Gamma}}\leq I_{A_{0}B_{0}}$, and $m\in\mathbb{N}$.
To see why, denote the second line of Eq.~\eqref{eq:hh2} by $\widetilde{E}_{\mathrm{PPT}}^{\left(1\right)}\left(\mathcal{N}_{AB}\right)$.
Then, by definition, we have $E_{\mathrm{PPT}}^{\left(1\right)}\left(\mathcal{N}_{AB}\right)\leq\widetilde{E}_{\mathrm{PPT}}^{\left(1\right)}\left(\mathcal{N}_{AB}\right)$
because if $\mathcal{R}$ is a CPTP, then $J_{A_{0}B_{0}}^{\mathcal{R}}=J_{A_{0}B_{0}}^{\mathcal{R}^{\Gamma}}=I_{A_{0}B_{0}}$
(note that the condition $-\left(m-1\right)\mathcal{R}_{AB}^{\Gamma}\leq\left(m+1\right)\mathcal{R}_{AB}^{\Gamma}$
implies in particular that $\mathcal{R}^{\Gamma}\geq0$). Conversely,
suppose $\mathcal{R}$ satisfies $J_{A_{0}B_{0}}^{\mathcal{R}}\leq I_{A_{0}B_{0}}$
and $J_{A_{0}B_{0}}^{\mathcal{R}^{\Gamma}}\leq I_{A_{0}B_{0}}$. Define
$\mathcal{P}$ as the map whose Choi matrix is given by 
\[
J_{AB}^{\mathcal{P}}:=J_{AB}^{\mathcal{R}}+\left(I_{A_{0}B_{0}}-J_{A_{0}B_{0}}^{\mathcal{R}}\right)\otimes u_{A_{1}B_{1}}.
\]
Note that $\mathcal{P}$ is a channel, and that both $\left(I_{A_{0}B_{0}}-J_{A_{0}B_{0}}^{\mathcal{R}}\right)\otimes u_{A_{1}B_{1}}$
and its partial transpose are positive semi-definite. Therefore, $\mathcal{P}$
too satisfies the constraints 
\[
-\left(m-1\right)\mathcal{P}_{AB}^{\Gamma}\leq\mathcal{N}_{AB}^{\Gamma}\leq\left(m+1\right)\mathcal{P}_{AB}^{\Gamma},
\]
so we can conclude that $E_{\mathrm{PPT}}^{\left(1\right)}\left(\mathcal{N}_{AB}\right)\geq\widetilde{E}_{\mathrm{PPT}}^{\left(1\right)}\left(\mathcal{N}_{AB}\right)$.
This proves that $E_{\mathrm{PPT}}^{\left(1\right)}\left(\mathcal{N}_{AB}\right)=\widetilde{E}_{\mathrm{PPT}}^{\left(1\right)}\left(\mathcal{N}_{AB}\right)$.

The rest of the proof employs similar techniques to proposition~9
in Ref.~\citep{WW18}, with a few exceptions. Continuing, we have
\begin{widetext}
\begin{align}
 & E_{\mathrm{PPT}}^{\left(1\right)}\left(\mathcal{N}_{AB}\right)=\widetilde{E}_{\mathrm{PPT}}^{\left(1\right)}\left(\mathcal{N}_{AB}\right)\nonumber \\
 & \geq\log_{2}\inf\left\{ m:-\left(m+1\right)\mathcal{R}_{AB}^{\Gamma}\leq\mathcal{N}_{AB}^{\Gamma}\leq\left(m+1\right)\mathcal{R}_{AB}^{\Gamma},\,\mathcal{R}\geq0,\,J_{A_{0}B_{0}}^{\mathcal{R}}\leq I_{A_{0}B_{0}},\thinspace J_{A_{0}B_{0}}^{\mathcal{R}^{\Gamma}}\leq I_{A_{0}B_{0}},\thinspace m\in\mathbb{N}\right\} \label{x1}\\
 & =\log_{2}\inf\left\{ m:-\mathcal{P}_{AB}^{\Gamma}\leq\mathcal{N}_{AB}^{\Gamma}\leq\mathcal{P}_{AB}^{\Gamma},\thinspace\mathcal{P}\geq0,\thinspace J_{A_{0}B_{0}}^{\mathcal{P}}\leq\left(m+1\right)I_{A_{0}B_{0}},\thinspace J_{A_{0}B_{0}}^{\mathcal{P}^{\Gamma}}\leq\left(m+1\right)I_{A_{0}B_{0}},\thinspace m\in\mathbb{N}\right\} \label{x2}\\
 & \geq\log_{2}\inf\left\{ m:-\mathcal{P}_{AB}^{\Gamma}\leq\mathcal{N}_{AB}^{\Gamma}\leq\mathcal{P}_{AB}^{\Gamma},\thinspace\mathcal{P}\geq0,\thinspace J_{A_{0}B_{0}}^{\mathcal{P}}\leq\left(m+1\right)I_{A_{0}B_{0}},\thinspace J_{A_{0}B_{0}}^{\mathcal{P}^{\Gamma}}\leq\left(m+1\right)I_{A_{0}B_{0}},\thinspace m\in\mathbb{R}_{+}\right\} \label{x3}\\
 & \geq\log_{2}\inf\left\{ \max\left\{ \left\Vert J_{A_{0}B_{0}}^{\mathcal{P}}\right\Vert _{\infty},\left\Vert J_{A_{0}B_{0}}^{\mathcal{P}^{\Gamma}}\right\Vert _{\infty}\right\} -1:-\mathcal{P}_{AB}^{\Gamma}\leq\mathcal{N}_{AB}^{\Gamma}\leq\mathcal{P}_{AB}^{\Gamma},\thinspace\mathcal{P}\geq0\right\} \label{x4}\\
 & =\log_{2}\left(2^{LN_{\max}\left(\mathcal{N}_{AB}\right)}-1\right),\nonumber 
\end{align}
\end{widetext}

\noindent where in Eq.~\eqref{x1} we replaced $m-1$ with $m+1$,
so the infimum is on a less restricted set, in Eq.~\eqref{x2} we
defined $\mathcal{P}_{AB}:=\left(m+1\right)\mathcal{R}_{AB}$, in
Eq.~\eqref{x3} we removed the restriction that $m$ is an integer,
and the last inequality~\eqref{x4} follows from the fact that if
$J_{A_{0}B_{0}}^{\mathcal{P}}\leq\left(m+1\right)I_{A_{0}B_{0}}$
and $J_{A_{0}B_{0}}^{\mathcal{P}^{\Gamma}}\leq\left(m+1\right)I_{A_{0}B_{0}}$
then $m\geq\max\left\{ \left\Vert J_{A_{0}B_{0}}^{\mathcal{P}}\right\Vert _{\infty},\left\Vert J_{A_{0}B_{0}}^{\mathcal{P}^{\Gamma}}\right\Vert _{\infty}\right\} -1$.

For the other inequality, following similar lines, we get
\begin{widetext}
\begin{align}
 & 2^{E_{\mathrm{PPT}}^{\left(1\right)}\left(\mathcal{N}_{AB}\right)}=2^{\widetilde{E}_{\mathrm{PPT}}^{\left(1\right)}\left(\mathcal{N}_{AB}\right)}\label{eq:m-1}\\
 & \leq\inf\left\{ m:-\left(m-1\right)\mathcal{R}_{AB}^{\Gamma}\leq\mathcal{N}_{AB}^{\Gamma}\leq\left(m-1\right)\mathcal{R}_{AB}^{\Gamma},\thinspace\mathcal{R}\geq0,\thinspace J_{A_{0}B_{0}}^{\mathcal{R}}\leq I_{A_{0}B_{0}},\thinspace J_{A_{0}B_{0}}^{\mathcal{R}^{\Gamma}}\leq I_{A_{0}B_{0}},\thinspace m\in\mathbb{N}\right\} \label{eq:P}\\
 & =\inf\left\{ m:-\mathcal{P}_{AB}^{\Gamma}\leq\mathcal{N}_{AB}^{\Gamma}\leq\mathcal{P}_{AB}^{\Gamma},\thinspace\mathcal{P}\geq0,\thinspace J_{A_{0}B_{0}}^{\mathcal{P}}\leq\left(m-1\right)I_{A_{0}B_{0}},\thinspace J_{A_{0}B_{0}}^{\mathcal{P}^{\Gamma}}\leq\left(m-1\right)I_{A_{0}B_{0}},\thinspace m\in\mathbb{N}\right\} \nonumber \\
 & =\inf\left\{ \left\lfloor m\right\rfloor :-\mathcal{P}_{AB}^{\Gamma}\leq\mathcal{N}_{AB}^{\Gamma}\leq\mathcal{P}_{AB}^{\Gamma},\thinspace\mathcal{P}\geq0,\thinspace J_{A_{0}B_{0}}^{\mathcal{P}}\leq\left(\left\lfloor m\right\rfloor -1\right)I_{A_{0}B_{0}},\thinspace J_{A_{0}B_{0}}^{\mathcal{P}^{\Gamma}}\leq\left(\left\lfloor m\right\rfloor -1\right)I_{A_{0}B_{0}},\thinspace m\in\mathbb{R}_{+}\right\} \label{y0}\\
 & \leq\inf\left\{ m:-\mathcal{P}_{AB}^{\Gamma}\leq\mathcal{N}_{AB}^{\Gamma}\leq\mathcal{P}_{AB}^{\Gamma},\thinspace\mathcal{P}\geq0,\thinspace J_{A_{0}B_{0}}^{\mathcal{P}}\leq\left(m-2\right)I_{A_{0}B_{0}},\thinspace J_{A_{0}B_{0}}^{\mathcal{P}^{\Gamma}}\leq\left(m-2\right)I_{A_{0}B_{0}},\thinspace m\in\mathbb{R}_{+}\right\} \label{y1}\\
 & =\inf\left\{ \max\left\{ \left\Vert J_{A_{0}B_{0}}^{\mathcal{P}}\right\Vert _{\infty},\left\Vert J_{A_{0}B_{0}}^{\mathcal{P}^{\Gamma}}\right\Vert _{\infty}\right\} +2:-\mathcal{P}_{AB}^{\Gamma}\leq\mathcal{N}_{AB}^{\Gamma}\leq\mathcal{P}_{AB}^{\Gamma},\thinspace\mathcal{P}\geq0\right\} \nonumber \\
 & =2^{LN_{\max}\left(\mathcal{N}_{AB}\right)}+2,\nonumber 
\end{align}
\end{widetext}

\noindent where in Eq.~\eqref{eq:m-1} we replaced $m+1$ with $m\in\mathbb{R}_{+}$,
obtaining a larger set, in Eq.~\eqref{eq:P} we set $\mathcal{P}:=\left(m-1\right)\mathcal{R}$,
in Eq.~\eqref{y1} we used the fact that $m-2\leq\left\lfloor m\right\rfloor -1$
so, the constraints $J_{A_{0}B_{0}}^{\mathcal{P}}\leq\left(m-2\right)I_{A_{0}B_{0}}$
and $J_{A_{0}B_{0}}^{\mathcal{P}^{\Gamma}}\leq\left(m-2\right)I_{A_{0}B_{0}}$
imply the constraints $J_{A_{0}B_{0}}^{\mathcal{P}}\leq\left(\left\lfloor m\right\rfloor -1\right)I_{A_{0}B_{0}}$
and $J_{A_{0}B_{0}}^{\mathcal{P}^{\Gamma}}\leq\left(\left\lfloor m\right\rfloor -1\right)I_{A_{0}B_{0}}$
of Eq.~\eqref{y0} respectively. This completes the proof.
\end{proof}
Recalling Ref.~\citep{Gour-Scandolo-resource}, the exact (parallel)
NPT entanglement cost of the channel is defined as 
\[
E_{\mathrm{PPT}}\left(\mathcal{N}_{AB}\right)=\limsup_{n}\frac{1}{n}E_{\mathrm{PPT}}^{\left(1\right)}\left(\mathcal{N}_{AB}^{\otimes n}\right).
\]
The following result, which is the key theorem of this section, states
that the exact PPT cost of a bipartite channel is given precisely
by its max-logarithmic negativity.
\begin{thm}
\label{thm:oi}Let $\mathcal{N}\in\mathrm{CPTP}\left(A_{0}B_{0}\to A_{1}B_{1}\right)$
be a bipartite channel. Then, 
\[
E_{\mathrm{PPT}}\left(\mathcal{N}_{AB}\right)=LN_{\max}\left(\mathcal{N}_{AB}\right).
\]
 
\end{thm}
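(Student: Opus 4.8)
The plan is to prove the theorem by a regularization (squeeze) argument that combines the two results already established for the max-logarithmic negativity: the single-shot double bound of lemma~\ref{lem:double-bound} and the additivity property $LN_{\max}\left(\mathcal{N}_{AB}^{\otimes n}\right)=nLN_{\max}\left(\mathcal{N}_{AB}\right)$. Since $E_{\mathrm{PPT}}\left(\mathcal{N}_{AB}\right)$ is \emph{defined} as the regularization $\limsup_{n}\frac{1}{n}E_{\mathrm{PPT}}^{\left(1\right)}\left(\mathcal{N}_{AB}^{\otimes n}\right)$, it suffices to control the single-shot exact cost of the tensor powers $\mathcal{N}_{AB}^{\otimes n}$ and then let $n\to\infty$.

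First I would apply lemma~\ref{lem:double-bound} to the bipartite channel $\mathcal{N}_{AB}^{\otimes n}$ in place of $\mathcal{N}_{AB}$, obtaining
\[
\log_{2}\left(2^{LN_{\max}\left(\mathcal{N}_{AB}^{\otimes n}\right)}-1\right)\leq E_{\mathrm{PPT}}^{\left(1\right)}\left(\mathcal{N}_{AB}^{\otimes n}\right)\leq\log_{2}\left(2^{LN_{\max}\left(\mathcal{N}_{AB}^{\otimes n}\right)}+2\right).
\]
Writing $L:=LN_{\max}\left(\mathcal{N}_{AB}\right)$ and invoking additivity, $LN_{\max}\left(\mathcal{N}_{AB}^{\otimes n}\right)=nL$, so the bounds become $\log_{2}\left(2^{nL}-1\right)\leq E_{\mathrm{PPT}}^{\left(1\right)}\left(\mathcal{N}_{AB}^{\otimes n}\right)\leq\log_{2}\left(2^{nL}+2\right)$. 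Dividing through by $n$ gives
\[
\frac{1}{n}\log_{2}\left(2^{nL}-1\right)\leq\frac{1}{n}E_{\mathrm{PPT}}^{\left(1\right)}\left(\mathcal{N}_{AB}^{\otimes n}\right)\leq\frac{1}{n}\log_{2}\left(2^{nL}+2\right).
\]

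Next I would take $n\to\infty$ and show both outer bounds converge to $L$. Factoring $2^{nL}$ out of each logarithm yields $\frac{1}{n}\log_{2}\left(2^{nL}\pm c\right)=L+\frac{1}{n}\log_{2}\left(1\pm c\,2^{-nL}\right)$, and for $L>0$ the correction term $\frac{1}{n}\log_{2}\left(1\pm c\,2^{-nL}\right)$ vanishes as $n\to\infty$; hence the squeeze theorem forces $\limsup_{n}\frac{1}{n}E_{\mathrm{PPT}}^{\left(1\right)}\left(\mathcal{N}_{AB}^{\otimes n}\right)=L$, which is exactly the claim.

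The only genuine subtlety — and the step I would treat most carefully — is the boundary case $L=0$, i.e.\ when $\mathcal{N}_{AB}$ is PPT. There the lower bound degenerates to $\log_{2}\left(2^{0}-1\right)=-\infty$ and carries no information. However, the upper bound still gives $\frac{1}{n}E_{\mathrm{PPT}}^{\left(1\right)}\left(\mathcal{N}_{AB}^{\otimes n}\right)\leq\frac{1}{n}\log_{2}3\to0$, and combined with the trivial nonnegativity $E_{\mathrm{PPT}}^{\left(1\right)}\geq0$ this pins the limit to $0=L$. Since the hard analytic work has already been packaged into lemma~\ref{lem:double-bound} and the additivity lemma, I do not anticipate any further obstacle: the theorem is essentially a clean asymptotic consequence of those two facts.
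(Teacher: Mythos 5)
Your proof is correct and follows essentially the same route as the paper: apply lemma~\ref{lem:double-bound} to $\mathcal{N}_{AB}^{\otimes n}$, use additivity to replace $LN_{\max}\left(\mathcal{N}_{AB}^{\otimes n}\right)$ with $nLN_{\max}\left(\mathcal{N}_{AB}\right)$, divide by $n$, and squeeze as $n\to\infty$. Your explicit handling of the boundary case $LN_{\max}\left(\mathcal{N}_{AB}\right)=0$ (where the lower bound degenerates to $-\infty$ and one must instead invoke $E_{\mathrm{PPT}}^{\left(1\right)}\geq0$ together with the upper bound) is actually more careful than the paper's argument, which tacitly assumes both bounds converge to the same finite limit.
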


\begin{proof}
The proof follows from the additivity property of $LN_{\max}\left(\mathcal{N}_{AB}\right)$
and lemma~\ref{lem:double-bound}. Specifically, 
\begin{align*}
\frac{1}{n}E_{\mathrm{PPT}}^{\left(1\right)}\left(\mathcal{N}_{AB}^{\otimes n}\right) & \leq\frac{1}{n}\log_{2}\left(2^{LN_{\max}\left(\mathcal{N}_{AB}^{\otimes n}\right)}+2\right)\\
 & =\frac{1}{n}\log_{2}\left(2^{nLN_{\max}\left(\mathcal{N}_{AB}\right)}+2\right).
\end{align*}
Conversely, 
\begin{align*}
\frac{1}{n}E_{\mathrm{PPT}}^{\left(1\right)}\left(\mathcal{N}_{AB}^{\otimes n}\right) & \geq\frac{1}{n}\log_{2}\left(2^{LN_{\max}\left(\mathcal{N}_{AB}^{\otimes n}\right)}-1\right)\\
 & =\frac{1}{n}\log_{2}\left(2^{nLN_{\max}\left(\mathcal{N}_{AB}\right)}-1\right).
\end{align*}
 Then 
\begin{align*}
\frac{1}{n}\log_{2}\left(2^{nLN_{\max}\left(\mathcal{N}_{AB}\right)}-1\right) & \leq\frac{1}{n}E_{\mathrm{PPT}}^{\left(1\right)}\left(\mathcal{N}_{AB}^{\otimes n}\right)\\
 & \leq\frac{1}{n}\log_{2}\left(2^{nLN_{\max}\left(\mathcal{N}_{AB}\right)}+2\right).
\end{align*}
If we take the limit as $n\rightarrow+\infty$, the lower and upper
bound of $\frac{1}{n}E_{\mathrm{PPT}}^{\left(1\right)}\left(\mathcal{N}_{AB}^{\otimes n}\right)$
have the the same limit, equal to $LN_{\max}\left(\mathcal{N}_{AB}\right)$.
Therefore $\frac{1}{n}E_{\mathrm{PPT}}^{\left(1\right)}\left(\mathcal{N}_{AB}^{\otimes n}\right)$
has the limit (which will be equal to its limit superior). This allows
us to conclude that $E_{\mathrm{PPT}}\left(\mathcal{N}_{AB}\right)=LN_{\max}\left(\mathcal{N}_{AB}\right)$.
\end{proof}
In Ref.~\citep{Dynamical-entanglement} we proved that the MLN is
an upper bound for another entanglement measure, the NPT entanglement
generation power $E_{g}^{\mathrm{PPT}}$ \citep{Bennett-bipartite,Resource-channels-1,Resource-channels-2,Gour-Winter}:
\[
E_{g}^{\mathrm{PPT}}\left(\mathcal{N}_{AB}\right)\leq LN_{\max}\left(\mathcal{N}_{AB}\right).
\]

\section{SEP entanglement of a bipartite channel\label{sec:SEP}}

In the previous section we saw that extending the set of free operations
beyond LOCC can be very fruitful. However, one may argue that the
PPT operations allow for ``too much'' freedom, making NPT entanglement
a rather crude approximation of LOCC-entanglement. Here we consider
a much smaller set: the set of separable superchannels (SEPS). Like
before, SEPS do not necessarily have a realization similar to the
one in Fig.~\ref{fig:PPT}, where the pre-processing and post-processing
are both SEP channels. Instead, we define SEPS using the Choi matrix
formalism of superchannels. This simplifies the set of operations,
making them more useful for applications and calculations.

Recall that a channel $\mathcal{N}\in\mathrm{CPTP}\left(A_{0}B_{0}\to A_{1}B_{1}\right)$
is called \emph{separable} \citep{SEP,Rains-SEP,PPT1} if it has an
operator-sum representation of the form 
\[
\mathcal{N}_{AB}\left(\rho_{A_{0}B_{0}}\right)=\sum_{j}\left(X_{A_{0}}^{j}\otimes Y_{B_{0}}^{j}\right)\rho_{A_{0}B_{0}}\left(X_{A_{0}}^{j}\otimes Y_{B_{0}}^{j}\right)^{\dagger},
\]
where $X^{j}\in\mathfrak{B}\left(A_{0}\right)$, $Y^{j}\in\mathfrak{B}\left(B_{0}\right)$,
and $\sum_{j}\left(X_{A_{0}}^{j}\right)^{\dagger}X_{A_{0}}^{j}\otimes\left(Y_{B_{0}}^{j}\right)^{\dagger}Y_{B_{0}}^{j}=I_{A_{0}B_{0}}$.
It is simple to check that the set SEP is precisely the set of completely
resource non-generating operations \citep{Quantum-resource-2,Gour-Scandolo-resource}
in entanglement theory (see e.g.\ Ref.~\citep{Chitambar-2018b}
and references therein). Moreover, a bipartite channel is separable
if and only if its Choi matrix is a separable matrix. This fact inspires
us to define SEPS using the Choi formalism for superchannels.
\begin{defn}
\label{def:SEPS}Let $\Theta\in\mathfrak{S}\left(A'B'\to A'B'\right)$
be a bipartite superchannel. Then, $\Theta$ is called a \emph{separable
superchannel (SEPS)} if its Choi matrix is separable; i.e.\ it can
be expressed as 
\[
\mathbf{J}_{ABA'B'}^{\Theta}=\sum_{j}X_{AA'}^{j}\otimes Y_{BB'}^{j},
\]
where, for all $j$, the matrices $X_{AA'}^{j}$ and $Y_{BB'}^{j}$
are positive semi-definite. We denote by ${\rm SEPS}\left(AB\to A'B'\right)$
the set of all bipartite SEPS from system $AB$ to $A'B'$.
\end{defn}

\begin{rem}
Note that clearly SEPS is a subset of PPT superchannels.
\end{rem}

Definition~\ref{def:SEPS} does \emph{not} refer to the implementation
of SEPS with pre- and post-processing that are both SEP channels.
On the other hand, however, if a bipartite superchannel $\Theta$
consists of a SEP pre-processing channel $\mathcal{E}$ and a SEP
post-processing channel $\mathcal{F}$, then the channel $\mathcal{Q}^{\Theta}=\mathcal{F}\circ\mathcal{E}$
is also SEP (and also its Choi matrix $\mathbf{J}^{\Theta}$), so
we can conclude that $\Theta$ is SEPS.

The next proposition shows that the set of SEPS is not ``too large'',
in the sense that it cannot generate (dynamical) entanglement out
of SEP channels. In this way, we establish that a superchannel $\Theta$
is completely non-entangling (i.e.\ completely resource non-generating)
if and only if it is a SEPS.
\begin{prop}
$\Theta\in{\rm SEPS}\left(AB\to A'B'\right)$ if and only if, for
every trace non-increasing separable CP map $\mathcal{N}_{A''AB''B}\in\mathrm{CP}\left(A''_{0}A_{0}B''_{0}B_{0}\to A''_{1}A_{1}B''_{1}B_{1}\right)$,
the map 
\[
\left(\mathbbm{1}_{A''B''}\otimes\Theta_{AB\to A'B'}\right)\left[\mathcal{N}_{A''AB''B}\right],
\]
is a separable trace non-increasing CP map in $\mathrm{CP}\left(A''_{0}A_{0}'B''_{0}B_{0}'\to A''_{1}A_{1}'B''_{1}B_{1}'\right)$.
\end{prop}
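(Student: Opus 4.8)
The plan is to reduce both implications to two facts already recorded in the text: that \emph{a bipartite CP map is separable if and only if its Choi matrix is separable}, and the link-product formula for the Choi matrix of the output of a superchannel. First I would write down, adapting the expression from section~\ref{subsec:Single-shot-interconversions} to a reference $A''B''$ that $\Theta$ leaves untouched,
\begin{equation*}
J_{A''A'B''B'}^{\left(\mathbbm{1}_{A''B''}\otimes\Theta\right)\left[\mathcal{N}\right]}=\mathrm{Tr}_{AB}\left[\mathbf{J}_{ABA'B'}^{\Theta}\left(\left(J_{A''AB''B}^{\mathcal{N}}\right)^{T_{AB}}\otimes I_{A'B'}\right)\right].
\end{equation*}

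For the forward implication, I would assume $\Theta$ is SEPS, so $\mathbf{J}_{ABA'B'}^{\Theta}=\sum_{j}X_{AA'}^{j}\otimes Y_{BB'}^{j}$ with $X^{j},Y^{j}\geq0$, and take a separable trace non-increasing map $\mathcal{N}_{A''AB''B}$ with separable Choi matrix $J^{\mathcal{N}}=\sum_{k}P_{A''A}^{k}\otimes Q_{B''B}^{k}$, $P^{k},Q^{k}\geq0$. Since $A''A$ and $B''B$ lie on opposite sides of the cut, the transpose $T_{AB}$ acts as $T_{A}\otimes T_{B}$ and preserves this product form. Substituting, grouping Alice's systems $A'',A,A'$ against Bob's $B'',B,B'$, and using that $\mathrm{Tr}_{AB}=\mathrm{Tr}_{A}\,\mathrm{Tr}_{B}$ factorizes over the two sides, I would obtain
\begin{equation*}
J^{\left(\mathbbm{1}\otimes\Theta\right)\left[\mathcal{N}\right]}=\sum_{j,k}M_{A''A'}^{jk}\otimes N_{B''B'}^{jk},\quad M_{A''A'}^{jk}:=\mathrm{Tr}_{A}\left[X_{AA'}^{j}\left(\left(P_{A''A}^{k}\right)^{T_{A}}\otimes I_{A'}\right)\right],
\end{equation*}
and similarly for $N^{jk}$. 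The key point is that $M^{jk}$ is precisely the Choi matrix of the composition of the CP maps whose Choi matrices are $X^{j}$ and $P^{k}$ --- the transpose $T_{A}$ on the contracted system is exactly what turns this partial trace into a link product --- so $M^{jk}\geq0$, and likewise $N^{jk}\geq0$. This exhibits the output Choi matrix as separable across $A''A'\!\mid\!B''B'$, hence the image is a separable CP map; it is trace non-increasing because $\Theta$, being a superchannel, preserves the trace-preserving property and therefore sends trace non-increasing CP maps to trace non-increasing CP maps.

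For the converse, the conceptual step is to feed $\Theta$ a \emph{single} separable probe that reproduces its Choi matrix. I would choose references $A''\cong A$, $B''\cong B$ and the input $\mathcal{N}=\Phi_{A''A}^{+}\otimes\Phi_{B''B}^{+}$, observing that although each $\Phi^{+}$ is ``maximally entangled,'' that entanglement is \emph{intra-party}: across the Alice--Bob cut the map is a tensor product of an Alice factor $\Phi_{A''A}^{+}$ and a Bob factor $\Phi_{B''B}^{+}$, hence separable (and trace non-increasing after a harmless positive rescaling). By the construction of the Choi matrix of a supermap in section~\ref{subsec:Supermaps}, one has $\left(\mathbbm{1}_{A''B''}\otimes\Theta\right)\left[\Phi_{A''A}^{+}\otimes\Phi_{B''B}^{+}\right]=\mathcal{P}_{A''B''A'B'}^{\Theta}$, whose Choi matrix is exactly $\mathbf{J}_{ABA'B'}^{\Theta}$. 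The hypothesis forces this image to be separable, so $\mathbf{J}^{\Theta}$ is separable and $\Theta\in\mathrm{SEPS}$.

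The main obstacle is not conceptual but bookkeeping: in the forward direction one must track the partial transposes correctly and justify $M^{jk},N^{jk}\geq0$ via link-product (equivalently composition-of-CP-maps) positivity, which is where a careless treatment of the $T_{AB}$ could break the argument. In the converse, the only genuine insight is recognizing that $\Phi^{+}\otimes\Phi^{+}$ is separable \emph{with respect to the Alice--Bob bipartition} despite its intra-party entanglement, so that the very map defining $\mathbf{J}^{\Theta}$ is an admissible separable input.
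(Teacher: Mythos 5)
Your proof is correct and follows essentially the same route as the paper's: the forward direction rests on the same Choi-matrix computation (separable $\mathbf{J}^{\Theta}$ contracted against a separable input Choi matrix, with link-product positivity giving the factors $M^{jk},N^{jk}\geq0$), and the converse uses exactly the paper's key observation that $\Phi_{A''A}^{+}\otimes\Phi_{B''B}^{+}$ is separable across the Alice--Bob cut and that its image under $\mathbbm{1}\otimes\Theta$ is $\mathcal{P}^{\Theta}$, whose Choi matrix is $\mathbf{J}^{\Theta}$. The only cosmetic difference is that the paper first absorbs the reference $A''B''$ by noting $\mathbf{J}^{\mathbbm{1}\otimes\Theta}=\mathbf{J}^{\mathbbm{1}}\otimes\mathbf{J}^{\Theta}$ is separable and then treats the reduced case, whereas you carry the references through one direct computation; you are in fact slightly more careful than the paper on the trace non-increasing bookkeeping and the rescaling of $\Phi^{+}$.
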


\begin{proof}
Let $\Theta$ be SEPS. Note that
\[
\mathbf{J}_{A''B''ABA'B'}^{\mathbbm{1}\otimes\Theta}=\mathbf{J}_{A''B''}^{\mathbbm{1}}\otimes\mathbf{J}_{ABA'B'}^{\Theta},
\]
where 
\begin{align*}
\mathbf{J}_{A''B''}^{\mathbbm{1}} & =\phi_{A_{0}''B_{0}''\widetilde{A}_{0}''\widetilde{B}_{0}''}^{+}\otimes\phi_{A_{1}''B_{1}''\widetilde{A}_{1}''\widetilde{B}_{1}''}^{+}\\
 & =\phi_{A_{0}''\widetilde{A}_{0}''}^{+}\otimes\phi_{A_{1}''\widetilde{A}_{1}''}^{+}\otimes\phi_{B_{0}''\widetilde{B}_{0}''}^{+}\otimes\phi_{B_{1}''\widetilde{B}_{1}''}^{+},
\end{align*}
is separable. Since $\mathbf{J}_{ABA'B'}^{\Theta}$ is also separable,
then $\mathbbm{1}_{A''B''}\otimes\Theta_{AB\to A'B'}$ is in SEPS
too. Hence, it is enough to show that $\Theta$ is RNG. Let $\mathcal{M}_{AB}$
be a separable bipartite CP map. Then, 
\[
J_{A'B'}^{\Theta\left[\mathcal{M}\right]}=\mathrm{Tr}_{AB}\left[\mathbf{J}_{ABA'B'}^{\Theta}\left(\left(J_{AB}^{\mathcal{M}}\right)^{T}\otimes I_{A'B'}\right)\right]
\]
is separable since both $\mathbf{J}_{ABA'B'}^{\Theta}$ and $J_{AB}^{\mathcal{M}}$
are separable.

Conversely, suppose $\Theta\in\mathfrak{S}\left(AB\to A'B'\right)$
is a completely non-entangling superchannel with respect to SEP channels.
Recall the representation of $\Theta$ given by $\mathcal{P}^{\Theta}$
as in section~\ref{subsec:Supermaps}, where $A$ and $B$ are replaced
by $AB$ and $A'B'$ respectively. We have 
\begin{align*}
\mathcal{P}_{ABA'B'}^{\Theta} & =\Theta_{\widetilde{A}\widetilde{B}\to A'B'}\left[\Phi_{AB\widetilde{A}\widetilde{B}}^{+}\right]\\
 & =\Theta_{\widetilde{A}\widetilde{B}\to A'B'}\left[\Phi_{A\widetilde{A}}^{+}\otimes\Phi_{B\widetilde{B}}^{+}\right],
\end{align*}
where we have used the fact that the CP map $\Phi_{ABA'B'}^{+}$ splits
in exactly the same way as its state counterpart $\phi_{ABA'B'}^{+}$.
Since $\Theta$ is completely non-entangling, it follows that the
channel $\mathcal{P}_{ABA'B'}^{\Theta}$ is separable, and therefore
its Choi matrix $\mathbf{J}_{ABA'B'}^{\Theta}$ is separable as well.
Hence, $\Theta$ is a SEPS. This completes the proof.
\end{proof}

\section{Bound dynamical entanglement\label{sec:Bound}}

We know that if the partial transpose of a bipartite entangled state
yields a positive semi-definite matrix, then the state is not distillable
under LOCC \citep{PHorodecki,Bound-entanglement}. Such states are
said to possess \emph{bound entanglement} \citep{Bound-entanglement}. 

This condition can be elevated to bipartite channels. Let $\mathcal{N}_{AB}\in\mathrm{CPTP}\left(A_{0}B_{0}\to A_{1}B_{1}\right)$
be a bipartite channel whose partial transpose $\mathcal{N}_{AB}^{\Gamma}$
is also a bipartite channel (i.e.\ $\mathcal{N}_{AB}$ is a PPT bipartite
channel). We argue here that such channels cannot be used to distill
entanglement. To see why, by contradiction, suppose that there exists
$n\in\mathbb{N}$ large enough and an LOCC superchannel $\Theta$
converting $\mathcal{N}_{AB}^{\otimes n}$ to a bipartite qubit state
$\rho_{A'B'}=\Theta\left[\mathcal{N}_{AB}^{\otimes n}\right]$, where
$\left|A_{0}'\right|=\left|B_{0}'\right|=1$ and $\left|A_{1}'\right|=\left|B'_{1}\right|=2$.
If $\rho_{A'B'}$ is entangled, its partial transpose is not positive
semi-definite \citep{PPT-Peres,PPT-Horodecki}. On the other hand,
on the right-hand side the partial transpose gives 
\[
\left(\Theta\left[\mathcal{N}_{AB}^{\otimes n}\right]\right)^{\Gamma}=\Theta^{\Gamma}\left[\left(\mathcal{N}_{AB}^{\otimes n}\right)^{\Gamma}\right]=\Theta^{\Gamma}\left[\left(\mathcal{N}_{AB}^{\Gamma}\right)^{\otimes n}\right]\geq0
\]
for LOCC superchannels are in particular PPT, so $\Theta^{\Gamma}$
is a superchannel. Recall also that we assume that $\mathcal{N}_{AB}^{\Gamma}$
is a channel as well. Therefore, we get a contradiction.

Note that in the argument above we showed that PPT superchannels (which
include in particular LOCC superchannels) cannot be used to distill
entanglement from an arbitrarily large number of copies of a PPT channel.
This further shows that our definition of the set of PPT superchannels,
which in principle can be larger than the set of superchannels realizable
with PPT pre- and post-processing as in Fig.~\ref{fig:PPT}, is not
so large such that PPT entanglement becomes distillable.

So far we have discussed the parallel scenario in which the superchannel
$\Theta$ acts on $\mathcal{N}_{AB}^{\otimes n}$ in parallel, or
at a single time. However, if one can use the channel repeatedly and
sequentially, one can realize e.g.\ a transformation of the form
\begin{equation}
\Theta_{n}\left[\mathcal{N}_{AB}\right]\circ\dots\circ\Theta_{2}\left[\mathcal{N}_{AB}\right]\circ\Theta_{1}\left[\mathcal{N}_{AB}\right],\label{sequenceppt}
\end{equation}
as illustrated in Fig.~\ref{fig:pptsequence}.
\begin{figure}
\begin{centering}
\includegraphics[width=1\columnwidth]{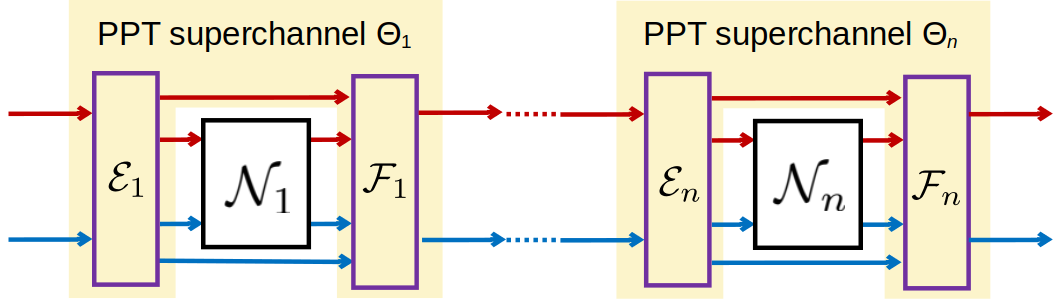}
\par\end{centering}
\caption{\label{fig:pptsequence}Sequence of PPT superchannels applied to the
channels $\mathcal{N}_{1},\dots,\mathcal{N}_{n}$.}

\end{figure}
 More generally, in Fig.~\ref{fig:pptcomb} we illustrate a PPT comb,
which is not necessarily of the form given in Eq.~\eqref{sequenceppt}.
\begin{figure}
\begin{centering}
\includegraphics[width=1\columnwidth]{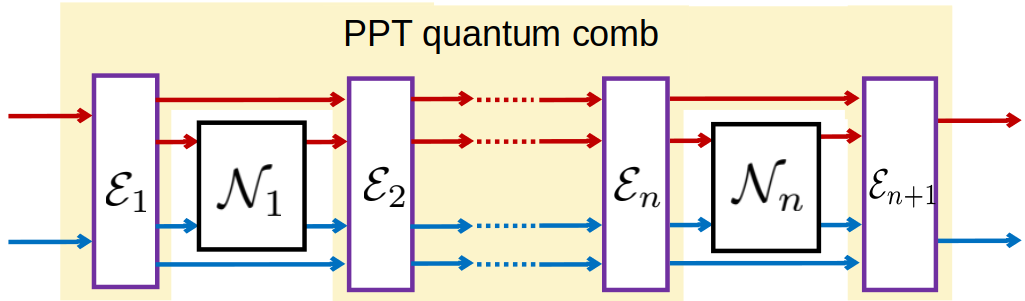}
\par\end{centering}
\caption{\label{fig:pptcomb}PPT comb: any comb such that $\mathcal{E}_{n+1}\circ\mathcal{E}_{n}\circ\dots\circ\mathcal{E}_{1}$
is a PPT channel.}

\end{figure}
 Instead, for a PPT comb we only require that the channel $\mathcal{Q}_{A^{n+1}B^{n+1}}:=\mathcal{E}_{n+1}\circ\mathcal{E}_{n}\circ\dots\circ\mathcal{E}_{1}$
be PPT. The channel $\mathcal{Q}_{A^{n+1}B^{n+1}}$ is illustrated
in Fig.~\ref{fig:pptq}.
\begin{figure}
\begin{centering}
\includegraphics[width=1\columnwidth]{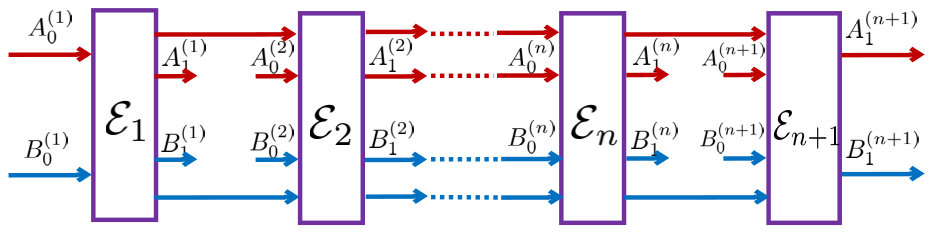}
\par\end{centering}
\caption{\label{fig:pptq}A quantum comb is PPT if and only if the bipartite
channel $\mathcal{Q}\in\mathrm{CPTP}\left(A_{0}^{\left(1\right)}B_{0}^{\left(1\right)}\dots A_{0}^{\left(n+1\right)}B_{0}^{\left(n+1\right)}\rightarrow A_{1}^{\left(1\right)}B_{1}^{\left(1\right)}\dots A_{1}^{\left(n+1\right)}B_{1}^{\left(n+1\right)}\right)$
is PPT.}
\end{figure}

Now we argue that not even such a comb can convert $n$ PPT bipartite
channels $\mathcal{N}_{1},\mathcal{N}_{2},\dots,\mathcal{N}_{n}$
to a single 2-qubit entangled state. This in particular demonstrates
that $n$ adaptive uses of a PPT channel $\mathcal{N}_{AB}$ in a
PPT comb cannot produce a 2-qubit entangled state. In other words,
pure-state entanglement cannot be distilled by LOCC (not even by PPT
combs) out of PPT bipartite channels. In other words, PPT entangled
channels are bound entangled channels.

For this purpose, we note that a comb $\mathscr{C}_{n}$ is uniquely
characterized by the channel $\mathcal{Q}_{A^{n+1}B^{n+1}}$. Therefore,
we define the partial transpose of $\mathscr{C}_{n}$, denoted $\mathscr{C}_{n}^{\Gamma}$,
to be the supermap associated with $\mathcal{Q}_{A^{n+1}B^{n+1}}^{\Gamma}$.
Consequently, $\mathscr{C}$ is a PPT quantum comb if $\mathscr{C}_{n}^{\Gamma}$
is a quantum comb. Moreover, note that
\[
\left(\mathscr{C}_{n}\left[\mathcal{N}_{1},\ldots,\mathcal{N}_{n}\right]\right)^{\Gamma}=\mathscr{C}_{n}^{\Gamma}\left[\mathcal{N}_{1}^{\Gamma},\ldots,\mathcal{N}_{n}^{\Gamma}\right],
\]
as described in Fig.~\ref{fig:pptformula} for $n=2$.
\begin{figure}
\begin{centering}
\includegraphics[width=1\columnwidth]{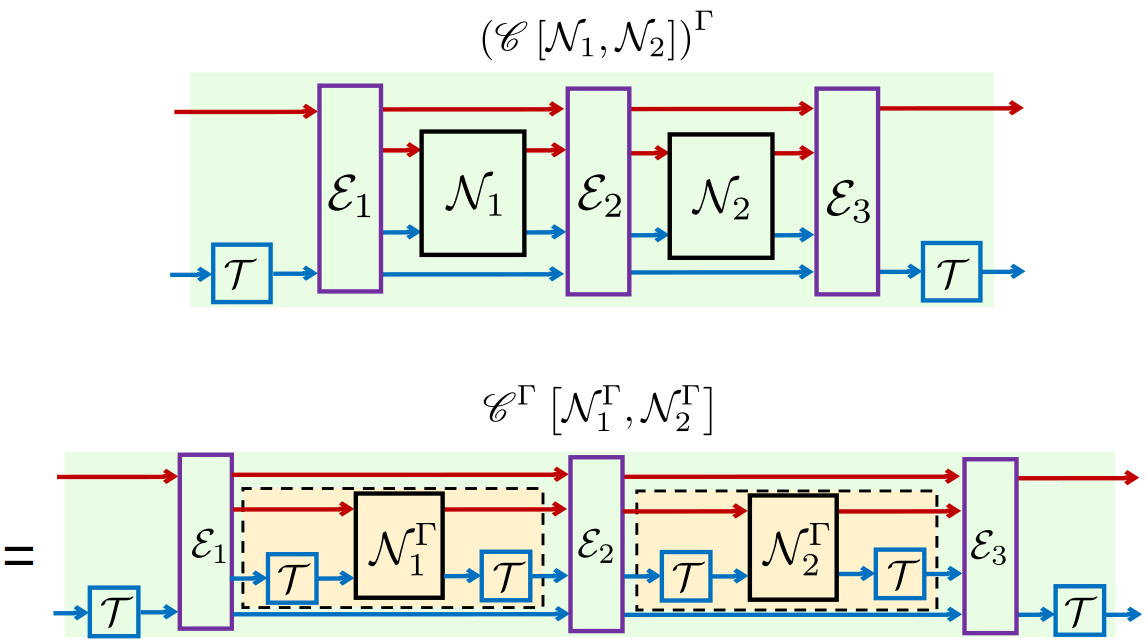}
\par\end{centering}
\caption{\label{fig:pptformula}The channel $\left(\mathscr{C}_{n}\left[\mathcal{N}_{1},\mathcal{N}_{2}\right]\right)^{\Gamma}$
equals the channel $\mathscr{C}_{n}^{\Gamma}\left[\mathcal{N}_{1}^{\Gamma},\mathcal{N}_{2}^{\Gamma}\right]$.
Note that the yellow boxed areas correspond to the maps $\left(\mathcal{N}_{1}^{\Gamma}\right)^{\Gamma}=\mathcal{N}_{1}$
and $\left(\mathcal{N}_{2}^{\Gamma}\right)^{\Gamma}=\mathcal{N}_{2}$.}

\end{figure}
 This is the key reason why PPT quantum combs cannot produce entangled
states from PPT channels.
\begin{prop}
Let $\mathscr{C}_{n}$ be a PPT quantum comb with $n$ slots, as illustrated
in Fig.~\ref{fig:pptcomb}, with $\left|A_{0}^{\left(1\right)}\right|=\left|B_{0}^{\left(1\right)}\right|=1$
and $\left|A_{1}^{\left(n+1\right)}\right|=\left|B_{1}^{\left(n+1\right)}\right|=2$.
Let $\mathcal{N}_{1},\dots,\mathcal{N}_{n}$ be $n$ PPT bipartite
channels with input and output dimensions compatible with the comb
$\mathscr{C}_{n}$, i.e.\ such that $\mathscr{C}_{n}\left[\mathcal{N}_{1},\dots,\mathcal{N}_{n}\right]=:\rho_{A_{1}^{\left(n+1\right)}B_{1}^{\left(n+1\right)}}$
is a well-defined 2-qubit state. Then, the quantum state $\rho_{A_{1}^{\left(n+1\right)}B_{1}^{\left(n+1\right)}}$
is separable.
\end{prop}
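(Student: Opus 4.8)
The plan is to reduce the statement to the Peres--Horodecki separability criterion for two qubits, exactly as in the parallel case discussed above. Since the output $\rho_{A_1^{(n+1)}B_1^{(n+1)}} = \mathscr{C}_n\left[\mathcal{N}_1,\dots,\mathcal{N}_n\right]$ is a genuine two-qubit state, it suffices to show that its partial transpose is positive semi-definite; by the Peres--Horodecki criterion in dimension $2\otimes 2$, positivity of the partial transpose is equivalent to separability.

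First I would apply the key identity $\left(\mathscr{C}_n\left[\mathcal{N}_1,\dots,\mathcal{N}_n\right]\right)^\Gamma = \mathscr{C}_n^\Gamma\left[\mathcal{N}_1^\Gamma,\dots,\mathcal{N}_n^\Gamma\right]$ established above (illustrated in Fig.~\ref{fig:pptformula}) to rewrite the partial transpose of the output as
\[
\rho_{A_1^{(n+1)}B_1^{(n+1)}}^\Gamma = \mathscr{C}_n^\Gamma\left[\mathcal{N}_1^\Gamma,\dots,\mathcal{N}_n^\Gamma\right].
\]
Next I would verify that the right-hand side is itself a legitimate quantum state, hence positive semi-definite. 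Because $\mathscr{C}_n$ is a PPT comb, $\mathscr{C}_n^\Gamma$ is again a quantum comb (this is precisely the defining property of a PPT comb, namely that the associated channel $\mathcal{Q}^\Gamma$ is a valid channel). Moreover, since each $\mathcal{N}_i$ is a PPT bipartite channel, each partial transpose $\mathcal{N}_i^\Gamma$ is itself a valid CPTP map. Feeding $n$ valid channels into a valid comb produces a valid channel, and with the trivial input of the first slot ($|A_0^{(1)}| = |B_0^{(1)}| = 1$) this is a valid quantum state. Therefore $\mathscr{C}_n^\Gamma\left[\mathcal{N}_1^\Gamma,\dots,\mathcal{N}_n^\Gamma\right] \geq 0$, i.e.\ $\rho_{A_1^{(n+1)}B_1^{(n+1)}}^\Gamma \geq 0$, and the Peres--Horodecki criterion yields separability.

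The only real work lies in the identity already recorded above; granting it, the argument is a direct transcription of the NPT-undistillability argument for the parallel scheme given earlier in this section. The main obstacle is conceptual rather than computational: one must be confident that the causal (comb) structure is preserved under the partial transpose, so that $\mathscr{C}_n^\Gamma$ genuinely remains a comb with the same causal ordering, and not merely some positive supermap. This is guaranteed by \emph{defining} $\mathscr{C}_n^\Gamma$ through the channel $\mathcal{Q}^\Gamma$ and invoking the PPT-comb hypothesis, so no further estimate is required, and the two-qubit restriction on the final output is exactly what allows the sufficiency direction of Peres--Horodecki to close the argument.
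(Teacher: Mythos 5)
Your proposal is correct and follows essentially the same route as the paper's own proof: apply the identity $\left(\mathscr{C}_{n}\left[\mathcal{N}_{1},\dots,\mathcal{N}_{n}\right]\right)^{\Gamma}=\mathscr{C}_{n}^{\Gamma}\left[\mathcal{N}_{1}^{\Gamma},\dots,\mathcal{N}_{n}^{\Gamma}\right]$, use the PPT hypotheses to conclude that the right-hand side is positive semi-definite (the paper phrases this as $\mathscr{C}_{n}^{\Gamma}$ being a quantum comb acting on the CP maps $\mathcal{N}_{i}^{\Gamma}$), and then invoke the Peres--Horodecki criterion for $2\otimes2$ states. No gaps; your explicit remark that the comb structure must be preserved under partial transposition is exactly the content of the paper's definition of a PPT comb.
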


\begin{proof}
The proof follows from the property that 
\begin{align*}
\left(\rho_{A_{1}^{\left(n+1\right)}B_{1}^{\left(n+1\right)}}\right)^{T_{B_{1}^{(n+1)}}} & =\left(\mathscr{C}_{n}\left[\mathcal{N}_{1},\dots,\mathcal{N}_{n}\right]\right)^{\Gamma}\\
 & =\mathscr{C}_{n}^{\Gamma}\left[\mathcal{N}_{1}^{\Gamma},\dots,\mathcal{N}_{n}^{\Gamma}\right]\\
 & \geq0
\end{align*}
because $\mathscr{C}_{n}^{\Gamma}$ is a quantum comb, and $\mathcal{N}_{1}^{\Gamma},\dots,\mathcal{N}_{n}^{\Gamma}$
are all CP maps, as $\mathcal{N}_{1},\dots,\mathcal{N}_{n}$ are PPT.
So $\rho_{A_{1}^{\left(n+1\right)}B_{1}^{\left(n+1\right)}}$ is a
PPT 2-qubit state, therefore separable \citep{PPT-Peres,PPT-Horodecki}.
\end{proof}
Note that the above proposition generalizes the notion of bound entanglement
to multiple and possibly different dynamical resources. In the special
case in which $\mathcal{N}_{1}=\dots=\mathcal{N}_{n}\equiv\mathcal{N}$,
the above proposition implies that pure state entanglement cannot
be distilled from a PPT bipartite channel $\mathcal{N}$, not even
with adaptive schemes. When $\mathcal{N}$ has trivial input, we recover
the PPT bound entanglement for states. When $\mathcal{N}\in\mathrm{CPTP}\left(A_{0}B_{0}\to A_{1}B_{1}\right)$
has classical outputs $A_{1}$ and $B_{1}$, we get bound entanglement
for POVMs. Since the latter is a less studied one, we give here a
simple example of a family of bipartite POVMs that are not local (i.e.\ cannot
be implemented by LOCC), but at the same time they cannot produce
distillable entanglement. To find other candidates for bound entangled
channels, we must consider PPT channels that are not LOCC.
\begin{example}
Let $\beta\in\mathfrak{D}\left(A_{0}B_{0}\right)$ be any PPT bound
entangled state of a composite system $A_{0}B_{0}$, and define a
binary POVM consisting of $E_{A_{0}B_{0}}:=\beta_{A_{0}B_{0}}$ and
$F_{A_{0}B_{0}}:=I_{A_{0}B_{0}}-\beta_{A_{0}B_{0}}$. We view this
POVM as the bipartite channel $\mathcal{E}\in\mathrm{CPTP}\left(A_{0}B_{0}\to X\right)$
(as already noted in section~\ref{subsec:Entanglement-POVM}, since
the output is classical, there is no need to represent it with two
classical systems, because classical communication is free) given
by 
\begin{align*}
\mathcal{E}_{A_{0}B_{0}\to X}\left(\rho_{A_{0}B_{0}}\right) & :=\mathrm{Tr}\left[E_{A_{0}B_{0}}\rho_{A_{0}B_{0}}\right]\left|0\right\rangle \left\langle 0\right|_{X}\\
 & +\mathrm{Tr}\left[F_{A_{0}B_{0}}\rho_{A_{0}B_{0}}\right]\left|1\right\rangle \left\langle 1\right|_{X}.
\end{align*}
Since both $E_{A_{0}B_{0}}$ and $F_{A_{0}B_{0}}$ have positive partial
transpose, it follows that $\mathcal{E}$ above is a PPT channel,
and, as such, it cannot produce distillable entanglement. This means
that the POVM $\left\lbrace E_{A_{0}B_{0}},F_{A_{0}B_{0}}\right\rbrace $
is a bound entangled POVM.
\end{example}

\section{Conclusions and outlook\label{sec:Conclusions-and-outlook}}

In this article we studied quantum entanglement as a \emph{resource
theory of processes}, where the resources are bipartite channels (see
Fig.~\ref{bipartitechannel}). This paradigm encompasses several
interesting cases, including the already well-studied resource theory
of entanglement of quantum states \citep{Plenio-review,Review-entanglement},
but also the novel area of entanglement theory for POVMs.

The LOCC resource theory for dynamical entanglement is still very
complicated to characterize from a mathematical point of view, so
we also considered broader classes of free superchannels: separable
superchannels (SEPS) \citep{SEP,Rains-SEP,PPT1} in section~\ref{sec:SEP}
and PPT superchannels \citep{PPT1,PPT2,Leung} in section~\ref{sec:PPT}.
The NPT resource theory is particularly simple to deal with, as all
resource-theoretic protocols can be fully characterized by SDPs. This
remarkable fact, which did not appear in a previous work on PPT superchannels
\citep{WW18}, is a consequence of not restricting ourselves to freely
realizable \citep{Gour-Scandolo-resource} PPT superchannels, i.e.\ to
superchannels whose pre- and post-processing are both PPT channels.
This is not the only novelty with respect to Ref.~\citep{WW18}:
since we considered the most general case of bipartite channels, we
were able to generalize their notion of $\kappa$-entanglement in
two possible ways, the maximum of which we call \emph{max-logarithmic
negativity} (see section~\ref{subsec:NPT-entanglement-measures}).
This has a nice operational characterization as the exact asymptotic
entanglement cost of a bipartite channel under PPT superchannels.

Finally, we concluded with an analysis of bound entanglement for bipartite
channels, showing that from a PPT channel we can distill \emph{no}
ebits under any PPT superchannels (therefore also under any LOCC or
SEP superchannels), not even with an adaptive scheme. This generalizes
the known result for PPT states \citep{Bound-entanglement}. We were
also able to give an example of a bound entangled POVM (section~\ref{sec:Bound}).

Clearly our work just looks at the surface of a whole unexplored world,
but it opens the way to the study of the new area of entanglement
of bipartite channels \citep{Dynamical-coherence-entanglement,Chen2020entanglement,Wilde-biteleportation,Lami-entanglement,Takagi-one-shot,Yuan-one-shot,Kim-Lee,Wilde-entanglement-2ndlaw}.
On a small level, one can generalize the analysis we did, and the
results we obtained in this article. For example, one can try to characterize
which PPT superchannels are freely realizable, i.e.\ restricted PPT
channels (see section~\ref{subsec:PPT-supermaps}), and what the
resulting resource theory looks like. One can also go a level up in
complexity, and describe transitions under LOCC superchannels.

Possible easy directions for future work involve expanding our preliminary
treatment of the entanglement of POVMs (section~\ref{subsec:Entanglement-POVM}
to deal with concrete cases and examples, e.g.\ von Neumann measurements);
studying the entanglement of bipartite unitary channels \citep{Bennett-bipartite},
or even achieving a complete characterization of the entanglement
of the simplest instances of bipartite channels, i.e.\ those where
every system is a qubit. Moreover, another interesting research direction
is about witnesses. In appendix~\ref{sec:PPT-witnesses} we introduce
witnesses for PPT superchannels, but, as we note therein, the really
interesting ones are for the LOCC theory, which have yet to be characterized.

On a grand scale, this work on entanglement theory leads to several
areas that can be explored anew. Think, e.g.\ of multipartite entanglement
\citep{Review-entanglement}, or of the whole zoo of entanglement
measures \citep{Plenio-review,Review-entanglement}. One can also
wonder if entangled bipartite channels can be used to draw a secret
key from them \citep{Ekert}. Moreover, our results for LOCC superchannels
can be translated to local operations and shared randomness (LOSR)
superchannels \citep{Rosset-distributed,Wolfe2020,Schmid2020,LOSR-nonlocality},
which are a strict subset of LOCC ones. LOSR superchannels were argued
to be essential for the formulation of resource theories for non-locality
\citep{LOSR-nonlocality}, as they define the relevant notion of dynamical
entanglement in Bell and common-cause scenarios. This intriguing research
direction deserves a comprehensive study in the future, in addition
to theories of non-locality that do not involve LOSR channels \citep{Gour-nonlocality}.
Finally, providing us with a more general angle, research developments
in the resource theory of entanglement for bipartite channels can
also help us get insights into one of the major open problems of quantum
information theory: the existence of bound entangled states that are
not PPT states \citep{NPPT1,NPPT2,NPPT3}.

To conclude, on an even more general and speculative level, one can
introduce a resource theory of entanglement for higher-level generalizations
of quantum channels \citep{Hierarchy-combs,Perinotti1,Perinotti2},
such as superchannels themselves, combs, or more exotic objects without
a definite causal structure \citep{Switch,Process-matrix,Giarmatzi}.
On such a general level, especially without causal definiteness, one
can expect new and interesting behaviors, which need to be examined
carefully.
\begin{acknowledgments}
G.\ G.\ would like to thank Francesco Buscemi, Eric Chitambar, Mark
Wilde, and Andreas Winter for many useful discussions related to the
topic of this paper. The authors acknowledge support from the Natural
Sciences and Engineering Research Council of Canada (NSERC) through
grant RGPIN-2020-03938, from the Pacific Institute for the Mathematical
Sciences (PIMS), and a from Faculty of Science Grand Challenge award
at the University of Calgary. The authors would like to thank Gaurav
Saxena for a careful reading of an earlier version of this work.
\end{acknowledgments}

\bibliographystyle{apsrev4-2}
\bibliography{PPT}

\appendix
\onecolumngrid

\section{NPT witnesses\label{sec:PPT-witnesses}}

Entanglement witnesses provide a simple ``no-go'' testing to determine
whether a given resource (state, channel, or even superchannel) is
free or not. Here we analyze the witnesses determining whether a bipartite
superchannel is PPT or not, for this is the most general case. Indeed,
PPT states and PPT channels can be viewed as limiting cases of PPT
superchannels when some of the input systems are trivial. 

In NPT entanglement theory one can determine whether bipartite states,
channels, or superchannels are PPT simply by checking the positivity
of their partial transpose. Why do we study NPT witnesses then? The
main reason is to distinguish them from LOCC entanglement witnesses,
which are the more interesting ones. Indeed, LOCC entanglement witnesses
play a central role in entanglement theory, as there is no simple
or efficient way to determine if a resource is entangled or not. Therefore,
distinguishing LOCC witnesses from NPT ones is necessary to understand
which witnesses are truly physically meaningful.
\begin{defn}
A matrix $W\in\mathrm{Herm}\left(ABA'B'\right)$ is an \emph{NPT witness}
if it is \emph{not} positive semi-definite, and if it satisfies $\mathrm{Tr}\left[W_{ABA'B'}\mathbf{J}_{ABA'B'}^{\Theta}\right]\geq0$
for all superchannels in $\Theta\in\mathrm{PPT}\left(AB\to A'B'\right)$.
\end{defn}

Therefore, the set of all NPT witnesses can be viewed as the set of
all matrices in $\mathfrak{J}_{ABA'B'}^{*}$ that are not positive
semi-definite, where $\mathfrak{J}_{ABA'B'}^{*}$ is the dual of the
cone generated by the Choi matrices of PPT bipartite superchannels,
$\mathfrak{J}_{ABA'B'}$. In Ref.~\citep{Gour-Scandolo-resource},
we showed that the former can be expressed as
\begin{equation}
\mathfrak{J}_{ABA'B'}^{*}=\left\{ W\in\mathrm{Herm}\left(ABA'B'\right):\mathrm{Tr}\left[W_{ABA'B'}\mathbf{J}_{ABA'B'}^{\Theta}\right]\geq0\right\} ,\label{eq:dual cone PPT}
\end{equation}
for every PPT superchannel $\Theta$. Then $\mathfrak{J}_{ABA'B'}^{*}$
is the set of all $W\in\mathrm{Herm}\left(ABA'B'\right)$ such that
$\mathrm{Tr}\left[W_{ABA'B'}\mathbf{J}_{ABA'B'}\right]\geq0$ for
all matrices $\mathbf{J}\in\mathrm{Herm}\left(ABA'B'\right)$ with
the following properties:
\begin{enumerate}
\item $\mathbf{J}_{ABA'B'}\geq0$;
\item $\mathbf{J}_{ABA'_{0}B'_{0}}=\mathbf{J}_{A_{0}B_{0}A_{0}'B_{0}'}\otimes u_{A_{1}B_{1}}$;
\item $\mathbf{J}_{A_{1}B_{1}A'_{0}B'_{0}}=I_{A_{1}B_{1}A_{0}'B_{0}'}$;
\item $\mathbf{J}_{ABA'B'}^{T_{BB'}}\geq0$.
\end{enumerate}
Note that the first three conditions ensure that $\mathbf{J}_{ABA'B'}$
is the Choi matrix of a bipartite superchannel, and the last condition
ensures that the superchannel is PPT.

The conditions above imply that all NPT witnesses $W\in\mathrm{Herm}\left(ABA'B'\right)$
are of the form
\begin{equation}
W_{ABA'B'}=P_{ABA'B'}+X_{ABA'B'}^{T_{BB'}}+Y_{ABA_{0}'B_{0}'}\otimes I_{A_{1}'B_{1}'}+I_{A_{0}B_{0}A_{1}'B_{1}'}\otimes Z_{A_{1}B_{1}A_{0}'B_{0}'},\label{eq:form}
\end{equation}
where $P_{ABA'B'},X_{ABA'B'}\geq0$, $Y_{ABA_{0}'B_{0}'}$ is a Hermitian
matrix such that $Y_{AB}=0$, and $Z$ is a Hermitian matrix such
that $\mathrm{Tr}\left[Z_{A_{1}B_{1}A_{0}'B_{0}'}\right]=0$. Note
that the Hilbert-Schmidt inner product between $Y_{ABA_{0}'B_{0}'}\otimes I_{A_{1}'B_{1}'}$
(or $I_{A_{0}B_{0}A_{1}'B_{1}'}\otimes Z_{A_{1}B_{1}A_{0}'B_{0}'}$)
and any Choi matrix of a superchannel is always zero, as shown in
Ref.~\citep{Gour2018}. This is why they can be added to any NPT
witness. Now we will use this form of NPT witnesses to expresses the
PPT conversion distance as an SDP.

\section{Additivity of the max-logarithmic negativity\label{app:MLN}}

Here we prove only the additivity of $LN_{\max}^{\left(0\right)}$,
as the proof of the additivity of $LN_{\max}^{\left(1\right)}$ follows
the exact same lines.
\begin{lem}
For any two bipartite channels $\mathcal{N}\in\mathrm{CPTP}\left(A_{0}B_{0}\to A_{1}B_{1}\right)$
and $\mathcal{M}\in\mathrm{CPTP}\left(A_{0}'B_{0}'\to A_{1}'B_{1}'\right)$
we have 
\[
LN_{\max}^{\left(0\right)}\left(\mathcal{N}_{AB}\otimes\mathcal{M}_{A'B'}\right)=LN_{\max}^{\left(0\right)}\left(\mathcal{N}_{AB}\right)+LN_{\max}^{\left(0\right)}\left(\mathcal{M}_{A'B'}\right).
\]
\end{lem}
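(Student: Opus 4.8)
The plan is to establish the two inequalities separately, exploiting the primal form of $LN_{\max}^{(0)}$ to prove $\leq$ and the dual form (Eq.~\eqref{eq:MLN0}) to prove $\geq$, exactly in the spirit of Ref.~\citep{WW18}. Throughout I will pass to Choi matrices, writing the constraint $-\mathcal{P}_{AB}^{\Gamma}\leq\mathcal{N}_{AB}^{\Gamma}\leq\mathcal{P}_{AB}^{\Gamma}$ as $\pm\left(J_{AB}^{\mathcal{N}}\right)^{T_{B}}\leq\left(J_{AB}^{\mathcal{P}}\right)^{T_{B}}$ (using Eq.~\eqref{idenppt}), and I will use that for a tensor product of channels the Choi matrix factorizes, $J^{\mathcal{N}\otimes\mathcal{M}}=J^{\mathcal{N}}\otimes J^{\mathcal{M}}$ (up to an irrelevant reordering of subsystems), so that $\left(J^{\mathcal{N}\otimes\mathcal{M}}\right)^{T_{BB'}}=\left(J^{\mathcal{N}}\right)^{T_{B}}\otimes\left(J^{\mathcal{M}}\right)^{T_{B'}}$.

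For the inequality $LN_{\max}^{\left(0\right)}\left(\mathcal{N}_{AB}\otimes\mathcal{M}_{A'B'}\right)\leq LN_{\max}^{\left(0\right)}\left(\mathcal{N}_{AB}\right)+LN_{\max}^{\left(0\right)}\left(\mathcal{M}_{A'B'}\right)$, I would take any $\mathcal{P}$ feasible for $\mathcal{N}$ and any $\mathcal{P}'$ feasible for $\mathcal{M}$ and show that $\mathcal{P}\otimes\mathcal{P}'$ is feasible for $\mathcal{N}\otimes\mathcal{M}$. Writing $N:=\left(J^{\mathcal{N}}\right)^{T_{B}}$, $P:=\left(J^{\mathcal{P}}\right)^{T_{B}}$, and analogously $M,P'$ on the primed side, feasibility means $P\pm N\geq0$ and $P'\pm M\geq0$ with $P,P'\geq0$. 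The key algebraic step is the identity
\[
P\otimes P'-N\otimes M=\tfrac{1}{2}\left[\left(P-N\right)\otimes\left(P'+M\right)+\left(P+N\right)\otimes\left(P'-M\right)\right],
\]
whose right-hand side is a sum of tensor products of positive semi-definite operators, hence positive; replacing $M$ by $-M$ gives $P\otimes P'+N\otimes M\geq0$, so $\pm\left(J^{\mathcal{N}\otimes\mathcal{M}}\right)^{T_{BB'}}\leq\left(J^{\mathcal{P}\otimes\mathcal{P}'}\right)^{T_{BB'}}$. Since the objective factorizes multiplicatively, $\left\Vert J_{A_{0}B_{0}A_{0}'B_{0}'}^{\mathcal{P}\otimes\mathcal{P}'}\right\Vert _{\infty}=\left\Vert J_{A_{0}B_{0}}^{\mathcal{P}}\right\Vert _{\infty}\left\Vert J_{A_{0}'B_{0}'}^{\mathcal{P}'}\right\Vert _{\infty}$, taking logarithms and infima over feasible $\mathcal{P},\mathcal{P}'$ yields the desired inequality.

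For the reverse inequality I would use the dual form in Eq.~\eqref{eq:MLN0}. Given dual-feasible triples $\left(V,W,\rho\right)$ for $\mathcal{N}$ and $\left(V',W',\rho'\right)$ for $\mathcal{M}$, I define the ``cross-term'' variables $\mathcal{V}:=V\otimes V'+W\otimes W'$ and $\mathcal{W}:=V\otimes W'+W\otimes V'$, which are manifestly positive semi-definite. Then $\mathcal{V}-\mathcal{W}=\left(V-W\right)\otimes\left(V'-W'\right)$, so the objective multiplies: $\mathrm{Tr}\left[J^{\mathcal{N}\otimes\mathcal{M}}\left(\mathcal{V}-\mathcal{W}\right)\right]=\mathrm{Tr}\left[J^{\mathcal{N}}\left(V-W\right)\right]\mathrm{Tr}\left[J^{\mathcal{M}}\left(V'-W'\right)\right]$. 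For feasibility, $\mathcal{V}+\mathcal{W}=\left(V+W\right)\otimes\left(V'+W'\right)$, and since $V+W\leq\rho_{A_{0}B_{0}}\otimes I_{A_{1}B_{1}}$ and $V'+W'\leq\rho'_{A_{0}'B_{0}'}\otimes I_{A_{1}'B_{1}'}$ are inequalities between positive operators, the monotonicity of the tensor product on the positive-semidefinite cone gives $\mathcal{V}+\mathcal{W}\leq\left(\rho\otimes\rho'\right)\otimes I$; as $\rho\otimes\rho'$ is a density matrix on $A_{0}B_{0}A_{0}'B_{0}'$, this is exactly the dual constraint for $\mathcal{N}\otimes\mathcal{M}$. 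Optimizing over the feasible triples then produces $LN_{\max}^{\left(0\right)}\left(\mathcal{N}_{AB}\otimes\mathcal{M}_{A'B'}\right)\geq LN_{\max}^{\left(0\right)}\left(\mathcal{N}_{AB}\right)+LN_{\max}^{\left(0\right)}\left(\mathcal{M}_{A'B'}\right)$, and combining the two inequalities gives additivity.

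I expect the main subtlety to be the dual side: the correct choice of the cross-terms $\mathcal{V},\mathcal{W}$ so that both $\mathcal{V}-\mathcal{W}$ and $\mathcal{V}+\mathcal{W}$ factorize into single tensor products is the nonobvious ingredient, together with the careful verification—using monotonicity of $\otimes$ on the positive cone (i.e.\ $0\leq A\leq B$, $0\leq C\leq D$ imply $A\otimes C\leq B\otimes D$, via $B\otimes D-A\otimes C=\left(B-A\right)\otimes D+A\otimes\left(D-C\right)$)—that the product constraint is preserved. The primal side is comparatively routine once the operator identity above is in hand. As noted in the statement, the proof of additivity for $LN_{\max}^{\left(1\right)}$ is identical after replacing $\rho_{A_{0}B_{0}}$ by $\rho_{A_{0}B_{0}}^{T_{B_{0}}}$ in the dual constraint, so I would only spell out the $LN_{\max}^{\left(0\right)}$ case.
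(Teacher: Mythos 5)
Your proof is correct and follows the same overall route as the paper's own appendix proof: the primal SDP with the product ansatz $\mathcal{P}\otimes\mathcal{P}'$ gives the $\leq$ direction, and the dual form of Eq.~\eqref{eq:MLN0}, again with a product ansatz, gives the $\geq$ direction. The differences lie in how the two feasibility checks are executed, and yours are tighter in both places. On the primal side, the paper merely asserts that $\mathcal{P}_{1}\otimes\mathcal{P}_{2}$ is feasible for $\mathcal{N}\otimes\mathcal{M}$, whereas you prove it via the identity $P\otimes P'-N\otimes M=\tfrac{1}{2}\left[\left(P-N\right)\otimes\left(P'+M\right)+\left(P+N\right)\otimes\left(P'-M\right)\right]$. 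On the dual side, the paper first changes variables to $X:=V+W$, $Y:=V-W$, restricts to $X=X_{1}\otimes X_{2}$, $Y=Y_{1}\otimes Y_{2}$, and then must prove the lemma that $X_{1}\pm Y_{1}\geq0$ and $X_{2}\pm Y_{2}\geq0$ imply $X_{1}\otimes X_{2}\pm Y_{1}\otimes Y_{2}\geq0$; this quadratic-form sandwiching argument occupies most of the appendix. Your cross terms $\mathcal{V}=V\otimes V'+W\otimes W'$ and $\mathcal{W}=V\otimes W'+W\otimes V'$ are the same ansatz expressed in the original $\left(V,W\right)$ variables (indeed $\mathcal{V}+\mathcal{W}=\left(V+W\right)\otimes\left(V'+W'\right)$ and $\mathcal{V}-\mathcal{W}=\left(V-W\right)\otimes\left(V'-W'\right)$), but in these variables positive semi-definiteness is manifest, so the paper's sandwiching lemma becomes unnecessary---that is the main simplification your write-up buys. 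One shared, harmless gloss: both you and the paper pass from ``each pair of feasible solutions yields a feasible product solution with product objective'' to the inequality between the optimal values; this limit argument is safe here because both suprema are finite and at least $1$, a point neither write-up spells out.
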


\begin{proof}
For simplicity of the exposition, in some places we will omit the
subscripts identifying the systems. By definition we have 
\begin{align}
LN_{\max}^{\left(0\right)}\left(\mathcal{N}_{AB}\otimes\mathcal{M}_{A'B'}\right) & =\log_{2}\inf\left\{ \left\Vert J_{A_{0}B_{0}A_{0}'B_{0}'}^{\mathcal{P}}\right\Vert _{\infty}:-\mathcal{P}_{ABA'B'}^{\Gamma}\leq\mathcal{N}_{AB}^{\Gamma}\otimes\mathcal{M}_{A'B'}^{\Gamma}\leq\mathcal{P}_{ABA'B'}^{\Gamma},\,\mathcal{P}\geq0\right\} \label{cc1}\\
 & \leq\log_{2}\inf\left\{ \left\Vert J_{A_{0}B_{0}A_{0}'B_{0}'}^{\mathcal{P}_{1}\otimes\mathcal{P}_{2}}\right\Vert _{\infty}:-\mathcal{P}_{1}^{\Gamma}\leq\mathcal{N}^{\Gamma}\leq\mathcal{P}_{1}^{\Gamma};\thinspace-\mathcal{P}_{2}^{\Gamma}\leq\mathcal{M}^{\Gamma}\leq\mathcal{P}_{2}^{\Gamma};\thinspace\mathcal{P}_{1},\mathcal{P}_{2}\geq0\right\} \label{cc2}\\
 & =LN_{\max}^{\left(0\right)}\left(\mathcal{N}_{AB}\right)+LN_{\max}^{\left(0\right)}\left(\mathcal{M}_{A'B'}\right),\nonumber 
\end{align}
where the inequality follows from the fact that, if $\mathcal{P}_{1}$
and $\mathcal{P}_{2}$ satisfy the constraints in~\eqref{cc2}, then
$\mathcal{P}=\mathcal{P}_{1}\otimes\mathcal{P}_{2}$ satisfies the
constraints in~\eqref{cc1}. The last equality follows from the multiplicativity
of the operator norm under tensor product.

For the other direction, we use the dual expression in Eq.~\eqref{eq:MLN0}.
Hence,
\[
LN_{\max}^{\left(0\right)}\left(\mathcal{N}_{AB}\otimes\mathcal{M}_{A'B'}\right)=\log_{2}\sup\left\{ \mathrm{Tr}\left[J^{\mathcal{N}\otimes\mathcal{M}}\left(V-W\right)\right]:V+W\leq\rho\otimes I;\thinspace\rho\in\mathfrak{D}\left(A_{0}B_{0}A_{0}'B_{0}'\right);\thinspace V,W\geq0\right\} .
\]
 Setting $X:=V+W$ and $Y:=V-W$, we have
\begin{align}
LN_{\max}^{\left(0\right)}\left(\mathcal{N}_{AB}\otimes\mathcal{M}_{A'B'}\right) & =\log_{2}\sup\left\{ \mathrm{Tr}\left[J^{\mathcal{N}\otimes\mathcal{M}}Y\right]:X\leq\rho\otimes I;\,\rho\in\mathfrak{D}\left(A_{0}B_{0}A_{0}'B_{0}'\right);\thinspace X\pm Y\geq0\right\} \label{dd1}\\
 & \geq\log_{2}\sup\left\{ \mathrm{Tr}\left[J^{\mathcal{N}\otimes\mathcal{M}}\left(Y_{1}\otimes Y_{2}\right)\right]:X_{1}\leq\rho_{1}\otimes I;\thinspace X_{2}\leq\rho_{2}\otimes I;\thinspace X_{1}\pm Y_{1}\geq0;\thinspace X_{2}\pm Y_{2}\geq0\right\} ,\label{dd2}
\end{align}
where $\rho_{1}\in\mathfrak{D}\left(A_{0}B_{0}\right)$ and $\rho_{2}\in\mathfrak{D}\left(A'_{0}B'_{0}\right)$
and the inequality follows from the fact that if $X_{1},X_{2},\rho_{1},\rho_{2}$
satisfy the constraints in~\eqref{dd2}, then $X=X_{1}\otimes X_{2}$,
$Y=Y_{1}\otimes Y_{2}$, and $\rho=\rho_{1}\otimes\rho_{2}$ satisfy
the constraints in~\eqref{dd1}. In particular, let us show that
if $X_{1}\pm Y_{1}\geq0$ and $X_{2}\pm Y_{2}\geq0$, then we also
have $X_{1}\otimes X_{2}\pm Y_{1}\otimes Y_{2}\geq0$. First of all,
observe that, from the assumptions $X_{1}\pm Y_{1}\geq0$ and $X_{2}\pm Y_{2}\geq0$,
we have 
\[
\left(X_{1}\pm Y_{1}\right)\otimes\left(X_{2}\pm Y_{2}\right)\geq0,
\]
from which 
\[
X_{1}\otimes X_{2}+Y_{1}\otimes Y_{2}\geq\mp\left(X_{1}\otimes Y_{2}+Y_{1}\otimes X_{2}\right).
\]
This means that 
\[
\left\langle \psi\middle|X_{1}\otimes X_{2}+Y_{1}\otimes Y_{2}\middle|\psi\right\rangle \geq\mp\left\langle \psi\middle|X_{1}\otimes Y_{2}+Y_{1}\otimes X_{2}\middle|\psi\right\rangle ,
\]
for all vectors $\psi$. This in turn means that 
\[
\left\langle \psi\middle|X_{1}\otimes X_{2}+Y_{1}\otimes Y_{2}\middle|\psi\right\rangle \geq0,
\]
for all vectors $\psi$, from which $X_{1}\otimes X_{2}+Y_{1}\otimes Y_{2}\geq0$.

Similarly, from 
\[
\left(X_{1}\pm Y_{1}\right)\otimes\left(X_{2}\mp Y_{2}\right)\geq0
\]
we get that
\[
X_{1}\otimes X_{2}-Y_{1}\otimes Y_{2}\geq\mp\left(Y_{1}\otimes X_{2}-X_{1}\otimes Y_{2}\right),
\]
which, by an argument similar to the one above, allows us to conclude
that $X_{1}\otimes X_{2}-Y_{1}\otimes Y_{2}\geq0$.

Combining both inequalities we obtained for $LN_{\max}^{\left(0\right)}\left(\mathcal{N}_{AB}\otimes\mathcal{M}_{A'B'}\right)$,
we prove the additivity.
\end{proof}

\end{document}